\theoremstyle{plain}
\newtheorem{theorem}{Theorem}[section]
\newtheorem{proposition}[theorem]{Proposition}
\newtheorem{lemma}[theorem]{Lemma}
\newcommand{\R}{\mathbb{R}}
\newcommand{\BB}{\mathcal{B}}
\newcommand{\LL}{\mathcal{L}}
\newcommand{\complex}{\mathscr{C}}
\newcommand{\surf}{\mathscr{S}}
\newcommand{\poly}{\text{poly}}
\newcommand{\mmid}{\text{mid}}
\newcommand{\half}{\ensuremath{\sfrac12}}
\definecolor{blueblack}{rgb}{0,0,.7}
\newcommand{\emphdef}[1]{%
  \textcolor{blueblack}{%
    \textbf{\emph{#1}}%
  }}
\title{An FPT Algorithm for the Embeddability of Graphs\\ into Two-Dimensional Simplicial Complexes\thanks{Partially supported by the ANR projects Min-Max (ANR-19-CE40-0014) and SoS (ANR-17-CE40-0033).  This version is to appear in \emph{SIAM Journal on Computing}.  A preliminary version appeared in \emph{Proceedings of the European Symposium on Algorithms 2021}.}}
\author{\'Eric Colin de Verdi\`ere\thanks{LIGM, CNRS, Univ Gustave Eiffel, F-77454 Marne-la-Vall\'ee, France.  Email: \texttt{eric.colin-de-verdiere}\texttt{@univ-eiffel.fr}}%
\and
Thomas Magnard\thanks{Former affiliation: LIGM, CNRS, Univ Gustave Eiffel, F-77454 Marne-la-Vall\'ee, France.  Email: \texttt{thomas.magnard}\texttt{@ac-creteil.fr}}%
}
\begin{document}
\maketitle

\begin{abstract}
  We consider the embeddability problem of a graph~$G$ into a two-dimensional simplicial complex~$C$:  Given $G$ and~$C$, decide whether $G$ admits a topological embedding into~$C$.  The problem is NP-hard, even in the restricted case where $C$ is homeomorphic to a surface.  We prove that the problem is fixed-parameter tractable in the size of the two-dimensional complex, by providing an $O(2^{\poly(c)}\cdot n^2)$-time algorithm.  If $G$ embeds into~$C$, we can compute a representation of an embedding in the same amount of time.  Moreover, we show that several known problems reduce to this one, such as the crossing number and the planarity number problems, and, under some conditions, the embedding extension problem.

  Our approach is to reduce to the case where $G$ has bounded branchwidth via an irrelevant vertex method, and to apply dynamic programming.  We do not rely on any component of the existing linear-time algorithms for embedding graphs on a fixed surface, but only on algorithms from graph minor theory.  However, by combining our results with a linear-time algorithm for embedding graphs on surfaces and with a very recent result for the irrelevant vertex method, we can decide whether $G$ embeds into~$C$ in $f(c)\cdot O(n)$ time, for some function~$f$.
\end{abstract}

\emph{Keywords:} computational topology, graph algorithm, embedding, simplicial complex, graph, surface

\emph{2020 MSC Classification:}  57-08 Computational methods for problems pertaining to manifolds and cell complexes; 57M15 Relations of low-dimensional topology with graph theory; 68Q27 Parameterized complexity, tractability and kernelization; 68W05 Nonnumerical algorithms

\section{Introduction}\label{S:intro}

An embedding of a graph~$G$ into a host topological space~$X$ is a crossing-free topological drawing of~$G$ into~$X$.   The use and computation of graph embeddings is central in the communities of computational geometry/topology, algorithms for planar and surface-embedded graphs, topological graph theory, and graph drawing.  A landmark result is the algorithm of Hopcroft and Tarjan~\cite{ht-ept-74}, which allows to decide whether a given graph is planar (has an embedding into the plane) in linear time.  Related results include more planarity testing algorithms~\cite{p-pte-06}, algorithms for embedding graphs on surfaces~\cite{m-ltaeg-99,kmr-sltae-08} and for computing book embeddings~\cite{m-ggghp-94}, Hanani-Tutte theorems~\cite{s-ttpht-13}, and the theory of crossing numbers and planarization~\cite{s-gcnvs-13,bcgjm-cp-06}.

\paragraph{Our result.}
In this paper, we describe an algorithm for deciding the embeddability of graphs into topological spaces that are, in a sense, as general as possible: two-dimensional simplicial complexes (or 2-complexes for brevity), which are made from vertices, edges, and triangles glued together.  (We remark that every graph is embeddable in~$\mathbb R^3$ and thus in a 3-simplex, so considering higher-dimensional simplicial complexes is irrelevant.)  In a previous article, jointly written with Mohar~\cite{cmm-egtds-22}, we proved that, given a graph~$G$ and a 2-complex~$\complex$, deciding whether $G$~embeds into~$\complex$ is NP-complete and is in XP: it can be done in polynomial time for fixed~$\complex$, but the algorithm has running time $f(c)\cdot n^{O(c)}$, where $n$ and~$c$ are the respective sizes of $G$ and~$\complex$.  Using a very different strategy, we prove in this paper that it is actually fixed-parameter tractable (FPT) in the complexity of the input complex, by providing an algorithm that is quadratic in~$n$ and exponential in a polynomial in~$c$:
\begin{theorem}\label{T:general}
  One can solve the embeddability problem of graphs into 2-dimensional simplicial complexes in time $O(2^{\poly(c)}\cdot n^2)$, where $c$ is the number of simplices of the input 2-complex and $n$ is the total number of vertices and edges of the input graph.
\end{theorem}
(In our proof, the polynomial in~$c$ has degree $18+\varepsilon$, for any $\varepsilon>0$.)  We can also compute a representation of an embedding if there is one; see Section~\ref{S:computing} for details.

We remark that with additional ingredients, namely, a very recent paper~\cite{gkst-fivlt-25} and an algorithm to embed graphs on surfaces~\cite{m-ltaeg-99}, we can obtain an alternative decision algorithm that runs in time $f(c)\cdot O(n)$, for some unspecified function~$f$.  However, we only discuss this in the conclusion, Section~\ref{S:conclusion}.  Indeed, as we shall see, one of the benefits of our algorithm for Theorem~\ref{T:general} is that it does not rely on any of the (complicated) algorithms for embedding graphs on surfaces, and actually provides a simpler one, at the cost of a worse dependency in the input graph size~$n$.

\paragraph{Applications.}
Our embeddability problem is quite general and captures some known parameterized problems related to graph drawing and planarity, which are NP-hard in general, and we solve them in quadratic time, in a unified way.

The most relevant problem in this vein is the crossing number problem, namely, to decide whether a graph can be drawn in the plane with at most $k$ crossings.  The problem is NP-complete~\cite{gj-cnnc-83}.  As is already known~\cite[Introduction]{cmm-egtds-22}, there is an easy reduction from the crossing number problem to the embeddability problem of graphs in 2-complexes, which immediately implies a $2^{\poly(k)}\cdot O(n^2)$-time algorithm for the crossing number problem.  Previously, no FPT algorithm with runtime singly-exponential in~$k$ was known, although since the conference version of this paper, Lokshtanov, Panolan, Saurabh, Sharma, Xue, and Zehavi, in a very recent breakthrough paper~\cite{lpssxz-cnsst-25}, have given a $k^{O(k)}\cdot n$-time algorithm for the crossing number problem.

In a similar manner, the planarity number problem (whether a graph with a given set~$U$ of vertices can be embedded in the plane in such a way that $U$ is covered by at most $k$ faces) has an easy reduction to our problem, implying that it is FPT in~$k$, although the latter fact was proved earlier using very different techniques~\cite{bm-ccvfp-88}.

Actually, our techniques extend immediately to new, more general versions of all of these problems in which the plane is replaced with a fixed surface of genus~$g$; in that case, the algorithms are FPT in $k+g$.  We can also solve embedding extension problems, namely, embedding problems in which part of the graph is pre-embedded~\cite{m-ltaeg-99,angelini2015testing}.

All these results are detailed and proved in Section~\ref{S:applications}.  We also remark that, since the conference version of this paper appeared, more applications of Theorem~\ref{T:general} have been found; as it turns out, many kinds of crossing number problems encountered in graph drawing reduce to the embeddability of graphs on 2-complexes, so that our result applies to several other well-studied crossing number variants, for which fixed-parameter tractability parameterized by the total number of crossings was not known~\cite{ch-uffcn-25}.  Thus, our algorithm appears to be a versatile and powerful tool that allows to solve diverse problems in a relatively simple way.

\paragraph*{Embedding graphs on surfaces.}

Our embeddability problem also contains, as a special case, the problem of embedding a graph into a surface, since every surface of genus~$g$ (a ``sphere with $g$ handles or $g$ crosscaps'') is (homeomorphic to) some 2-complex with size $O(g)$.  This problem is of paramount importance; indeed, many graph problems turn out to be more easily solvable for graphs embedded on low-genus surfaces than for general graphs (see, e.g., the survey~\cite[Section~6]{c-ctgs-17}), so an embedding algorithm yields faster graph algorithms for abstract graphs that happen to be embeddable on low-genus surfaces.  The embedding problem is already NP-hard for surfaces~\cite{t-ggpnc-89}, and the existing algorithms that are fixed-parameter tractable in the genus are notoriously complicated; we review them now.

Mohar~\cite{m-ltaeg-99} has given an algorithm for embedding graphs on a fixed surface that takes linear time in the input graph, for every fixed surface.  This algorithm is very technical and relies on several other articles.  The dependence on the genus is not made explicit, but seems to be doubly exponential~\cite{kmr-sltae-08}.

Kawarabayashi, Mohar, and Reed, in an extended abstract~\cite{kmr-sltae-08}, have given a simpler linear-time algorithm for this problem, with a singly-exponential dependence in the genus, but not all details are presented, which makes the approach hard to check~\cite[p.~3657, footnote]{kp-dvgbg-19}.

General graph minor theory provides an algorithm for the same purpose.  The graph minor theorem by Robertson and Seymour~\cite{rs-gm20w-04} implies that, for every fixed surface~$\surf$, there is a finite list of graphs~$\cal O_\surf$ such that a graph~$G$ can be embedded on~$\surf$ if and only if $G$ does not contain any graph in~$\cal O_\surf$ as a minor.  Moreover, there is an algorithm that given any surface~$\surf$ (specified by its genus and orientability) outputs the list $\cal O_\surf$~\cite{agk-cem-08}.  Also, there is an algorithm to decide whether a fixed graph is a minor of an input graph~$G$ running in $O(n^{1+o(1)})$ time, where $n$ is the size of~$G$~\cite{kps-mcdpa-24} (improving over cubic-time algorithms~\cite{robertson1995graph}\cite[Theorem~6.12]{cfklmpps-pa-15}).  These considerations thus lead to an algorithm to decide embeddability of a graph on a surface that runs, if the input surface is fixed, in almost-linear time in the size of the input graph.  However, with this approach, it is not possible to compute an embedding if one exists.

Finally, in the same vein, Kociumaka and Ma.~Pilipczuk~\cite{kp-dvgbg-19} have studied the following problem, which is more general than the embeddability problem of graphs on surfaces:  Given a surface~$\surf$, a graph~$G$, and an integer~$k\ge0$, is it possible to remove a set~$W$ of at most~$k$ vertices from~$G$ so that $G-W$ is embeddable on~$\surf$?  They provide an algorithm that is fixed-parameter tractable in $k$ and the genus of~$\surf$, where the dependence on the genus is unspecified.  However, they use one of the previous algorithms for embedding graphs on surfaces~\cite{m-ltaeg-99,kmr-sltae-08} as a subroutine.  The problem that we study, the embeddability of graphs on 2-complexes, is independent from the problem studied by Kociumaka and Pilipczuk, in the sense that there is, a priori, no obvious reduction from one problem to the other.  However, we will reuse some ideas from that paper.

Our algorithm, restricted to the case where we want to embed graphs on surfaces, is not as efficient as the existing algorithms mentioned above, because it runs in quadratic time for fixed genus.  However, we stress that it does not rely on the toolbox used in the previous algorithms for embedding graphs on surfaces; we essentially rely on algorithms from graph minor theory (see in particular Lemma~\ref{L:branch-or-grid}) and on graph planarity testing.

\paragraph{Comparison between surfaces and 2-complexes.}
Every surface is homeomorphic to a 2-complex, but 2-complexes are much more general than surfaces, and tools that are suitable for studying embeddability of graphs on surfaces do not generalize.  For example, the set of graphs embeddable on a given 2-complex is not closed under minor, which makes many tools for dealing with graphs on surfaces unsuitable for 2-complexes.  Moreover, the complexity of some topological problems increase drastically when we consider 2-complexes instead of surfaces, e.g., deciding homeomorphism (solvable in linear time for surfaces, equivalent to graph isomorphism for 2-complexes~\cite{oww-h2ceg-00}), or deciding the contractibility of curves (solvable in linear time for surfaces~\cite{dg-tcs-99,lr-hts-12,ew-tcsr-13}, undecidable for 2-complexes~\cite{b-wp-59,s-ctcgt-93}).  See our previous paper with Mohar~\cite[Introduction]{cmm-egtds-22} for more details.  In this sense, it is perhaps surprising to obtain an FPT algorithm for embedding graphs on 2-complexes.  We also remark that the class of 2-complexes includes the \emph{pseudosurfaces} considered in graph theory; see Archdeacon~\cite[Section~5.7]{a-tgts-96} for a survey.

\paragraph*{Overview.}

At a very high level, we use a standard strategy in graph algorithms and parameterized complexity (see, e.g., the book by Cygan, Fomin, Kowalik, Lokshtanov, Marx, Pilipczuk, Pilipczuk, and Saurabh~\cite[Chapter~7]{cfklmpps-pa-15}): we show by dynamic programming that the problem can be solved efficiently for graphs of bounded branchwidth, and then, using an irrelevant vertex method, we prove that one can assume without loss of generality that the input graph~$G$ has branchwidth bounded by a polynomial in the size of the input 2-complex.  In the context of surface-embedded graphs, this paradigm has been used in the extended abstract by Kawarabayashi, Mohar, and Reed~\cite{kmr-sltae-08} and in the article by Kociumaka and Pilipczuk~\cite{kp-dvgbg-19}; our algorithm takes inspiration from the former one, for the idea of the dynamic programming algorithm, and from the latter, for some arguments in the irrelevant vertex method.  However, handling 2-complexes requires significantly more effort, new data structures, and some tailored preprocessing steps.  More precisely, Theorem~\ref{T:general} follows immediately from the following two theorems:
\begin{theorem}[algorithm for bounded branchwidth]\label{T:bounded-bw}
  We can solve the embeddability problem of graphs into two-dimensional simplicial complexes in time $(c+w)^{O(c+w)}\cdot n$, where $c$ is the number of simplices of the input 2-complex, $n$ is the total number of vertices and edges of the input graph, and $w$ is its branchwidth.
\end{theorem}
\begin{theorem}[algorithm to reduce branchwidth]\label{T:irrelevant}
  Given a 2-complex~$\complex$ with $c$ simplices, and a graph~$G$ with $n$ vertices and edges in total, we can, in time $O(2^{\poly(c)}\cdot n^2)$, do one of the following:
  \begin{itemize}
      \item correctly report that $G$ is not embeddable on~$\complex$,
      \item or compute a subgraph $G'$ of~$G$, of branchwidth polynomial in~$c$, such that $G$ embeds on~$\complex$ if and only if $G'$ does.
  \end{itemize}
\end{theorem}
(In our proof, the polynomial in~$c$ has degree~$18+\varepsilon$, for any $\varepsilon>0$.)

Incidentally, Theorem~\ref{T:bounded-bw} (together with the reduction presented in Section~\ref{S:crossing-number}) immediately implies that the crossing number~$k$ of a graph of branchwidth (or, equivalently, treewidth) $w$ can be computed in $(k+w)^{O(k+w)}\cdot n$ time, and this improves over the best previous result~\cite{kr-ccnlt-07}, in which the dependence on $k+w$ is triply exponential, since it relies on Courcelle's theorem~\cite{c-grala-90}.  As mentioned above, in a very recent paper~\cite{lpssxz-cnsst-25}, Lokshtanov, Panolan, Saurabh, Sharma, Xue, and Zehavi provide an algorithm for computing the crossing number~$k$ of arbitrary graphs in $k^{O(k)}\cdot n$ time.  To obtain this landmark result, they critically rely on an algorithm for the crossing number problem that runs in $(k+w)^{O(k+w)}\cdot n$ time.  While, for technical reasons in order to be reusable for the algorithm for arbitrary branchwidth, their bounded-branchwidth algorithm solves a more general problem than the crossing number problem, the present paper is the first to present such an algorithm.

\paragraph*{Structure of the paper.}

We now present the structure of the paper, indicating which techniques are used.  We also emphasize which components would be simpler if we were just aiming for an algorithm for embedding graphs on surfaces.

We introduce some standard notions in Section~\ref{S:prelim}.  

Sections \ref{S:2complexes} to~\ref{S:bounded-bw} contain the proof of Theorem~\ref{T:bounded-bw}.  In Section~\ref{S:2complexes}, we show that we can make some simplifying assumptions on the input.  In particular, every graph embeds in a 3-book (a 2-complex obtained by identifying three triangles along a common edge), so in the rest of the paper we assume that the input 2-complex has no 3-book.  We then present data structures for representing 2-complexes and graphs embedded on them.  If we restrict ourselves to the case where the input 2-complex is homeomorphic to a surface, we essentially consider combinatorial maps of graphs on surfaces, except that the graphs need not be cellularly embedded (such a data structure is called an \emph{extended combinatorial map}~\cite[Section~2.2]{cm-tgis-14}).  The case of 2-complexes is largely more involved.

In Section~\ref{S:partitioning}, we show that if our input graph~$G$ has an embedding into our input 2-complex~$\complex$, then there exists an embedding of~$G$ on~$\complex$ that is \emph{sparse} with respect to a branch decomposition of~$G$.  This means that each subgraph of~$G$ induced by the leaves of any subtree of the branch decomposition can be separated from the rest of~$G$ using a graph embedded on~$\complex$, called \emph{partitioning graph}, of small complexity.  We find that this new structural result, even in the surface case, is interesting and might prove useful in other contexts.  If the target space were a surface, we could assume that $G$ is 3-connected and has no loop or multiple edge, which would imply (still with some work) that \emph{any} embedding of~$G$ would be sparse, but again the fact that we consider 2-complexes requires additional work.

In Section~\ref{S:algo}, we present the dynamic programming algorithm, which either determines the existence of an embedding of~$G$ on~$\complex$, or shows that no sparse embedding of~$G$ on~$\complex$ exists (and thus no embedding at all, by the previous paragraph).  The idea is to use bottom-up dynamic programming and to consider all regions of the 2-complex in which the subgraph of~$G$ (induced by a subtree of the branch decomposition) can be embedded.  The complexity depends exponentially on the branchwidth of~$G$.

The previous arguments, most notably in Section~\ref{S:partitioning}, implicitly assumed that, if $G$ has an embedding into~$\complex$, it has a \emph{proper and cellular embedding}, in particular, in which the faces are homeomorphic to disks.  In Section~\ref{S:cell}, we show that we can assume this property.  Essentially, we build all 2-complexes ``smaller'' than~$\complex$, such that $G$ embeds on~$\complex$ if and only if it embeds into (at least) one of these 2-complexes, and moreover if it is the case, it has an embedding into (at least) one of these 2-complexes that is proper and cellular.  If $\complex$ were an orientable surface, we would just consider the surfaces of lower genus; but here a more sophisticated approach is needed.

The above ingredients allow to prove Theorem~\ref{T:bounded-bw} (Section~\ref{S:bounded-bw}).

There remains to prove that one can assume that $G$ has polynomial branchwidth (Theorem~\ref{T:irrelevant}); the proof, almost independent from the above, is contained in Section~\ref{S:irrelevant}.  It uses an irrelevant vertex method: in a nutshell, if $G$ has large branchwidth, we can compute a subdivision of a large wall, and then (unless $G$ has large genus and is not embeddable on~$\complex$) compute a large planar part of~$G$ containing a large wall; the central vertex of this wall is irrelevant, in the sense that its removal does not affect the embeddability or non-embeddability of the graph into~$\complex$; iterating, we obtain a graph of branchwidth polynomial in the size of~$\complex$.

In Section~\ref{S:computing}, we explain how to compute an embedding of~$G$ into~$\complex$, if one exists.  We prove the promised applications in Section~\ref{S:applications}.  Finally, we conclude in Section~\ref{S:conclusion}.

\section{Preliminaries}\label{S:prelim}

\subsection{Graphs and branch decompositions}\label{S:branchdecomp}

In this paper, graphs may have loops and multiple edges unless noted otherwise.  Let $G$ be a graph; as usual, we denote by $V(G)$ and $E(G)$ the sets of vertices and edges of~$G$.  \emphdef{Dissolving} a degree-two vertex~$v$ means replacing $v$ and its incident edges $vw_1$ and~$vw_2$ with a new edge between $w_1$ and~$w_2$.

A \emphdef{(rooted) branch decomposition} of~$G$ is a rooted tree~$\BB$ in which:
\begin{itemize}
    \item the root has a single child,
    \item every non-root node is either an \emphdef{internal node}, in which case it has exactly two children, which are ordered, or a \emph{leaf}, in which case it has no child,
    \item every (non-root) leaf is labelled with an edge of~$G$, and this labelling induces a bijection.
\end{itemize}
The vertices and edges of~$\BB$ are called \emphdef{nodes} and \emphdef{arcs}, respectively.  Clearly, the number of nodes of~$\BB$ is linear in the number of edges of~$G$.

Each arc $\alpha$ of~$\BB$ splits the tree~$\BB$ into two subtrees $\BB_1$ and~$\BB_2$; if, for $i=1,2$, we denote by $E_i$ the set of labels appearing in~$\BB_i$, we see that $\alpha$ naturally induces a partition $(E_1,E_2)$ of the set of edges of~$G$ (if $\alpha$ is the arc incident to the root, then one part of the partition is empty).  The \emphdef{middle set} $\mmid(\alpha)$ associated with~$\alpha$ is the set of vertices of~$G$ which are the endpoints of at least one edge in~$E_1$ and at least one edge in~$E_2$.

The \emphdef{width} of~$\BB$ is the maximum size of a middle set associated to an arc of~$\BB$.  The \emphdef{branchwidth} of~$G$ is the minimum width of its (rooted) branch decompositions.

The usual definition of a branch decomposition is identical, except that the tree is unrooted (thus the leaves are in bijection with the edges of~$G$) and children are unordered.  Our definition turns out to be more convenient to use in the dynamic program.  The difference is cosmetic: From any usual branch decomposition, one can trivially obtain a rooted branch decomposition of the same width, by subdividing an arbitrary arc with a new node~$\nu$ and then connecting~$\nu$ to a new leaf node~$\rho$, which will serve as the root; the converse operation obviously transforms any rooted branch decomposition into a usual branch decomposition.  Since each usual branch decomposition corresponds to a rooted branch decomposition, and both have the same width, we henceforth only work with rooted branch decompositions.

\subsection{Surfaces}

We will assume some familiarity with surface topology; see, e.g., \cite{mt-gs-01,s-ctcgt-93,c-ctgs-17} for suitable introductions under various viewpoints.   We recall some basic definitions and properties.  A \emphdef{surface} (without boundary) $\surf$ is a compact, connected Hausdorff topological space in which every point has an open neighborhood homeomorphic to the open disk.  Up to homeomorphism, every surface~$\surf$ is obtained from a sphere by:
\begin{itemize}
    \item either removing $g/2$ open disks and attaching a handle (a torus with an open disk removed) to each resulting boundary component, where $g$ is an even, nonnegative integer called the \emph{(Euler)} \emphdef{genus} of~$\surf$; in this case, $\surf$ is \emphdef{orientable};
    \item or removing $g$ open disks and attaching a M\"obius band to each resulting boundary component, for a positive number~$g$ called the (Euler) \emphdef{genus} of~$\surf$; in this case, $\surf$ is \emphdef{non-orientable}.
\end{itemize}
A \emphdef{possibly disconnected surface} is a disjoint union of surfaces.

A \emphdef{surface with boundary} is obtained from a surface (without boundary) by removing a finite set of interiors of disjoint closed disks.  The boundary of each disk forms a \emphdef{boundary component} of~$\surf$. The \emphdef{genus} of $\surf$ is defined as the genus of the original surface without boundary.    Equivalently, a surface with boundary is a compact, connected Hausdorff topological space in which every point has an open neighborhood homeomorphic to the open disk or the closed half disk $\{(x,y)\in\R^2\mid y\ge0, x^2+y^2<1\}$.

An embedding of a graph~$G$ into a surface~$\surf$ is \emphdef{cellular} if each face of the embedding is homeomorphic to an open disk.  If $G$ is cellularly embedded on a surface with genus~$g$ and $b$ boundary components, such that the embedding has $v$~vertices, $e$~edges, and $f$~faces, then Euler's formula stipulates that $v-e+f=2-g-b$ (these quantities are referred to as the \emphdef{Euler characteristic} of the surface).

\subsection{2-complexes}

A \emphdef{2-complex} (or two-dimensional simplicial complex) is an abstract simplicial complex of dimension at most two: a finite set of 0-simplices called \emphdef{vertices}, 1-simplices called \emphdef{edges}, and 2-simplices called \emphdef{triangles}.  Each edge is a pair of vertices, and each triangle is a triple of vertices; moreover, each subset of size two in a triangle must be an edge.  The 0-, 1-, and 2-simplices form the \emph{simplices} of the complex.  (We slightly depart from the standard definition in the sense that, for us, a 2-complex is a simplicial complex in which each simplex has dimension at most two; there needs not be a simplex of dimension exactly two.)

Each 2-complex~$\complex$ corresponds naturally to a topological space, obtained as follows: Start with one point per vertex in~$\complex$; connect them by segments as indicated by the edges in~$\complex$; similarly, for every triangle in~$\complex$, create a triangle whose boundary is made of the three edges contained in that triangle.  By abuse of language, we identify $\complex$ with that topological space.

\subsection{Graph embeddings}

Each graph has a natural associated topological space (for graphs without loops or multiple edges, this is a specialization of the definition for 2-complexes).  An \emphdef{embedding}~$\Gamma$ of a graph~$G$ into a 2-complex~$\complex$ is an injective continuous map from (the topological space associated to) $G$ to (the topological space associated to)~$\complex$.  A \emphdef{face} of~$\Gamma$ is a connected component of the complement of the image of~$\Gamma$ in~$\complex$.

\section{2-complexes and their data structures}\label{S:2complexes}

\subsection{Some preprocessing}

A \emphdef{3-book} is a topological space obtained from three triangles by considering one side per triangle and identifying these three sides together into a single edge.  We say that a 2-complex~$\complex$ \emphdef{contains a 3-book} if $\complex$ contains three distinct triangles that share a common edge.

\begin{proposition}\label{P:preproc}
  To decide the embeddability of a graph~$G$ on a 2-complex~$\complex$, we can without loss of generality, after a linear-time preprocessing, assume the following properties on the input:
  \begin{itemize}
      \item $\complex$ has no 3-book and no connected component that is reduced to a single vertex;
      \item $G$ has no connected component reduced to a single vertex, and at most one connected component that is a path.
  \end{itemize}
\end{proposition}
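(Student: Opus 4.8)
The plan is to handle the listed properties one at a time, observing that each can be achieved by a local modification of the input (either to $\complex$ or to $G$) that preserves the answer to the embeddability question, and that the total work is linear in the size of the input.

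First I would dispose of the easy graph-side conditions. If $G$ has a connected component reduced to a single vertex, then since every 2-complex without isolated-vertex components (which we will have arranged, see below) contains at least one edge, hence at least one point that is not a vertex of a degree-$\ge 3$ configuration, such an isolated vertex can always be placed; more carefully, one checks that an isolated vertex of $G$ can be embedded anywhere in $\complex$ as long as $\complex$ is nonempty, so we may simply delete all isolated vertices of $G$ without changing embeddability (and symmetrically handle the degenerate case where $\complex$ itself is empty or a single point separately). Similarly, if $G$ has two or more components each homeomorphic to a segment (i.e.\ a path, or a single edge), any two of them can be embedded disjointly inside a single edge or triangle of $\complex$ — again using that $\complex$ has at least one edge — so we may merge all segment-components into one, keeping just a single such component; this reduces their number to at most one. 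Each of these is a linear-time scan-and-delete operation.

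Next I would address the complex-side condition that $\complex$ has no connected component reduced to a single vertex: such a component is just an isolated point of the topological space, onto which nothing of $G$ (after the previous step, $G$ has no isolated vertex) can usefully map, so it plays no role and can be deleted in linear time without affecting embeddability. The main work is the 3-book condition. Here the key observation is that if $\complex$ contains three distinct triangles $t_1, t_2, t_3$ sharing a common edge $e$, then in any embedding $\Gamma$ of a graph $G$ into $\complex$, the image $\Gamma(G)$, being homeomorphic to a graph (a 1-complex), locally near an interior point of $e$ looks like an embedded graph in $\R^1$ or in the plane; but a 3-book is not locally planar along $e$. The argument is that no embedded graph can pass "through" the spine edge $e$ using all three pages: more precisely, at most two of the three triangles incident to $e$ can be locally used by $\Gamma(G)$ in a neighborhood of any interior point of $e$, because a small neighborhood of such a point in the 3-book, minus the spine, has three components, and a graph embedded in the 3-book that meets the interior of $e$ would have to have a point of degree $\ge 3$ in a configuration that cannot be realized by an injective map from a graph (which is locally a point or an arc) — one shows that $\Gamma(G)$ must be disjoint from the relative interior of $e$, or else it crosses it transversally and then lies, near $e$, in at most two of the pages. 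From this one deduces that $G$ embeds in $\complex$ if and only if $G$ embeds in the 2-complex $\complex'$ obtained by "splitting" the 3-book: replace $\complex$ near $e$ by three copies of $e$, one glued to each page, so that the three triangles no longer share $e$ (formally, duplicate the edge $e$ and reassign incidences). Iterating this over all edges contained in $\ge 3$ triangles — there are at most $c$ of them and each split is a constant-size local surgery — yields in linear time a 2-complex with no 3-book and the same embeddability behaviour.

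The main obstacle I anticipate is making the 3-book argument rigorous: one must carefully justify that an embedded graph in $\complex$ cannot "use" three pages at a spine edge, i.e.\ that any embedding of $G$ into $\complex$ can be pushed off the relative interior of every 3-book spine (or restricted to at most two pages there), and conversely that splitting the spine does not destroy any genuine embedding. This is essentially a local planarity / transversality argument about maps of 1-complexes into 2-complexes, and the subtlety is that $\Gamma$ is only assumed continuous and injective, not piecewise-linear, so one should either invoke a general position / PL-approximation normalization of embeddings (perhaps established among the "simple assumptions on the input" promised for this section) or argue purely topologically using the local structure of $\complex$ and the fact that $G$ is a graph. Everything else — the graph-side deletions and merges, the isolated-vertex-component deletion, and the bookkeeping that the whole preprocessing runs in linear time — is routine.
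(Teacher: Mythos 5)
Your treatment of the 3-book condition is wrong, and this is the heart of the proposition. You claim that an embedded graph can meet the relative interior of a spine edge $e$ only by using at most two of the three pages, and you conclude that one may split the spine (duplicate $e$, one copy per page) without changing embeddability. Both claims are false. For the local claim, take $K_{1,3}$ with its degree-three vertex mapped to an interior point of $e$ and its three edges going into the three distinct pages: this is a perfectly valid injective continuous map of a graph, so an embedding can use all three pages at a spine point (indeed, this flexibility is exactly what makes 3-books useful). For the global claim, take $\complex$ to be the 3-book itself and $G=K_5$: every graph, in particular $K_5$, embeds into a 3-book (this is the fact the paper invokes, \cite[Proposition~3.1]{cmm-egtds-18}, proved by placing vertices on the spine and routing edges through the pages), whereas after your splitting surgery the complex becomes three disjoint disks, into which the connected nonplanar graph $K_5$ does not embed. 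So your surgery changes the answer, and no push-off/transversality argument can rescue it.

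The intended argument is in the opposite direction and is a one-liner: since every graph embeds into a 3-book, if $\complex$ contains three triangles sharing an edge then $G$ embeds into that 3-book and hence into $\complex$, so the algorithm can immediately report ``embeddable''; this is what justifies assuming, without loss of generality for the decision problem, that $\complex$ has no 3-book. Your graph-side steps (isolated vertices, segment components) are essentially in the spirit of the paper, though the paper is a bit more careful with the degenerate cases you defer: it first cancels isolated vertices of $\complex$ against isolated vertices of $G$, and then replaces each remaining isolated vertex of $G$ by an isolated edge (legitimate precisely because $\complex$ no longer has isolated-vertex components), rather than deleting them outright, which would be unsound when $\complex$ cannot absorb them.
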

\begin{proof}
  It is known that every graph can be embedded into a 3-book~\cite[Proposition~3.1]{cmm-egtds-22}.  So we can without loss of generality assume that $\complex$ contains no 3-book.  We remove all the isolated vertices of~$\complex$, and remove the same number of isolated vertices of~$G$ (to the extent possible); this does not affect whether $G$ embeds into~$\complex$.  We then replace each isolated vertex of~$G$ with an isolated edge; since $\complex$ has no more isolated vertices, this does not affect embeddability of~$G$ into~$\complex$.  Finally, for the same reason, if $G$ contains at least two connected components that are paths, we replace all these components with a single edge.
\end{proof}

\textbf{In the rest of this article, without loss of generality, we implicitly assume that $\complex$ and~$G$ satisfy the properties stated in Proposition~\ref{P:preproc}.}

\subsection{Structure of 2-complexes without 3-book or isolated vertex}

\begin{figure}
\centering
\includegraphics[width=.9\linewidth]{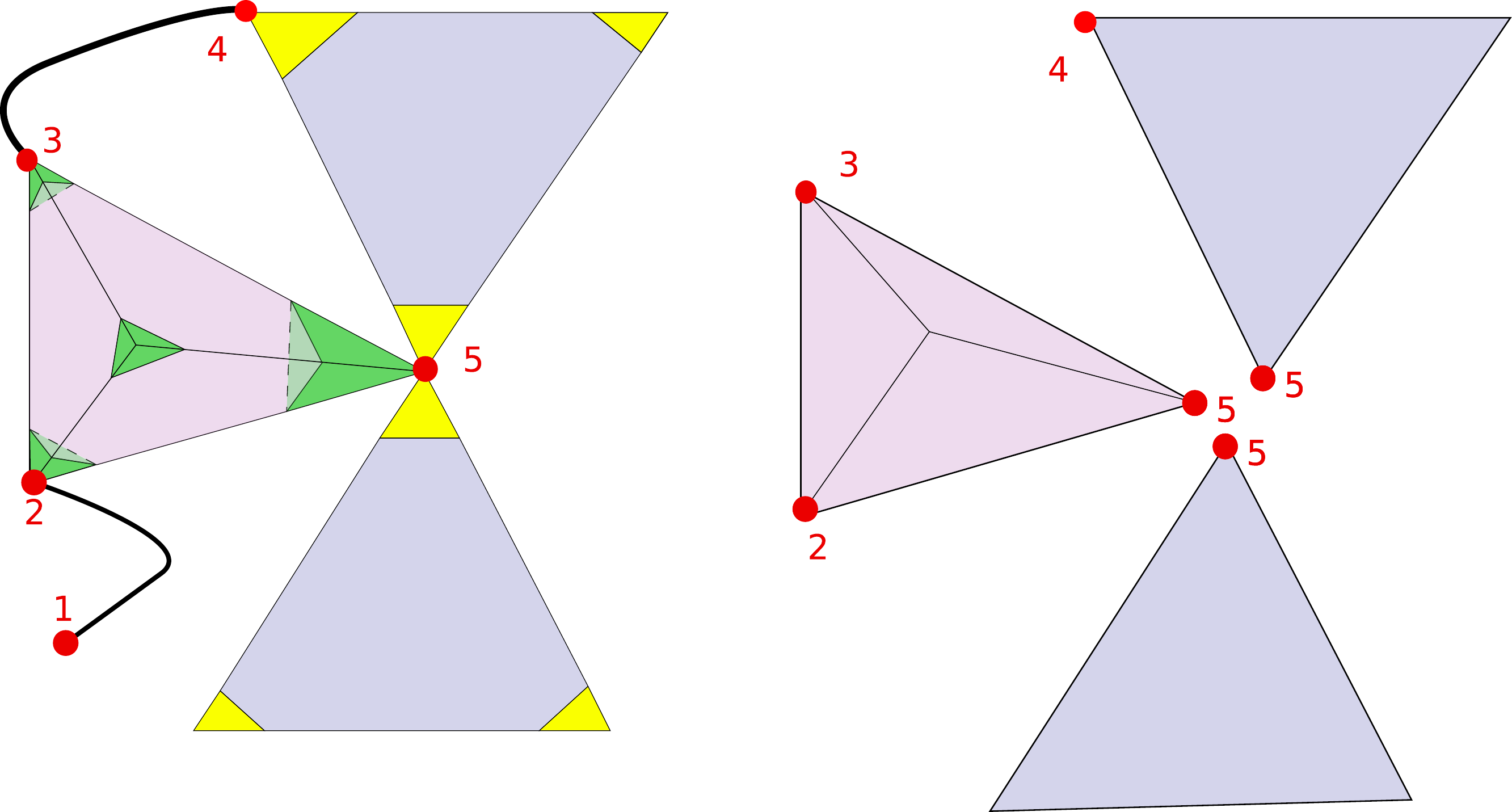}
\caption{On the left: A 2-complex with 5 singular vertices, numbered from 1 to~5, and 2 isolated edges (one between 3 and 4 and one between 1 and 2)  where, at singular vertices, the cones are in green and the corners in yellow. On the right: the corresponding detached surface.}
\label{F:Detached}
\end{figure}

Let $\complex$ be a 2-complex without 3-book or isolated vertex, and let $p$ be a vertex of~$\complex$.  Following~\cite[Section~2.2]{cmm-egtds-22}, we describe the possible neighborhoods of~$p$ in~$\complex$.  A \emphdef{cone at~$p$} is a cyclic sequence of triangles $t_1,\ldots,t_k,t_1$ ($k\ge3$), all incident to~$p$, such that, for each $i=1,\ldots,k$, the triangles $t_i$ and~$t_{i+1}$ (where $t_{k+1}=t_1$) share an edge incident with~$p$, and any other pair of these triangles have only $p$ in common.  A \emphdef{corner at~$p$} is an inclusionwise maximal sequence of distinct triangles $t_1,\ldots,t_k$, all incident to~$p$, such that, for each~$i=1,\ldots,k-1$, the triangles $t_i$ and~$t_{i+1}$ share an edge incident with~$p$, any other pair of these triangles have only $p$ in common.  An \emphdef{isolated edge at~$p$} is an edge incident to~$p$ but not incident to any triangle.  The cones, corners, and isolated edges at~$p$ form the \emphdef{link components} at~$p$.

The set of edges and triangles incident with a given vertex~$p$ of~$\complex$ are uniquely partitioned into cones, corners, and isolated edges.  We say that $p$ is a \emphdef{regular vertex} if all the edges and triangles incident to~$p$ form a single cone or corner; in that case, $p$ has an open neighborhood homeomorphic to a disk or a closed half-disk.  Otherwise, $p$ is a \emphdef{singular vertex}.  See Figure~\ref{F:Detached}, left, for an illustration.

\emphdef{Detaching} a singular vertex~$v$ in~$\complex$ consists of the following operation: replace $v$ with new vertices, one for each cone, corner, and isolated edge at~$v$.  Detaching all singular vertices of a 2-complex (without 3-book) yields a disjoint union of (1) isolated edges and (2) a surface, possibly disconnected, possibly with boundary, called the \emphdef{detached surface} (see Figure~\ref{F:Detached}, right).  The trace of the singular vertices on the detached surface are the \emphdef{marked vertices}.  Conversely, $\complex$ can be obtained from a surface (possibly disconnected, possibly with boundary) and a finite set of segments by choosing finitely many subsets of points and identifying the points in each subset together.

The \emphdef{boundary} of~$\complex$ is the closure of the set of points of~$\complex$ that have an open neighborhood homeomorphic to a closed half-plane.  Equivalently, it is the union of the edges of~$\complex$ incident with a single triangle.

\subsection{Topological data structure for 2-complexes}\label{S:datastruct-2complexes}

We now describe a \emphdef{topological data structure} for 2-complexes without 3-book or isolated vertex that is more appropriate for our purposes.  It records only the 2-complex up to homeomorphism, not the combinatorial information given by its simplices.  Such a 2-complex~$\complex$ is obtained from a surface~$\surf$ (possibly disconnected, possibly with boundary) and a finite set~$S$ of segments by identifying together finitely many finite subsets of points.  Our data structure stores separately the detached surface~$\surf$, the set~$S$ of isolated edges, and the singular vertices, and two-way pointers representing incidences between them.  In more detail:
\begin{itemize}
    \item we store the list of the connected components of the detached surface~$\surf$, and for each such component~$\surf'$ we store (1) whether it is orientable or not; (2) its genus; (3) a list of pointers to the singular vertices in the interior of~$\surf'$; (4) for each boundary component of~$\surf'$, a cyclically ordered list of pointers to the singular vertices appearing on that boundary component (if $\surf'$ is orientable, the boundary components must be traversed in an order consistent with an arbitrarily chosen orientation of~$\surf'$);
    \item we store the list~$S$ of isolated edges, and for each of them, two pointers to the singular vertices at its endpoints;
    \item conversely, to each singular vertex is attached a list of pointers to the occurrences of that singular vertex on the detached surface or as an endpoint of an isolated edge.
\end{itemize}

The \emphdef{size} of a 2-complex (without 3-book or isolated vertex) is the sum of the number of isolated edges, the number of connected components of the detached surface, the total genus of the detached surface, the total number of boundary components of the detached surface, and the total number of marked vertices of the detached surface (the occurrences of the singular vertices).  This is, up to a constant factor, the size of the topological data structure indicated above, if the genus is stored in unary.

Given a 2-complex~$\complex$ without 3-book or isolated vertex, described by vertices, edges, and triangles and their incidence relations, we can easily compute a representation of~$\complex$ in that data structure, in polynomial time in the number of simplices: Indeed, by ignoring the incidences created by vertices, we easily build a triangulation of the surface~$\surf$ (possibly disconnected, possibly with boundary) and a list of segments~$S$; we then compute the topology of~$\surf$; finally, we mark the singular vertices, which are the vertices with several occurrences on~$\surf$ and/or on~$S$.  We remark that the size of the resulting data structure is at most linear in the number of simplices of the 2-complex~$\complex$, because, by Euler's formula, any triangulated surface (possibly with boundary) with $k$ simplices has genus $O(k)$ and a number of boundary components that is $O(k)$.  Thus, \textbf{in the rest of this article, without loss of generality, we implicitly assume that $\complex$ is given in the form of the above topological data structure.}  (Conversely, it is not hard to see that every 2-complex is homeomorphic to a 2-complex whose number of simplices that is linear in its size, but we will not need this fact.)

We will need the following lemma.
\begin{lemma}\label{L:homeo-2complex}
  Given two 2-complexes $\complex$ and~$\complex'$, given in the topological representation above, of sizes $c$ and~$c'$ respectively, we can decide whether $\complex$ and~$\complex'$ are homeomorphic in time $(c+c')^{O(c+c')}$.
\end{lemma}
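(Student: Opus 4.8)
The plan is to reduce homeomorphism of 2-complexes to homeomorphism of their constituent pieces — the detached surfaces, the isolated segments, and the pattern of identifications producing the singular points — and to decide the latter by brute-force search over the possible matchings. First I would observe that a homeomorphism $\complex \to \complex'$ must send singular points to singular points (a point is singular iff it has no Euclidean or half-plane neighborhood, a topological property), hence it induces a bijection between the singular point sets; it must also send the detached surface of $\complex$ to that of $\complex'$ (the detached surface is recovered, up to homeomorphism, as the disjoint union obtained by detaching), and likewise the isolated segments to the isolated segments. Conversely, any such collection of compatible homeomorphisms on the pieces that is consistent with the identification data glues up to a homeomorphism of the whole 2-complexes. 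So the question becomes: does there exist a bijection $\sigma$ between singular points, a bijection $\tau$ between connected components of the detached surfaces, and for each matched pair of surface components a homeomorphism between them, such that all of this is compatible with $\sigma$ and with the cyclic orders of marked points on the boundary components?

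The key steps, in order, are as follows. (1) From the topological data structure, read off for each 2-complex its list of detached-surface components (with genus, orientability, and the lists of interior and boundary marked points together with their cyclic orders), its list of isolated segments with their two endpoint singular points, and the singular points with their occurrence lists. (2) Enumerate all bijections $\sigma$ between the singular points of $\complex$ and of $\complex'$; there are at most $c! \le c^{O(c)}$ of them. (3) For a fixed $\sigma$, enumerate all bijections $\tau$ between connected components of the two detached surfaces; again at most $c^{O(c)}$ choices. (4) For each matched pair $(\surf', \tau(\surf'))$, check whether there is a homeomorphism $\surf' \to \tau(\surf')$ carrying the marked points of $\surf'$ to those of $\tau(\surf')$ in the way prescribed by $\sigma$: this requires equal genus, equal orientability, equal number of boundary components, the same multiset of marked-point counts on interior versus boundary, and — the only genuinely combinatorial check — that the cyclic (or, in the orientable case, cyclic-and-oriented; in the non-orientable case, cyclic up to reflection) sequences of $\sigma$-labels on boundary components can be matched up, which one tests by trying all rotations and reflections, and all bijections between boundary components, at cost $c^{O(c)}$. (5) Similarly check that $\sigma$ maps the set of isolated-segment endpoint-pairs of $\complex$ onto that of $\complex'$. (6) If some $(\sigma,\tau)$ passes all checks, report homeomorphic; if none does, report not homeomorphic. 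The total running time is a product of finitely many factors each bounded by $(c+c')^{O(c+c')}$, hence $(c+c')^{O(c+c')}$ overall.

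The main obstacle is correctness rather than running time: I must argue that the existence of compatible piecewise homeomorphisms respecting the identification data is \emph{equivalent} to the existence of a global homeomorphism. The forward direction uses that detaching is a topologically canonical operation, so a global homeomorphism restricts to the required data. The reverse direction requires a gluing argument: given homeomorphisms of the surface components and a compatible matching of isolated segments and singular points, one assembles a homeomorphism of the quotient spaces, using the fact that the identification maps (quotients identifying finite point sets) behave well under homeomorphisms that respect the identified subsets, together with the classification of surfaces with boundary and marked points — a surface with boundary is determined up to homeomorphism by genus, orientability, and number of boundary components, and such a homeomorphism can additionally be required to realize any prescribed bijection of marked points on each boundary component provided the cyclic orders agree (up to orientation in the orientable case). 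Care is needed with orientability: on an orientable component one may either preserve or reverse orientation globally, so the cyclic order of boundary marked points need only match up to a simultaneous reversal, and the relative order among distinct boundary components (recorded in our data structure by a chosen orientation) must be handled consistently; on a non-orientable component there is no such constraint. Handling these parity and orientation subtleties carefully is where the proof needs the most attention, but each is a finite check subsumed by the brute-force enumeration, so the claimed time bound is unaffected.
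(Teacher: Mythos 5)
Your overall strategy is the same as the paper's: reduce homeomorphism to an isomorphism test on the topological data structures (matching isolated segments, surface components, and singular points, with the cyclic orders of marked points on boundary components compared up to a simultaneous reversal on orientable components and up to individual reversals on non-orientable ones), decided by brute-force enumeration within the stated time bound. However, there is a genuine gap at the very first step. You assert that a homeomorphism must induce a bijection between the singular-point sets because singularity is characterized topologically (``no Euclidean or half-plane neighborhood''). This is false for the singular points as recorded in the data structure: a vertex incident to exactly two isolated segments and not incident to the detached surface is stored as a singular point (its link has two components), yet topologically it is just an interior point of a segment, indistinguishable from any other interior point. Consequently, two homeomorphic 2-complexes can have non-isomorphic data structures --- the simplest example being a path made of two isolated segments versus a single isolated segment, or, inside a larger complex, a segment joining two genuine singular points that is subdivided by an intermediate vertex in one input but not in the other. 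Your algorithm, which insists on a bijection $\sigma$ between the recorded singular points compatible with all incidences, would wrongly declare such pairs non-homeomorphic.

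The paper handles exactly this issue by a preprocessing step: every singular point incident to exactly two isolated segments and not incident to the detached surface is dissolved (the two segments are merged into one), after which homeomorphism is indeed equivalent to isomorphism of the data structures, and the rest of your argument (including the gluing direction and the orientation bookkeeping) goes through as you describe. So the fix is small, but without this normalization your claimed equivalence between global homeomorphism and compatible piecewise matching is incorrect, and the correctness argument --- which you rightly identify as the crux --- does not stand as written.
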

\begin{proof}
  (We remark that this essentially follows from more general results~\cite{oww-h2ceg-00}; the running time of our algorithm might be improvable, but this suffices for our purposes.)  As a preprocessing, in the topological data structures of~$\complex$ and~$\complex'$, we do the following: whenever a singular vertex is incident to exactly two isolated edges and is not incident to the detached surface, we dissolve that singular vertex, removing it and replacing the two incident edges with a single one.  Clearly, this does not affect whether $\complex$ and~$\complex'$ are homeomorphic.
  
  After this preprocessing, $\complex$ and~$\complex'$ are homeomorphic if and only if their topological data structures are isomorphic.  By this, we mean that there is a bijective correspondence~$\varphi$ from the isolated edges, the connected components of the detached surface, and the boundary components of each connected component of the detached surface of~$\complex$ to those of~$\complex'$ that preserves the genus, the orientability, the incidences, and the cyclic ordering of the singular vertices on each boundary component.  More precisely, for the latter point: for each connected component~$C$ of the detached surface of~$\complex$, if $C$ is orientable, then the lists of singular vertices appearing on each boundary component of~$C$ and~$\varphi(C)$ are identical up to global reversal of all these cyclic orderings simultaneously, corresponding to a change of the orientation of the connected component; if $C$ is non-orientable, then the lists of singular vertices appearing on each boundary component of~$C$ and~$\varphi(C)$ are identical up to the possible individual reversal of some of these cyclic orderings.  The proof is tedious but straightforward, and the existence of an isomorphism can obviously be tested in the indicated time.
\end{proof}

\subsection{Proper and cellular graph embeddings on 2-complexes}\label{S:datastruct-graphs}

Let $\complex$ be a 2-complex with size~$c$, $G$ a graph, and $\Gamma$ an embedding of~$G$ on~$\complex$.  The embedding~$\Gamma$ is \emphdef{proper} if:
\begin{itemize}
    \item the image of~$\Gamma$ meets the boundary of~$\complex$ only on singular vertices;
    \item the vertices of~$\Gamma$ cover the singular vertices of~$\complex$.
\end{itemize}

The embedding~$\Gamma$ is \emphdef{cellular} if each face of~$\Gamma$ is an open disk plus possibly some part of the boundary of~$\complex$.  We emphasize that this definition slightly departs from the standard one.  Moreover, we will only consider cellular embeddings that are proper.

Traditional data structures for graphs on surfaces handle graphs embedded cellularly; \emph{rotation systems}~\cite{mt-gs-01} constitute one example of such a data structure.  In order to have efficient algorithms, refined data structures, e.g., with graph-encoded maps~\cite{l-gem-82} (see also~\cite[Section~2]{e-dgteg-03}), are needed.  The basic element in the graph-encoded map is the \emph{flag}, an incidence between a vertex, an edge, and a face of the graph.  Three involutions allow to move from each flag to a nearby flag.  Each flag contains a pointer to the underlying vertex, edge, and face.

One can easily extend such data structures to possibly non-cellular embeddings on surfaces~\cite[Section~2.2]{cm-tgis-14}.  In this framework, one must store the topology of each face, which is not necessarily homeomorphic to a disk.  Also, a face may have several boundary components; two-way pointers connect each face to one flag of each boundary component (or to an isolated vertex of the graph, if that boundary component is a single vertex); if a face is orientable and has several boundary components, then these pointers must induce a consistent orientation of these boundary cycles.  It is important to remark that this data structure also allows to recover the topology of the underlying surface.

Let $\Gamma$ be a proper graph embedding of a graph~$G$ on a 2-complex~$\complex$ (under the assumptions of Proposition~\ref{P:preproc}).  Let $\surf$ be the detached surface of~$\complex$.  Because $\Gamma$ is proper, it naturally induces an embedding~$\Gamma'$, of another graph~$G'$, on~$\surf$; each vertex of~$G$ located on a singular vertex of~$\complex$ appears as many times in~$G'$ as there are cones and corners at that singular vertex; the vertices of~$G$ located in the relative interior of isolated edges are absent from~$G'$.  Our data structure, called \emphdef{combinatorial map}, for storing the graph embedding~$\Gamma$ and the 2-complex~$\complex$ consists of storing (1) the graph embedding~$\Gamma'$ on~$\surf$, as indicated in the previous paragraph, (2) the isolated edges of~$\complex$, together with, for each such isolated edge, an ordered list alternating vertices and edges of~$\Gamma$ (or, instead of an edge, a mark indicating the absence of such an edge in the region of the isolated edge between the incident vertices), (3) the identifications of vertices of~$\Gamma'$ that are needed to recover~$\Gamma$ (and thus implicitly~$\complex$).

Isomorphisms between combinatorial maps are defined in the obvious way, similar to the concept of isomorphism between topological data structures:  Two combinatorial maps are isomorphic if there is an isomorphism between the combinatorial maps restricted to the detached surfaces, isomorphisms between the maps on each isolated edges, and such that incidences are preserved on the singular vertices.  We can easily test isomorphism between two combinatorial maps of size $k$ and~$k'$, respectively, in $(k+k')^{O(k+k')}$ time.

We will need an algorithm to enumerate all proper embeddings of small graphs on a given 2-complex.  This is achieved in the following lemma.
\begin{lemma}\label{L:enum-embeddings}
  Let $\complex$ be a 2-complex of size~$c$ and $k$ an integer.  We can enumerate the ${(c+k)}^{O(c+k)}$ combinatorial maps of graphs with at most $k$ vertices and at most $k$ edges properly embedded on~$\complex$ in ${(c+k)}^{O(c+k)}$ time.
\end{lemma}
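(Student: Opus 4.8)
The plan is to enumerate proper embeddings by first enumerating the "skeleton" of the combinatorial map and then, for each skeleton, counting the genuinely metric/topological choices. Recall that a proper embedding $\Gamma$ of a graph $G$ with at most $k$ vertices and $k$ edges on $\complex$ induces, via the detached surface $\surf$ of $\complex$, an embedding $\Gamma'$ of a graph $G'$ on $\surf$, together with data on the isolated segments and the vertex identifications at singular points. Since $\Gamma$ is proper, every vertex of $\Gamma$ that lies on a singular point is forced, and the number of such vertices is bounded by the number of marked points of $\surf$, hence by $c$. Likewise, every edge of $G$ meeting the boundary of $\complex$ does so only at singular points, and the isolated segments carry only $O(c)$ "slots" for vertices and edges. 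Consequently $G'$ has at most $k + O(c) = O(c+k)$ vertices and $O(c+k)$ edges.

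First I would bound the combinatorial structure of $\Gamma'$ on $\surf$. The surface $\surf$ (possibly disconnected, possibly with boundary) has Euler genus $O(c)$ and $O(c)$ boundary components. For a graph with $V$ vertices and $E$ edges embedded on a possibly-non-cellular way on $\surf$, the number of faces and the topology of each face is controlled by Euler's formula applied componentwise: the total face complexity (number of faces, their genera, their numbers of boundary components) is $O(c+k)$. The combinatorial map of $\Gamma'$ is then described by: the cyclic rotation system at each vertex (a cyclic order on $O(c+k)$ flags), the involution data of the gem/flag representation, the assignment of each flag to a face, and for each face its genus and orientability. The number of flags is $O(c+k)$, so the number of rotation systems and flag-involutions is $(c+k)^{O(c+k)}$; the face data adds only a $2^{O(c+k)}$ factor. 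I would argue that every proper embedding of $G$ on $\complex$ is captured, up to isomorphism, by one of these $(c+k)^{O(c+k)}$ choices of map on $\surf$, together with the $O(c)$-sized choices of how vertices and edges are laid out along the isolated segments (an ordered alternating list of $O(c+k)$ items, so again $(c+k)^{O(c+k)}$ possibilities) and the partition recording which vertices of $\Gamma'$ are identified at each singular point (a partition of an $O(c)$-element set, so $2^{O(c)}$ choices constrained by the link-component structure of $\complex$).

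The algorithm is then: enumerate all candidate combinatorial maps by brute force over rotation systems, flag involutions, face-topology labels, isolated-segment layouts, and singular-point identifications, obtaining a list of size $(c+k)^{O(c+k)}$ in $(c+k)^{O(c+k)}$ time; for each candidate, check in $(c+k)^{O(c+k)}$ time that it is consistent (the underlying abstract graph has at most $k$ vertices and $k$ edges, the gem data really defines a surface with the prescribed topology, the identifications reconstruct a space homeomorphic to $\complex$ using Lemma~\ref{L:homeo-2complex}, and the embedding is proper, i.e. meets the boundary only at singular points and covers all singular points); and finally remove duplicates up to combinatorial-map isomorphism, which costs $(c+k)^{O(c+k)}$ per pair and hence $(c+k)^{O(c+k)}$ in total. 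Correctness follows because every proper embedding yields a combinatorial map that is, by construction, one of the enumerated candidates, and conversely every enumerated candidate that passes the consistency checks is realized by an actual proper embedding (one reconstructs $\Gamma'$ on $\surf$ from its gem, then glues according to the isolated-segment and identification data).

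The main obstacle I anticipate is the bookkeeping in the bound "$G'$ has $O(c+k)$ vertices and edges and $O(c+k)$ face-complexity," together with verifying that a candidate gem/flag structure genuinely encodes a surface embedding of the right topology and that the identifications rebuild $\complex$; this is exactly where Lemma~\ref{L:homeo-2complex} is invoked, and where one must be careful that non-cellular faces are handled (each face carries its own genus and boundary-component data, and the per-face Euler contributions must sum correctly). None of this is deep, but it is the part that requires genuine care rather than a one-line estimate; the pure counting of rotation systems and involutions on $O(c+k)$ flags is the routine source of the $(c+k)^{O(c+k)}$ bound.
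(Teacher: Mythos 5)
Your proposal is correct and follows essentially the same route as the paper: a brute-force enumeration of all bounded-size combinatorial maps (rotation-system/flag data plus per-face topology for non-cellular faces, isolated-segment layouts, and singular-point identifications), followed by pruning via the vertex/edge count and the homeomorphism test of Lemma~\ref{L:homeo-2complex}, and deduplication by pairwise map isomorphism. The only cosmetic difference is that the paper generates non-cellular maps by first enumerating cellular maps on surfaces without boundary and then merging faces and adding genus, whereas you enumerate the annotated flag structures directly; both yield the same $(c+k)^{O(c+k)}$ bound.
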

\begin{proof}
  The strategy is the following.  In a first step, we enumerate a set of proper graph embeddings on some 2-complexes, which necessarily contains all the desired combinatorial maps.  In a second step, we prune this set to keep only the desired combinatorial maps, by eliminating those that contains too many vertices or edges, or that correspond to an embedding on a 2-complex not homeomorphic to~$\complex$.

  \medskip
 
  \emph{First step.}  Let $\Gamma$ be a proper embedding of a graph with at most $k$~vertices and $k$~edges on~$\complex$.  Let $\surf$ be the detached surface associated to~$\complex$; this surface is possibly disconnected and has genus at most~$c$.  The image of~$\Gamma$ on~$\surf$ is a graph with at most $k+c$~vertices and $k$~edges.
  
  We first enumerate, in a possibly redundant way, the set~$M_1$ of combinatorial maps of cellular graph embeddings with at most $k+c$~vertices and $k$~edges on a possibly disconnected surface without boundary.  There are $(c+k)^{O(c+k)}$ such combinatorial maps, which can be enumerated in $(c+k)^{O(c+k)}$ time, for example using rotation systems.  Some vertices may be isolated, if the corresponding connected component of~$\surf$ is a sphere.
  
  By simplifying the surface, every graph embedding can be transformed into a cellular graph embedding: remove each face and paste a disk to each cycle of the graph that was a boundary component of a face.  Conversely, every (possibly non-cellular) graph is obtained from some cellular one by (1) connecting some faces together (creating a face of genus zero with several boundary components) and (2) adding some genus and non-orientability to some faces.  So, for each map in~$M_1$, we perform all these operations in all possible ways, by putting genus at most~$c$ in each face.  For each such map in~$M_1$, there are $(c+k)^{O(c+k)}$ possibilities.  We thus obtain, in $(c+k)^{O(c+k)}$ time, a set $M_2$ of $(c+k)^{O(c+k)}$ combinatorial maps on surfaces, and the set~$M_2$ contains all combinatorial maps of (possibly non-cellular) graph embeddings on surfaces of genus at most~$c$.
  
  Finally, we add at most~$c$ isolated edges, choose how endpoints of these isolated edges and vertices of the embedding on the detached surface are identified, and decide how each isolated edge is covered by the embedding. There are $(c+k)^{O(c+k)}$ ways to do this.  We thus have computed, in $(c+k)^{O(c+k)}$ time, a set~$M$ of $(c+k)^{O(c+k)}$ combinatorial maps of graphs on 2-complexes, which contains all the combinatorial maps indicated in the statement of the lemma.

  \emph{Second step.}  First, we easily discard the combinatorial maps in~$M$ containing more than $k$~vertices or $k$~edges.  Then, we discard the maps in~$M$ corresponding to a 2-complex different from~$\complex$.  For this purpose, for each map~$m$ in~$M$, we iteratively remove the edges of the graph embedding, preserving the underlying 2-complex.  When removing an edge from the detached surface, the topology of the incident face(s) change; we preserve this information.  Finally, we remove every isolated vertex that does not lie on a singular vertex of the 2-complex.  The data structure that we have now is essentially the one that is described in Section~\ref{S:datastruct-2complexes}; we can thus easily decide whether that 2-complex is homeomorphic to~$\complex$ (Lemma~\ref{L:homeo-2complex}), and discard~$m$ if and only if it is not the case.
  
  Finally, and although this is not strictly needed, we can easily remove the duplicates in the combinatorial maps, by testing pairwise isomorphism between these maps.
\end{proof}

\subsection{Graphs embeddable on a fixed 2-complex have bounded genus}

\begin{figure}
\centering
\includegraphics[width=.9\linewidth]{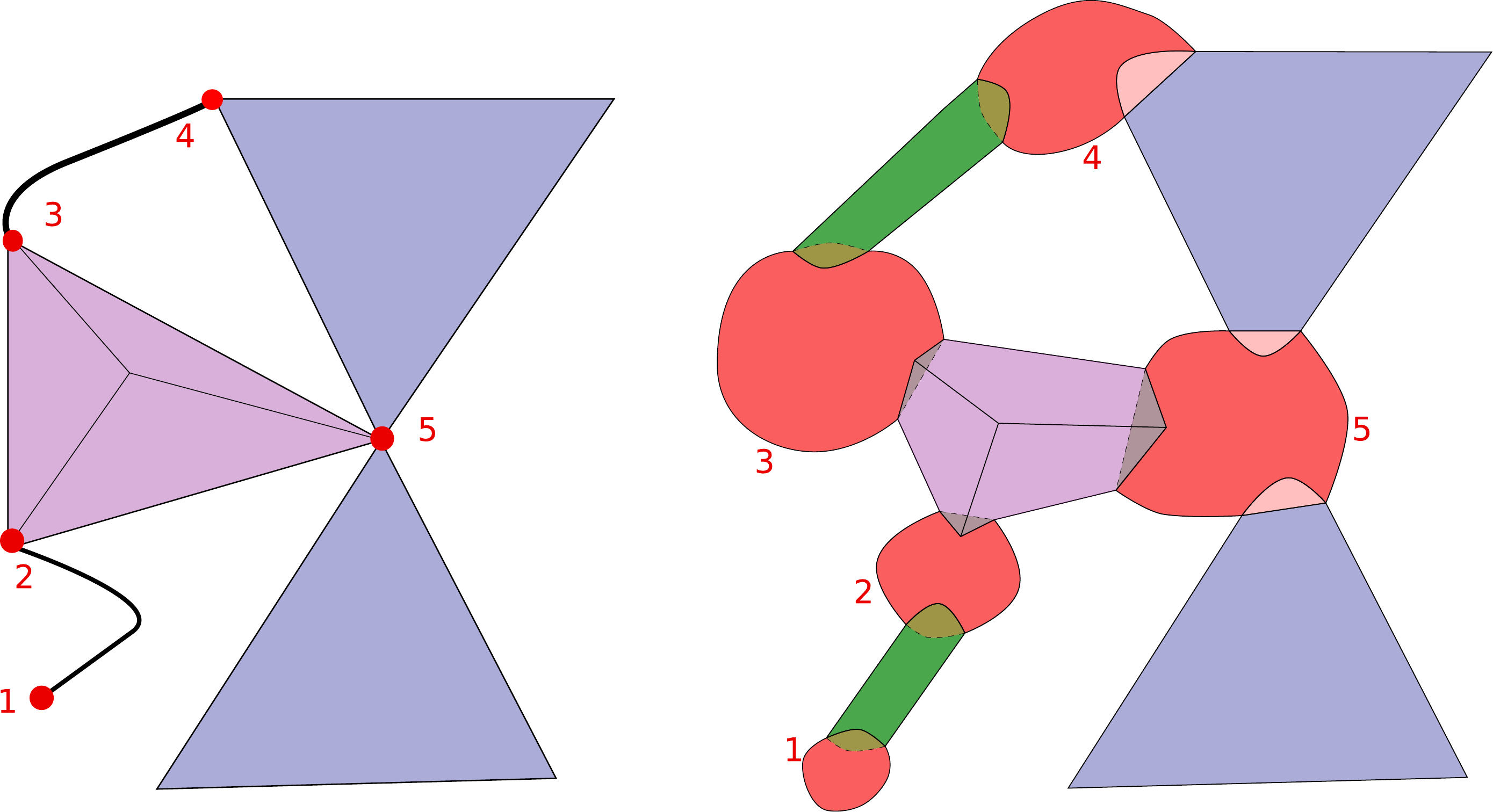}
\caption{On the left: The same 2-complex as Figure~\ref{F:Detached}. On the right: the corresponding surface constructed in Lemma \ref{L:genus-oversurface}.}
\label{F:Supersurface}
\end{figure}

\begin{lemma}\label{L:genus-oversurface}
  Let $\complex$ be a 2-complex without 3-book.  Let $c$ be either the size of~$\complex$ or its number of simplices.  Every graph embeddable on~$\complex$ is embeddable on a surface of (Euler) genus at most $10c$.
\end{lemma}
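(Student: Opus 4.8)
The plan is to build an explicit surface $\surf$ of controlled genus into which $\complex$ embeds; once we have such an embedding, any graph embeddable on $\complex$ is embeddable on $\surf$ by composition. We use the structural description from Section~\ref{S:2complexes}: $\complex$ (without 3-book) is obtained from its detached surface $\surf_0$ (possibly disconnected, possibly with boundary) together with a finite set of isolated segments, by identifying finitely many finite subsets of points into singular points. First I would cap off each boundary component of $\surf_0$ with a disk; this produces a possibly disconnected surface without boundary whose total genus is unchanged, and into which $\surf_0$ embeds. Next I would connect the various connected components into a single surface: joining two components by a tube (removing a disk from each and gluing an annulus) adds a bounded amount of genus per operation, and there are at most $c$ components, so this costs at most $O(c)$ genus in total. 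The key remaining issue is to realize the singular-point identifications.

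To handle a singular point $p$ that has, say, $d_p$ occurrences among the components/boundaries of $\surf_0$ and endpoints of isolated segments, I would route all these occurrences to a common small region of the (already connected) surface and identify them there. Concretely, one can add a handle (costing genus $2$, or a single crosscap costing genus $1$, per extra occurrence) to bring each occurrence of $p$ into a common disk, then perform the point identification inside that disk; identifying $d_p$ marked points in a disk of a surface can be done at the cost of adding $O(d_p)$ genus, since each identification of two points can be realized by attaching a handle connecting small neighborhoods of the two points. The isolated segments themselves, once their endpoints are placed, can be drawn inside the surface (they are just arcs), so they contribute nothing to the genus beyond their endpoint identifications. Summing over all singular points, $\sum_p d_p$ is exactly the total number of marked points plus twice the number of isolated segments, hence $O(c)$; together with the $O(c)$ from connecting components and the genus of $\surf_0$ (which is at most $c$), this gives a surface of genus $O(c)$, and a careful accounting of the constants yields the bound $10c$.

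The main obstacle I anticipate is the bookkeeping of constants to land exactly at $10c$ rather than some larger $O(c)$: one must charge handles carefully to components, boundary components, isolated segments, and marked points, making sure no feature is double-counted, and checking that the two definitions of ``size'' (number of simplices versus the combinatorial size of the data structure) are both comfortably dominated. A secondary point to verify is that the identifications can genuinely be realized by the claimed local surgeries, i.e.\ that attaching a handle between two small disks around two marked points produces precisely the point-identification of $\complex$ up to homeomorphism in a neighborhood; this is intuitively clear (the handle provides the ``wormhole'' connecting the two sheets at $p$) but deserves an explicit local model. Given the earlier structural results in Section~\ref{S:2complexes}, both of these are routine rather than deep, so the proof should go through as sketched.
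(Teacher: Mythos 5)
There is a genuine gap, and it sits at the very first step of your plan: you propose to build a surface~$\surf$ \emph{into which $\complex$ embeds} and then conclude by composition. But a 2-complex with singular points in general does not embed into \emph{any} surface, so no such $\surf$ exists in exactly the cases the lemma is about. Concretely, if a singular point~$p$ has two link components each containing a triangle (say two cones meeting at~$p$), an injective continuous map of~$\complex$ into a surface is impossible: by invariance of domain the image of the interior of each 2-dimensional sheet is open, so the two open images would overlap near the image of~$p$, contradicting injectivity; the same argument rules out an isolated segment attached to an interior point of the detached surface. Relatedly, your key local surgery is not what you need it to be: attaching a handle between small disks around two marked points does \emph{not} realize the identification of those points --- the result is still a surface in which the two points are distinct, and if you then actually identify them you leave the category of surfaces. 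So the ``wormhole'' picture, which you flag as a point to verify, is precisely where the argument breaks.

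The statement can still be proved along a nearby but weaker line, which is what the paper does: you do not need $\complex\hookrightarrow\surf$, only a surface~$\surf$ of genus $O(c)$ such that every \emph{graph} embeddable on~$\complex$ is embeddable on~$\surf$. The paper builds~$\surf$ by replacing each isolated segment with a cylinder and replacing a small neighborhood of each singular point~$p$ by a sphere with one boundary circle per link component at~$p$, gluing each link component (cone, corner, or cylinder end) to its own boundary circle; then, given a graph~$G$ embedded on~$\complex$, one re-routes $G$ locally: since $G$ is embedded injectively, it passes through~$p$ at most once, so the star of edge-ends at~$p$ can be redrawn across the added sphere, and each cylinder is traversed along a single path. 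The genus bound $10c$ then comes from charging the added spheres and cylinders to the link components and isolated segments (Euler's formula gives total genus of the detached surface at most~$c$). Your bookkeeping of constants is a secondary concern; the essential repair is to abandon the ``embed $\complex$ and compose'' framing and instead modify the graph embedding near the singular points.
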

\begin{proof}
  The strategy is to construct a surface~$\surf$ of genus at most~$10c$ such that every graph embeddable on~$\complex$ is also embeddable on~$\surf$.  The surface~$\surf$ is obtained by replacing every isolated edge of~$\complex$ with a cylinder, and modifying the structure of the 2-complex in the neighborhood of each singular vertex to make it surface-like; see Figure~\ref{F:Supersurface}.
  
  In more detail, we can obviously assume that $\complex$ has no isolated vertex.  First, we replace every isolated edge of~$\complex$ with a cylinder.  Then, for every singular vertex~$v$, we do the following.  We remove a small neighborhood of~$v$.  We create a sphere with $k$~boundary components, where $k$ is the number of link components at~$v$.  Finally, we attach a link component to each of the $k$ boundary components of the sphere:
  \begin{itemize}
      \item for each link component that was a cone, a small neighborhood of~$v$ was removed, with boundary a circle; we attach that circle bijectively to a boundary component of the sphere;
      \item for each link component that was a corner, a small neighborhood of~$v$ was removed, with boundary a segment; we attach that segment to a part of a boundary component of the sphere;
      \item for each link component that was an isolated edge, the isolated edge was replaced with a cylinder; we attach the corresponding boundary component of that cylinder bijectively to a boundary component of the sphere.
  \end{itemize}
  
  (We remark that there are several ways to perform these operations, depending on the orientation of the gluings; any choice will do.)  The resulting surface~$\surf$, which is possibly non-connected and possibly with boundary, has genus at most~$10c$.  Indeed, if $c$ is the size of~$\complex$, this follows from Euler's formula (intuitively, the number of ``handles'' created is at most the number of link components).  If $c$ is the number of simplices of~$\complex$, it follows from the fact that the total genus of the detached surface is at most~$c$, again by Euler's formula, and from the fact that each isolated edge or triangle contributes to an increase of at most six for the genus in the construction above.
  
  Consider an embedding of a graph~$G$ on~$\complex$.  It is not hard to transform that embedding into an embedding of~$G$ on~$\surf$: Each cylinder replacing an isolated edge is used only along a single path connecting its two boundary components; if a singular vertex~$v$ is used by the embedding of~$G$ on~$\complex$, we can locally modify the embedding to accommodate the local change from~$\complex$ to~$\surf$ at~$v$ (see again Figure \ref{F:Supersurface}).  Of course, if $G$ is embeddable on~$\surf$, it is embeddable on some connected surface of genus at most~$10c$.
\end{proof}

As a side remark, it follows from the above lemma and from Euler's formula that, if $G$ is simple and embeds on~$\complex$, then its number $v$ of vertices and $e$ of edges satisfy  $e\le 3v+30c$.  We will actually not use this fact, in particular because we do not assume $G$ to be simple, and because, for simplicity, we chose to express all our results in terms of $n$, the total number of vertices and edges of~$G$, and not in terms of the number of vertices only.

\section{Partitioning graphs}\label{S:partitioning}

Let $\complex$ be a 2-complex and $G$ a graph, which satisfy the properties of Proposition~\ref{P:preproc}.  In this section, we lay the structural foundations of the dynamic programming algorithm, described in the next section (Proposition~\ref{P:algo}).  The goal, in this section and the following one, is to obtain an algorithm that takes as input $\complex$ and $G$ and, in time FPT in the size of~$\complex$ and the branchwidth of~$G$, reports correctly one of the following two statements:
\begin{itemize}
  \item $G$ has no proper cellular embedding on~$\complex$,
  \item $G$ has an embedding on~$\complex$,
\end{itemize}

This algorithm uses dynamic programming on a rooted branch decomposition of~$G$.  When processing a node of the rooted branch decomposition, it considers embeddings of the subgraph of~$G$ induced by the edges in the leaves of the subtree rooted at that node in a region of~$\complex$.  This region will be delimited by a \emph{partitioning graph} embedded on~$\complex$.  Our dynamic program will roughly guess the partitioning graph at each node of the rooted branch decomposition.  For this purpose, we need that, if $G$ has a proper cellular embedding on~$\complex$, it has such an embedding that is \emph{sparse}: at each node of the rooted branch decomposition of~$G$, the partitioning graph corresponding to the embedding of the induced subgraph is small (its size is upper-bounded by a function of the branchwidth of~$G$ and of the size of~$\complex$).  The goal of this section is to prove that this is indeed the case.

Let $(E_1,\ldots,E_k)$ be a partition of the edge set~$E(G)$ of~$G$.  (We will only use the cases $k=2$ and $k=3$.)  The \emphdef{middle set} $\mmid(E_1,\ldots,E_k)$ of $(E_1,\ldots,E_k)$ is the set of vertices of~$G$ whose incident edges belong to at least two sets~$E_i$.

Let $\Gamma$ be a proper cellular embedding of~$G$ on~$\complex$.  Since $\Gamma$ is cellular, every boundary of~$\complex$ is incident to at least one vertex of~$\Gamma$.  The \emphdef{partitioning graph}~$\Pi(\Gamma,E_1,\ldots,E_k)$ (or more concisely~$\Pi$) associated to~$\Gamma$ and $(E_1,\ldots,E_k)$ is a graph properly embedded on~$\complex$ (but possibly non-cellularly), with labels on its faces, defined as follows:
\begin{figure}
\centering
\footnotesize
\def\svgwidth{.7\linewidth}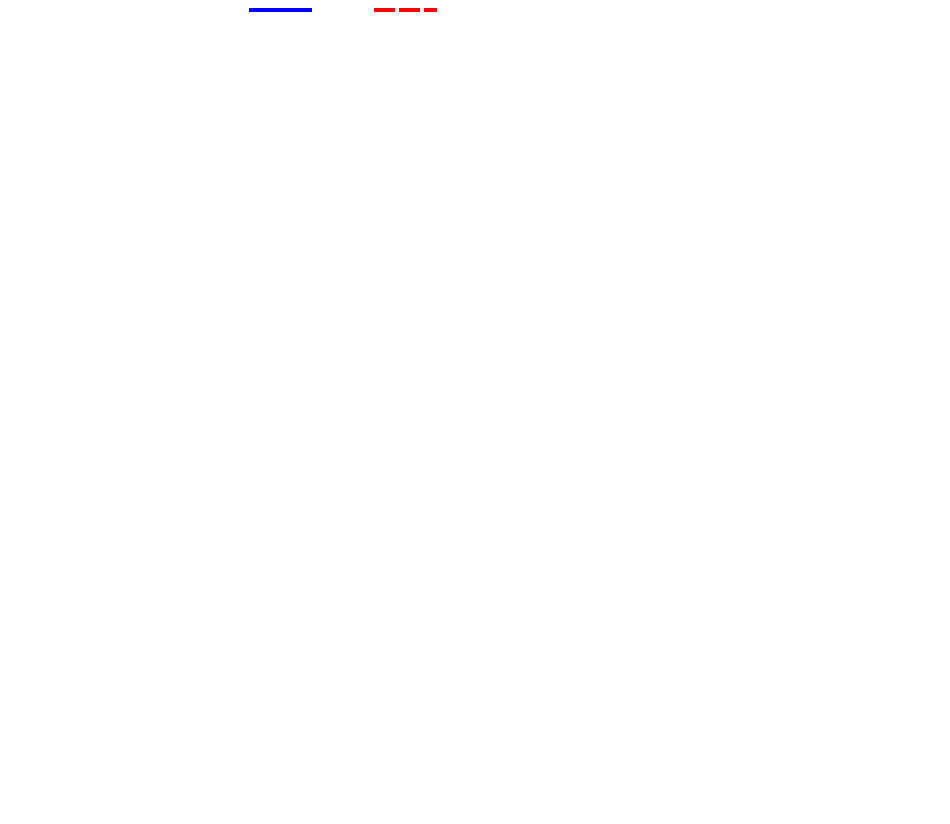
\caption{Construction of the partitioning graph $\Pi=\Pi(\Gamma,E_1,E_2)$, for three choices of the partition $(E_1,E_2)$ of the same embedding~$\Gamma$.  Only a part of the 2-complex~$\complex$ is shown, with a boundary at the upper part, and without singular vertex.  Left: The graph embeddings $\Gamma$ (in thick lines) and $\Pi$ (in thin lines).  Right: The sole graph~$\Pi$, together with the labelling of its faces.}
\label{F:partitioning}
\end{figure}
\begin{figure}
\centering
\footnotesize
\def\svgwidth{.7\linewidth}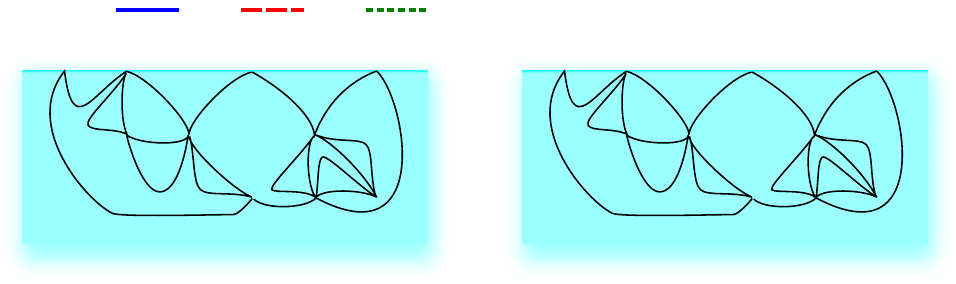
\caption{The partitioning graph $\Pi=\Pi(\Gamma,E_1,E_2,E_3)$.  Left: The graph embeddings $\Gamma$ (in thick lines) and $\Pi$ (in thin lines).  Right: The sole graph~$\Pi$, together with the labelling of its faces.}
\label{F:partitioning-new-node}
\end{figure}

\begin{itemize}
    \item The vertex set of~$\Pi$ is the union of the singular vertices of~$\complex$ and of (the images under~$\Gamma$ of) $\mmid(E_1,\ldots,E_k)$.  Each vertex of~$\Pi$ is labelled~$v\in\mmid(E_1,\ldots,E_k)$ if it is mapped, on~$\complex$, to the same location as~$v$ under~$\Gamma$; the other vertices of~$\Pi$ are unlabelled.
    
    \item The relative interiors of the edges of~$\Pi$ are disjoint from the edges of~$\Gamma$ and from the isolated edges of~$\complex$. Let $f$ be a face of~$\Gamma$; because $\Gamma$ is cellular, $f$ is homeomorphic to an open disk plus possibly some points of the boundary of~$\complex$. Let us describe the edges of~$\Pi$ inside~$f$.

    If, for some~$i\in\{1,\ldots,k\}$, the boundary of~$f$ is comprised only of edges of~$\Gamma$ that lie in a single set~$E_i$ (together with their endpoints), then $\Pi$ contains no edge inside~$f$.  Similarly, if the boundary of~$f$ is entirely included in the boundary of~$\complex$, then $\Pi$ contains no edge inside~$f$.
    
    Otherwise, the boundary of~$f$ is a succession of edges of $E_1$, $E_2$, \ldots, $E_k$, and of pieces of the boundary of~$\complex$ (possibly including singular vertices).  The edges of~$\Pi$ inside~$f$ run along the boundary of~$f$; for each~$i\in\{1,\ldots,k\}$, for each (maximal) group of consecutive edges in $E_i$ along the boundary of~$f$, there is an edge of~$\Pi$ that runs along this group, with endpoints the corresponding vertices on the boundary of~$f$ (see Figures~\ref{F:partitioning} and~\ref{F:partitioning-new-node}).  These vertices either are in $\mmid(E_1,\ldots,E_k)$, or lie on the boundary of~$\complex$ (and thus on singular vertices of~$\complex$, because $\Gamma$ is proper).   It follows from the construction that $\Gamma$ and~$\Pi$ intersect only at common vertices.
 
    \item Each face of~$\Pi$ is labelled by an integer in~$\{0,\ldots,k\}$ as follows: faces of~$\Pi$ containing an edge in~$E_i$ are labelled~$i$, and the other ones are labelled~$0$.  Since $\Pi$ contains the singular vertices of~$\complex$ and $\mmid(E_1,\ldots,E_k)$, and by the construction of the edges of~$\Pi$, this labelling is well defined.  We remark that this definition is valid also for faces of~$\Pi$ that are parts of an isolated edge of~$\complex$.
\end{itemize}

Henceforth, we fix a rooted branch decomposition~$\BB$ of~$G$.  Every arc~$\alpha$ of~$\BB$ naturally partitions~$E(G)$ into two parts $E_1$ and~$E_2$; this partition is the \emphdef{edge partition associated to~$\alpha$}.  Although this will play a role only in the next section, we choose $E_1$ and $E_2$ so that $E_1$ is on the side of the root of the branch decomposition~$\BB$.  Recall that $\Gamma$ is a proper and cellular embedding of~$G$ on~$\complex$; we let $\Pi(\Gamma,\alpha)$ be $\Pi(\Gamma,E_1,E_2)$.  Similarly, every internal node~$\nu$ of~$\BB$ naturally partitions~$E(G)$ into three parts $E_1$, $E_2$, and~$E_3$, in which $E_1$ is on the side of the edge incident with~$\nu$ that is the closest to the root, $E_2$ corresponds to the edges of~$G$ that are leaves of the subtree rooted at the first child of~$\nu$, and $E_3$ corresponds to the edges of~$G$ that are leaves of the subtree rooted at the second child of~$\nu$.  This partition is the \emphdef{edge partition associated to~$\nu$}; we let $\Pi(\Gamma,\nu)$ be $\Pi(\Gamma,E_1,E_2,E_3)$.

We say that $\Gamma$ is \emphdef{sparse} (with respect to~$\BB$) if the following conditions hold, letting $c$ be the size of~$\complex$ and $w$ the width of~$\BB$:
\begin{itemize}
    \item For each arc~$\alpha$ of~$\BB$, the graph $\Pi(\Gamma,\alpha)$ has at most $75c+27w$ edges;
    \item similarly, for each internal node~$\nu$ of~$\BB$, the graph $\Pi(\Gamma,\nu)$ has at most $3(75c+27w)/2$ edges.
\end{itemize}

The result of this section is the following.
\begin{proposition}\label{P:sparse}
  Let $\complex$ be a 2-complex and $G$ a graph, which satisfy the properties of Proposition~\ref{P:preproc}.  Let $\BB$ be a rooted branch decomposition of~$G$.  Assume that $G$~has a proper cellular embedding on~$\complex$.  Then it has a proper cellular embedding~$\Gamma$ on~$\complex$ that is sparse (with respect to~$\BB$).
\end{proposition}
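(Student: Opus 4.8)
The plan is to start from an arbitrary proper cellular embedding $\Gamma_0$ of $G$ on $\complex$ and modify it, through a sequence of local surgeries, into an embedding $\Gamma$ that is sparse with respect to~$\BB$. The key observation is that the partitioning graph $\Pi(\Gamma,\alpha)$ is essentially determined by two things: the singular points of~$\complex$ (a fixed contribution of size $O(c)$), and the way the edges of $\hat E_1$ and $\hat E_2$ alternate along the boundaries of the faces of~$\hat\Gamma$. The number of edges of $\Pi(\Gamma,\alpha)$ is, up to an additive $O(c)$ term coming from the singular points and the boundary-walk of~$\complex$, controlled by the total number of ``alternations'' between the two edge classes, summed over all faces of~$\hat\Gamma$. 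So what one really wants is an embedding in which, for each arc~$\alpha$, this alternation count is small, and in fact can be bounded linearly in the width~$w$ of~$\BB$ plus a term linear in~$c$.

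The first step is to reduce, via Euler's formula, the task of bounding $|E(\Pi(\Gamma,\alpha))|$ to bounding the number of faces of~$\hat\Gamma$ whose boundary meets both edge classes, and within each such face the number of maximal monochromatic groups. I would argue that it suffices to choose $\Gamma$ so that the sub-embedding induced by each middle set is ``tight'': concretely, one can first put $\Gamma$ in a normal form where, around each vertex~$v$ in the middle set of~$\alpha$, the rotation is arranged so that the edges of $E_1$ at~$v$ are consecutive and the edges of $E_2$ at~$v$ are consecutive — this is possible by a local rearrangement of the rotation system at~$v$ that does not destroy cellularity (one re-routes the affected edge-ends through a small disk around~$v$), and it immediately caps the number of color-alternations attributable to~$v$ by a constant. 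Since the middle set has size at most~$w$, the total alternation count coming from middle-set vertices is $O(w)$. The remaining alternations can only occur along the boundary of~$\complex$ or at singular points, contributing $O(c)$. Counting faces and edges through Euler's formula on the detached surface then yields the bound $74c+26w$ (the precise constants being a routine bookkeeping exercise that I will not carry out here). For internal nodes~$\nu$, the three-part partition is handled identically, producing the factor~$3$ in the statement.

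The main subtlety — and I expect this to be the principal obstacle — is that the normal-form surgeries must be performed simultaneously for \emph{every} arc and node of~$\BB$, and a local modification that helps one arc could in principle worsen another. The resolution is that the surgery at a vertex~$v$ only depends on which edge-class each incident edge of~$G$ belongs to, and for a fixed~$v$ the partitions induced by the various arcs of~$\BB$ are nested (they come from a tree), so one can process the arcs of~$\BB$ in a root-to-leaf order and at each vertex refine its rotation compatibly with all coarser partitions seen so far; because the partitions form a laminar family on the edge set, a single rotation at each vertex can be chosen that simultaneously makes every relevant edge-class contiguous. A second point requiring care is that these rotation surgeries must preserve \emph{cellularity} and \emph{properness} of the embedding; this follows because each surgery is supported in a small disk neighborhood of a vertex of~$\Gamma$ (away from the boundary of~$\complex$ and from singular points, which are vertices of~$\Gamma$ and whose incident structure we leave untouched), and such a disk-supported re-routing of edge-ends changes the face structure only locally, keeping every face a disk (plus possibly boundary arcs of~$\complex$). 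Once the normal form is established, the edge-count bounds follow by the Euler-characteristic accounting sketched above, completing the proof of Proposition~\ref{P:sparse}.
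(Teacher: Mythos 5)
There is a genuine gap at the heart of your argument: the ``normal form'' step. You claim that around each middle-set vertex~$v$ one can rearrange the rotation so that the edges of $E_1$ are consecutive and those of $E_2$ are consecutive, ``by a local rearrangement of the rotation system at~$v$ \dots supported in a small disk neighborhood of~$v$,'' preserving the embedding on the same complex and its cellularity. This is false in general. Permuting the cyclic order of edge-ends at a vertex is not a disk-supported isotopy: within a small disk you cannot change the order in which the edge-ends exit the disk without crossings, so realizing a new rotation requires moving entire branches of~$G$ globally, and an arbitrary change of rotation system produces a cellular embedding into a \emph{different} surface (the face count, hence the Euler characteristic, changes). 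So the very step that would cap the number of $E_1/E_2$ alternations at each middle-set vertex by a constant cannot be carried out on~$\complex$; if it could, one would get a much stronger sparsity bound than the paper's, which is a hint that something is being given away for free. The laminarity observation you use to reconcile the different arcs is fine as far as it goes (the paper uses the same nestedness to get monotonicity across arcs), but it does not rescue the unrealizable local surgery.

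The paper's proof is built precisely around this obstruction. It only performs rotation changes (``simplifications'') that are witnessed by configurations of the partitioning graph $\Pi(\Gamma,\alpha)$ lying in disks: a monogon of~$\Pi$, whose interior contains a piece of~$\Gamma$ attached only at~$v$ and which can therefore be slid past an adjacent interval, or a chain of eight consecutive bigons, where the contents of two bigons can be exchanged. These are genuine surgeries of the embedding inside~$\complex$ that keep it proper and cellular, and a counting lemma (Lemma~\ref{L:swap-order}) shows they strictly decrease $|E(\Pi(\Gamma,E_1,E_2))|$ without increasing the count for nested partitions. After no such configuration remains, one does \emph{not} have contiguity at vertices; instead, the bound $74c+26w$ comes from removing the few remaining monogons and collapsing bigon chains and then applying the Euler-formula bound $|E|\le 3g+3|V|-6$ for graphs without monogons or bigons on a surface of genus $O(c)$ (Lemma~\ref{L:nogon}). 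In other words, the genus-dependent term $74c$ is not ``routine bookkeeping'' on top of your normal form; it is the substitute for the normal form you cannot achieve. To repair your proposal you would need to justify each rotation change by an actual ambient surgery on~$\complex$, which is exactly the monogon/bigon mechanism of the paper.
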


\subsection{Monogons and bigons}

A \emphdef{monogon} of a graph~$\Pi$ embedded on a 2-complex~$\complex$ is a face of~$\Pi$ that is an open disk whose boundary is a single edge of~$\Pi$ (a loop).  Similarly, a \emphdef{bigon} of~$\Pi$ is a face of~$\Pi$ that is an open disk whose boundary is the concatenation of two edges of~$\Pi$ (possibly the same edge appearing twice).  The following general lemma on graphs embedded on surfaces without monogons or bigons will be used below, and also in Section~\ref{S:irrelevant}; some particular cases have been used before~\cite[Lemma~2.1]{ccelw-scsh-08}.
\begin{lemma}\label{L:nogon}
  Let $\surf$ be a surface of genus~$g$ without boundary.  Let $\Pi$ be a graph embedded on~$\surf$, not necessarily cellularly.  Assume that $\Pi$ has no monogon or bigon.  Then $|E(\Pi)|\le\max\{0, 3g+3|V(\Pi)|-6\}$.
\end{lemma}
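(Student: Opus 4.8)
The plan is to reduce the statement to a counting argument via Euler's formula, using the absence of monogons and bigons to get a lower bound on the face sizes.

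First I would dispose of trivial cases: if $\Pi$ has no edge, the bound holds since $3g+3|V(\Pi)|-6 \ge 0$ whenever $|V(\Pi)|\ge 2$, and the $\max$ with $0$ handles $|V(\Pi)|\le 1$. I would also reduce to the case where $\Pi$ is connected and every vertex has degree at least one: a component with no edge contributes only to $|V(\Pi)|$ on the right-hand side, so deleting it only makes the inequality harder, hence we may assume no isolated vertices; for the connectivity reduction, note that adding edges between components (drawn through the surface) can only increase $|E(\Pi)|$, keeps $g$ and $|V(\Pi)|$ fixed, and does not create monogons or bigons if done carefully — alternatively, one argues componentwise and sums, being slightly careful with the additive constant. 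The cleanest route is probably: it suffices to prove the bound when $\Pi$ is connected and $|E(\Pi)|\ge 1$, then sum over components.

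For a connected $\Pi$ with at least one edge embedded on $\surf$ (genus $g$, no boundary), let $v=|V(\Pi)|$, $e=|E(\Pi)|$, and $f$ the number of faces. Since the embedding need not be cellular, I would apply Euler's formula in the form $v-e+f \ge 2-g$ (passing to a cellular embedding by adding edges only increases the left side after the substitution, or equivalently each face has genus and extra boundary components that only help). Now I count incidences between faces and edges: $\sum_{\text{faces } \phi} \ell(\phi) = 2e$, where $\ell(\phi)$ is the number of edge-sides on the boundary walk of $\phi$. The key point — this is the real content — is that the no-monogon, no-bigon hypothesis forces $\ell(\phi)\ge 3$ for every face whose boundary is a single closed disk; but a non-cellular face may have boundary walks that are not disks or may have $\ell(\phi)$ small for degenerate reasons (e.g. a "face" that is a whole component bounded by trivial structure). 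I expect the main obstacle to be handling these degenerate faces rigorously: a face bounded by a single loop that is non-contractible is \emph{not} a monogon (a monogon must be a disk), so $\ell=1$ is a priori possible, and similarly $\ell=2$ for an annular face bounded by two edges. The resolution is that such a face, being non-disk, can be used to \emph{decrease the genus}: cutting along a non-contractible loop reduces $g$, so one sets up an induction on $g$, or one bounds the number of such low-complexity faces by $O(g)$ and absorbs them into the $3g$ term.

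Concretely, I would argue: faces with $\ell(\phi)\le 2$ are necessarily non-cellular (by the hypothesis), and there are at most... — here one shows that each contributes a nontrivial amount of genus, so their number is at most $g$ (or $2g$), and each such face contributes at most $2$ to $2e = \sum \ell(\phi)$ while a cellular face contributes at least $3$. Plugging $f \le \frac{2e}{3} + (\text{correction of order } g)$ into $f \le e - v + 2 - g$ wait — into $f \ge 2 - g + e - v$, one gets $2-g+e-v \le f$, and combining with the face-size bound $2e \ge 3(f - O(g)) + O(g)$ i.e. $f \le \frac{2e}{3} + O(g)$, one derives $e \le 3v - 6 + 3g$ after rearranging, provided the constants in the $O(g)$ terms are tracked so the coefficient of $g$ comes out exactly $3$. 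The slicker alternative, which I would actually write up, is induction on $g$: if every face is a disk, the standard planar-type argument (generalized to genus $g$) gives $e \le 3v - 6 + 3g$ directly from $2e \ge 3f$ and Euler; if some face $\phi$ is not a disk, its boundary contains a non-contractible cycle or $\phi$ has positive genus or $\ge 2$ boundary components, and in each case one can cut the surface to reduce $g$ by at least $1$ (possibly splitting into two pieces whose genera sum to $g-1$ or $g-2$), apply induction to the piece(s), and check the bound is preserved — the additive $-6$ and the factor $3$ survive the cut because removing a simple cut curve changes $v$ by $O(1)$ and the linear arithmetic works out. The main obstacle remains the bookkeeping of this cutting step, especially ensuring that cutting does not create new monogons or bigons, or arguing that one does not need it to (one can cut \emph{inside} the non-disk face, away from $\Pi$, so $\Pi$ is unchanged as an abstract graph and its new embedding still has no monogon or bigon).
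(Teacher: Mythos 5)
Your two proposed routes both hinge on claims that are false, so as written the proof does not go through. The route you say you would actually write up (induction on the genus, cutting along a curve inside a non-disk face) rests on the assertion that cutting away from $\Pi$ and capping with disks leaves an embedding with no monogon or bigon. Counterexample: let $\Pi$ be a single vertex with one loop edge embedded as a non-separating curve on the torus (Euler genus $g=2$). Its unique face is an annulus, so $\Pi$ has no monogon or bigon, and the bound to prove is $1\le 3\cdot 2+3\cdot 1-6=3$. Cutting along the core of that annular face and capping yields the sphere, where the same loop now bounds two monogons; worse, the inequality you would need at the lower genus, $1\le\max\{0,\,3\cdot 0+3\cdot 1-6\}=0$, is simply false there. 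So the inductive step cannot be carried out as described: the genus consumed by such a face must be charged against the edges it supports before being discarded, which is precisely the bookkeeping you deferred.

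The fallback counting route is also flawed as stated: the number of faces with at most two edge-sides is not $O(g)$. Take $k$ pairwise disjoint, pairwise parallel non-separating loops on the torus (each a vertex with a loop edge): all $k$ faces are annuli of degree two, with $g=2$ fixed and $k$ arbitrary, so your inequality of the form $2e\ge 3(f-O(g))$ fails; such faces need not ``contribute genus'' at all. (The componentwise reduction has a related problem: each component cannot be granted the full genus $g$, and summing $3g+3v_i-6$ over components overshoots.) The counting idea can be repaired, but by a different mechanism: use the refined Euler relation $v-e+\sum_{\phi}\chi(\phi)=2-g$, where $\chi(\phi)$ is the Euler characteristic of the compact surface with boundary underlying the face $\phi$; every non-disk face has $\chi(\phi)\le 0$ and so may simply be dropped, while every disk face has degree at least $3$ by the no-monogon/no-bigon hypothesis, whence $2e\ge 3|D|\ge 3(e-v+2-g)$ for the set $D$ of disk faces, which rearranges to the stated bound. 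The paper argues differently again: it adds edges (no new vertices) as long as no monogon or bigon is created, shows by a case analysis on the topology of a face that in the saturated embedding every face is a triangle apart from four degenerate configurations, and then concludes from $3\tau=2|E|$ and Euler's formula.
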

\begin{proof}
  We begin by adding edges to~$\Pi$ as long as it is possible to do so, without introducing any new vertex, monogon, or bigon.  Let $\Pi'$ be the resulting embedded graph.  We claim that every face of~$\Pi'$ is a triangle (a disk incident with three edges), except in the following cases:
  \begin{enumerate}
      \item $\Pi'$ is the empty graph;
      \item $\surf$ is a sphere, and $\Pi'$ has two vertices and no edge;
      \item $\surf$ is a sphere, and $\Pi'$ has a single vertex and no edge;
      \item $\surf$ is a projective plane, and $\Pi'$ has a single vertex and no edge.
  \end{enumerate}
  Indeed, let $f$ be a face of~$\Pi'$.  If $f$ has no boundary component, then since $\surf$ is connected, we are in Case~1 (an isolated vertex would account for a boundary component).  Assume that $f$ has at least two boundary components.  We add an edge in~$f$ connecting vertices on these two boundary components.  This cannot create any monogon or bigon, except if the two boundary components are both reduced to a single vertex and $\surf$ is a sphere (Case~2).  So we can assume that $f$~has a single boundary component.  If $f$ is orientable and has genus zero, then either we are in Case~3, or $\surf$~is a disk of degree at least four, in which case we can add an edge to split it into smaller disks without creating any monogon or bigon.  If $f$ is orientable and has Euler genus at least two (i.e., orientable genus at least one), we can add an edge that forms a non-separating arc (relatively to the boundary) in~$f$; it does not form any monogon or bigon.  If $f$ is non-orientable and has (non-orientable) genus at least two, we can add an edge that forms a separating arc in~$f$, cutting that surface into two non-orientable surfaces of genus at least one; it does not form any monogon or bigon.  Finally, if $f$ is non-orientable and has (non-orientable) genus one, either the boundary component is reduced to a single vertex, so we are in Case~4, or this face has degree at least one; then it is a M\"obius band with at least one vertex and one edge on its boundary, and we can add a loop that is a non-contractible arc (relatively to the boundary) in~$f$, without forming any monogon or bigon.  The only remaining possibility is that $f$ is a triangle.
  
  It is clear that the statement of the lemma holds whenever we are in one of the four above cases.  So we can assume without loss of generality that each face of~$\Pi'$ is a triangle.  Since $V(\Pi)=V(\Pi')$ and $|E(\Pi)|\le|E(\Pi')|$, it suffices to prove the result for~$\Pi'$ instead of~$\Pi$.  Double-counting the incidences between edges and faces implies that the number of triangles~$\tau$ satisfies $3\tau=2|E(\Pi')|$; plugging this into Euler's formula implies that $|V(\Pi')|-|E(\Pi')|/3=2-g$, so $|E(\Pi')|=3g+3|V(\Pi')|-6$, as desired.
  \end{proof}

\subsection{Vertex simplifications}

\begin{figure}
  \centering
  \def\svgwidth{.7\linewidth}
\begingroup%
  \makeatletter%
  \providecommand\color[2][]{%
    \errmessage{(Inkscape) Color is used for the text in Inkscape, but the package 'color.sty' is not loaded}%
    \renewcommand\color[2][]{}%
  }%
  \providecommand\transparent[1]{%
    \errmessage{(Inkscape) Transparency is used (non-zero) for the text in Inkscape, but the package 'transparent.sty' is not loaded}%
    \renewcommand\transparent[1]{}%
  }%
  \providecommand\rotatebox[2]{#2}%
  \newcommand*\fsize{\dimexpr\f@size pt\relax}%
  \newcommand*\lineheight[1]{\fontsize{\fsize}{#1\fsize}\selectfont}%
  \ifx\svgwidth\undefined%
    \setlength{\unitlength}{346.42762559bp}%
    \ifx\svgscale\undefined%
      \relax%
    \else%
      \setlength{\unitlength}{\unitlength * \real{\svgscale}}%
    \fi%
  \else%
    \setlength{\unitlength}{\svgwidth}%
  \fi%
  \global\let\svgwidth\undefined%
  \global\let\svgscale\undefined%
  \makeatother%
  \begin{picture}(1,0.45582368)%
    \lineheight{1}%
    \setlength\tabcolsep{0pt}%
    \put(0,0){\includegraphics[width=\unitlength,page=1]{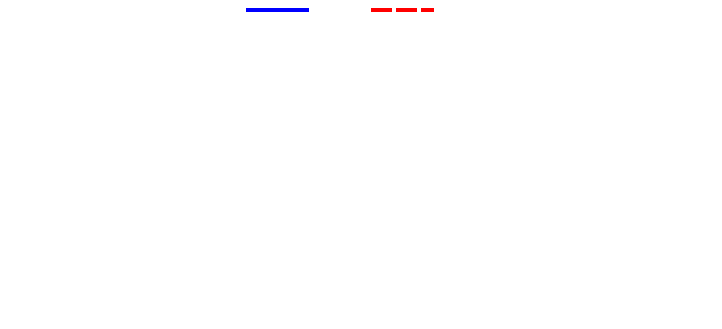}}%
    \put(0.4363959,0.44228424){\color[rgb]{0,0,0}\makebox(0,0)[lt]{\lineheight{1.5}\smash{\begin{tabular}[t]{l}$E_1$\end{tabular}}}}%
    \put(0.60959242,0.44228424){\color[rgb]{0,0,0}\makebox(0,0)[lt]{\lineheight{1.5}\smash{\begin{tabular}[t]{l}$E_2$\end{tabular}}}}%
    \put(0,0){\includegraphics[width=\unitlength,page=2]{simplif.pdf}}%
  \end{picture}%
\endgroup%

    \caption{A simplification.  Left: A vertex of an embedding~$\Gamma$ of~$G$, on the boundary of~$\complex$, incident with eight intervals.  The exterior intervals are drawn in thick lines.  Right: The result of the simplification that exchanges the intervals marked with an arrow.  The number of intervals strictly decreases.}
  \label{F:simplif}
  \end{figure}

The proof of Proposition~\ref{P:sparse} starts with any proper cellular embedding of~$\Gamma$, and iteratively changes the ordering of edges around vertices in a specific way.  Let $(E_1,E_2)$ be a partition of $E(G)$, let $v$ be a vertex of~$G$, and let $C$ be a link component at~$v$ (if the image of~$v$ under~$\Gamma$ is a singular vertex, there may be several such link components).  We restrict our attention to the edges of~$\Gamma$ incident to~$v$ and belonging to~$C$, ordered cyclically or linearly as follows.  If, in~$C$, vertex~$v$ is not incident to a boundary of~$\complex$, we consider these edges in cyclic ordering around~$v$.  Otherwise, vertex~$v$ is incident to a single boundary of~$\complex$, and we break this cyclic ordering at the boundary to obtain a linear ordering.

For $i=1,2$, an \emphdef{interval} (at~$v$, relatively to~$(E_1, E_2)$) is a maximal contiguous subsequence of edges in this (cyclic or linear) ordering that all belong to~$E_i$; the interval is labelled~$i$.  This interval is \emphdef{exterior} if the ordering is linear (because $v$ is incident to a boundary of~$\complex$) and it is the first one or the last one in this linear order; otherwise, the interval is \emphdef{interior}.  \emphdef{Simplifying}~$v$ (with respect to~$(E_1,E_2)$) means changing the (cyclic or linear) ordering of the edges of~$\Gamma$ incident to~$v$ in~$C$ by one of the two following operations (Figure \ref{F:simplif}):
\begin{enumerate}
\item either exchanging two consecutive interior intervals in that ordering, in such a way that the ordering of the edges in each interval is preserved; this operation is allowed only if $v$ is incident to at least four intervals;
\item or performing the previous operation twice, on two disjoint pairs of consecutive interior intervals in that ordering; this is allowed only if $v$ is incident to at least six intervals.
\end{enumerate}

We will rely on the following lemma.

\begin{lemma}\label{L:swap-order}
  Let $\Gamma$ be a proper cellular embedding of~$G$ on~$\complex$, and let $(E_1,E_2)$ be a partition of~$E(G)$.  Let $\Gamma'$ be another proper cellular embedding of~$G$, obtained from~$\Gamma$ by simplifying one or two vertices with respect to~$(E_1,E_2)$, while keeping the orderings of the other vertices unchanged.  Then:
  \begin{enumerate}
      \item $|E(\Pi(\Gamma',E_1,E_2))|<|E(\Pi(\Gamma,E_1,E_2))|$;
      \item for each partition $(\tilde E_1,\tilde E_2)$ of~$E(G)$ such that $\tilde E_i\subseteq E_j$ for some $i,j\in\{1,2\}$, we have $|E(\Pi(\Gamma',\tilde E_1,\tilde E_2))|\le|E(\Pi(\Gamma,\tilde E_1,\tilde E_2))|$.
  \end{enumerate}
\end{lemma}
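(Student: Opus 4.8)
The plan is to analyze, face by face, how a single simplification operation at a vertex~$v$ affects the partitioning graph, and then to combine the two individual swaps in case~(2). Recall that an edge of $\Pi(\Gamma,E_1,E_2)$ running along the boundary of a face~$f$ of~$\hat\Gamma$ corresponds exactly to a maximal block of consecutive edges of $\hat E_i$ along~$\partial f$; equivalently, the number of edges of~$\Pi$ inside~$f$ equals the number of ``switches'' between $\hat E_1$ and $\hat E_2$ as one traverses the boundary walk of~$f$ (with the convention that a face whose boundary lies entirely in one~$\hat E_i$ contributes zero). So $|E(\Pi(\Gamma,E_1,E_2))|$ is, up to the bookkeeping of which faces are monochromatic, the total number of such switches summed over all faces of~$\hat\Gamma$; since only the vertex~$v$ (or the two vertices) have their rotation changed, only the faces of~$\hat\Gamma$ incident to~$v$ are affected.

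For part~(1), first I would handle a single swap of type~(1): exchanging two consecutive intervals $I$ and~$I'$ in the cyclic ordering around~$v$ inside one link component~$C$. Locally around~$v$, the cyclic sequence of intervals had labels $\ldots, a, I, I', b, \ldots$ with $I$ labelled, say, $1$ and $I'$ labelled~$2$ (they are consecutive and distinct labels, since consecutive intervals always alternate). Swapping them to $\ldots, a, I', I, b, \ldots$: the four ``corners'' at~$v$ between consecutive intervals get rerouted, and I would check directly on the face boundary walks that this operation strictly decreases the total switch count. The key point is that after the swap, the interval $I$ becomes adjacent to~$b$; since $I$ is labelled~$1$, if $b$ is also labelled~$1$ then $I$ and (the continuation of)~$b$ merge into a single interval on the relevant face boundary, killing at least one switch; the symmetric thing can happen with $I'$ and~$a$. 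One must verify that no switch is created elsewhere: the only boundary walks that change are those passing through the corners at~$v$, and because we merely permuted two blocks of the same local cyclic sequence without splitting any interval, the multiset of interval-labels around~$v$ is unchanged and the only new adjacencies of intervals are the four listed ones; a short case check on the labels of $a$ and~$b$ (they cannot both differ in label from their new neighbours without $I,I'$ already having been non-maximal) shows the net change is negative. For a type-(2) simplification I would apply the type-(1) analysis to each of the two disjoint consecutive pairs; since the two pairs are disjoint in the cyclic ordering, the faces affected by one swap and the changes it induces are ``independent enough'' that the decreases add up, or in the worst case one swap alone already produces a strict decrease, which suffices.

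For part~(2), the situation is easier because strictness is not required. Fix a partition $(\tilde E_1,\tilde E_2)$ with $\hat{\tilde E}_i\subseteq\hat E_j$ for some $i,j$; this means the ``$\tilde{}$-coloring'' of the edges of~$\hat\Gamma$ is a refinement of the ``$\,\hat{}$-coloring'' on one of the two color classes, so inside any interval of~$\hat\Gamma$ at~$v$ (a maximal monochromatic block for the $\hat{}$-coloring) the $\tilde{}$-coloring is still of a restricted form. The simplification was defined by permuting whole $\hat{}$-intervals, so it permutes blocks that are themselves unions of whole $\tilde{}$-intervals; I would argue that permuting such blocks can only merge $\tilde{}$-intervals (never split one) on any affected face boundary, hence the switch count for the $\tilde{}$-coloring does not increase — giving $|E(\Pi(\Gamma',\tilde E_1,\tilde E_2))|\le|E(\Pi(\Gamma,\tilde E_1,\tilde E_2))|$. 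I also need to note that the resulting $\Gamma'$ is again a proper cellular embedding: properness is untouched since we did not move any vertex off a singular point nor change which edges lie on isolated segments, and cellularity is preserved because changing a rotation system on a surface permutes faces but keeps each face a disk — and the only subtlety, that $\hat\Gamma$ changes when $\Gamma$ does (the boundary-running edges are re-added), is harmless since those added edges depend only on the positions of vertices along $\partial\complex$, which are unchanged.

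The main obstacle I expect is the careful face-boundary bookkeeping in part~(1): one must be precise about what ``the boundary walk of a face of~$\hat\Gamma$'' looks like near~$v$ when $v$ may be a singular point with several link components, and about the convention for monochromatic faces (which contribute~$0$, so a swap that turns a bichromatic face monochromatic must be counted correctly). Isolating the local picture at~$v$ within a single link component~$C$, and noting that edges of~$\hat\Gamma$ in different link components at~$v$ never share a face corner at~$v$, should reduce everything to a purely combinatorial statement about cyclic sequences of labelled intervals, which is then a finite check.
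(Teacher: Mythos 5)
Your part~(1) is essentially the paper's argument in a different bookkeeping: the paper observes that the number of half-edges of $\Pi(\Gamma,E_1,E_2)$ at~$v$ in a link component~$C$ is twice the number of intervals at~$v$ in~$C$, that a simplification strictly decreases the number of intervals at~$v$, and that nothing changes at other vertices; your ``switch count along face boundary walks'' is the same quantity organized by faces, and your reduction to the corners at~$v$ is the right way to make the face-based version airtight (indeed your hedging on the labels of $a$ and~$b$ is unnecessary: consecutive intervals alternate, so with $I$ labelled~$1$ and $I'$ labelled~$2$ one has $a$ labelled~$2$ and $b$ labelled~$1$, and both merges occur). The two-vertex case and the type-(2) simplification are handled the same way, since interval counts at distinct vertices (and at disjoint pairs of intervals) are independent. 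Also note that you need not argue that $\Gamma'$ is proper and cellular --- that is a hypothesis of the lemma --- and your blanket justification (``changing a rotation system keeps each face a disk'') is not correct when the ambient surface is fixed; the paper verifies cellularity by hand at each application of the lemma instead.

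The one genuine flaw is in your justification of part~(2). The claim that the simplification ``permutes blocks that are themselves unions of whole $\tilde{}$-intervals'' is false: only the class $\hat{\tilde E}_i\subseteq\hat E_j$ refines the $\hat{}$-colouring, while a maximal run of $\hat{\tilde E}_{3-i}$-edges may straddle several $\hat{}$-intervals (it can consist of a suffix of one $\hat E_j$-interval, some whole intervals, and a prefix of another), so permuting $\hat{}$-intervals can perfectly well cut such a run apart; ``never split'' fails for that class. The correct (and simpler) argument is the paper's: every maximal contiguous run of $\hat{\tilde E}_i$-edges around~$v$ lies inside a single $\hat{}$-interval, hence remains contiguous after the permutation, so the number of such runs does not increase; since, in a cyclic order, the numbers of maximal runs of the two $\tilde{}$-labels are equal whenever both are nonzero (and the degenerate monochromatic case is unaffected, as the permutation does not change labels), the number of $\tilde{}$-intervals at~$v$ in~$C$ does not increase, and therefore neither does $|E(\Pi(\Gamma',\tilde E_1,\tilde E_2))|$. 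With that substitution your plan goes through.
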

\begin{proof}
  We first observe that the number of half-edges of~$\Pi(\Gamma,E_1,E_2)$ at~$v$ in the link component~$C$ equals twice the number of intervals associated to $(E_1,E_2)$ at~$v$ in~$C$.  (This fact will be reused later.)

  Also, we observe that a simplification at~$v$ decreases the number of intervals at~$v$ (actually, by at least two), because the (cyclic or linear) number of alternations between $E_1$ and~$E_2$ decreases.  This implies the first point of the lemma.

  For the second point, let us consider, in the (cyclic or linear) ordering around~$v$ in~$C$, a maximal contiguous sequence of edges in~$\tilde E_i$.  Since $\tilde E_i\subseteq E_j$, when simplifying with respect to~$(E_1,E_2)$, this sequence is still contiguous in the new embedding~$\Gamma'$.  If the ordering is cyclic (no boundary incident to~$v$ at~$C$), it follows that the number of intervals associated to~$(\tilde E_1,\tilde E_2)$ does not increase when replacing~$\Gamma$ with~$\Gamma'$.  If the ordering is linear, this fact is still true, because the first and last edges in this ordering are unaffected (since only interior intervals are exchanged).  This concludes the proof.
\end{proof}

\subsection{Rearranging~$\Gamma$ with respect to an edge partition}

We can now prove the following lemma:
\begin{lemma}\label{L:ordered-part}
  Let $\Gamma$ be a proper cellular embedding of~$G$ on~$\complex$, and let $(E_1,E_2)$ be a partition of~$E(G)$.  There exists a proper cellular embedding $\Gamma'$ of~$G$ such that:
  \begin{itemize}
      \item $|E(\Pi(\Gamma',E_1,E_2))|\le75c+27w$, where $w$ is the size of $\mmid(E_1,E_2)$;
      \item for each partition $(\tilde E_1,\tilde E_2)$ of~$E(G)$ such that $\tilde E_i\subseteq E_j$ for some $i,j\in\{1,2\}$, we have $|E(\Pi(\Gamma',\tilde E_1,\tilde E_2))|\le|E(\Pi(\Gamma,\tilde E_1,\tilde E_2))|$.
  \end{itemize}
\end{lemma}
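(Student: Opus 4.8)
The plan is to start from the given proper cellular embedding and repeatedly simplify vertices with respect to~$(E_1,E_2)$ until this is no longer fruitful, and then bound the size of the resulting partitioning graph by a counting argument. By Lemma~\ref{L:swap-order}(1), each simplification strictly decreases the nonnegative integer $|E(\Pi(\cdot,E_1,E_2))|$, so only finitely many proper cellular embeddings of~$G$ on~$\complex$ are reachable from~$\Gamma$ by sequences of simplifications with respect to~$(E_1,E_2)$; let $\Gamma'$ be one of them minimizing $|E(\Pi(\Gamma',E_1,E_2))|$. The second bullet of the statement is then immediate: applying Lemma~\ref{L:swap-order}(2) along the sequence of simplifications leading from~$\Gamma$ to~$\Gamma'$ gives $|E(\Pi(\Gamma',\tilde E_1,\tilde E_2))|\le|E(\Pi(\Gamma,\tilde E_1,\tilde E_2))|$, and the hypothesis $\hat{\tilde E}_i\subseteq\hat E_j$ is preserved throughout, since it refers only to the two fixed partitions of~$E(G)$ and to which new edges of~$\hat\Gamma$ run along the (fixed) boundary of~$\complex$, none of which is affected by a simplification. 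It therefore remains to bound $|E(\Pi)|$, where $\Pi:=\Pi(\Gamma',E_1,E_2)$; this is the core of the proof.

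To bound $|E(\Pi)|$ I would first transfer~$\Pi$ to a closed surface. Since the relative interiors of the edges of~$\Pi$ avoid the isolated segments of~$\complex$, detaching the singular points of~$\complex$ turns~$\Pi$ into a graph~$\bar\Pi$ embedded on the detached surface; capping each boundary component with a disk (and, if convenient, modifying the surface near the marked points as in the construction of Lemma~\ref{L:genus-oversurface}) yields a closed surface~$\surf$ of genus~$O(c)$ carrying~$\bar\Pi$, with $|E(\bar\Pi)|=|E(\Pi)|$ and $|V(\bar\Pi)|\le w+O(c)$: detaching splits each singular point into at most one vertex per incident link component, and the total number of link components at singular points is~$O(c)$ by the very definition of the size of~$\complex$ (cones correspond to interior marked points, corners to boundary marked points, and each isolated segment has two endpoints). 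If $\bar\Pi$ had no monogon and no bigon, Lemma~\ref{L:nogon} would already give $|E(\Pi)|=|E(\bar\Pi)|\le 3\cdot\text{genus}(\surf)+3|V(\bar\Pi)|-6=O(c+w)$. So the two things left to control are the vertices of large degree and the monogons and bigons of~$\bar\Pi$.

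Both are controlled by minimality of~$\Gamma'$. By the second observation in the proof of Lemma~\ref{L:swap-order}, the number of half-edges of~$\Pi$ at a vertex~$v$ inside a link component~$C$ equals twice the number of intervals of~$(\hat E_1,\hat E_2)$ at~$v$ in~$C$. If that number of intervals is at least four, the first simplification operation is combinatorially available at~$v$ in~$C$; I claim that it can be realized as a proper cellular embedding of~$G$ on~$\complex$ — by rerouting the moved block of half-edges through the incident disk face — except in configurations where that incident face meets~$v$ more than once (or coincides with one of the two neighbouring faces at~$v$), and each such configuration is witnessed by, and "consumes", a handle or crosscap of~$\complex$, of which there are only~$O(c)$. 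Hence all but~$O(c)$ vertex--link-component pairs $(v,C)$ have at most three intervals, contributing at most~$6$ to $\deg_\Pi(v)$, while the remaining~$O(c)$ pairs contribute a total of~$O(c)$ edges via the trivial bound $\deg_\Pi(v)\le$ (number of edges of~$\hat\Gamma$ at~$v$ in~$C$). A parallel argument shows that a monogon or bigon of~$\bar\Pi$ can be removed — by deleting its loop, respectively contracting one of its two edges — unless it is pinned by a handle or crosscap, so that after~$O(c)$ such removals $\bar\Pi$ becomes monogon- and bigon-free and Lemma~\ref{L:nogon} applies to the remainder. Summing the $O(c)$ exceptional contributions with the bound $\tfrac12\sum_v\deg_\Pi(v)\le 3|V(\bar\Pi)|\le 3w+O(c)$ coming from Lemma~\ref{L:nogon} yields a bound linear in~$c$ and~$w$; tracking the constants through this bookkeeping gives $|E(\Pi)|\le 74c+26w$.

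The routine parts are the termination/minimality argument and the Euler-formula count; the delicate part — where most of the work lies, and the reason the constants are as large as $74c+26w$ — is making rigorous the dichotomy "either the simplification (respectively the monogon/bigon removal) can be realized as an embedding into the fixed space~$\complex$, or it consumes a unit of topology of~$\complex$". This requires careful reasoning about when a change of the rotation at a single vertex still embeds into~$\complex$ (using crucially that~$\Gamma'$ is cellular, so every face is a disk and provides room to reroute), and about the interaction of such local moves with the fixed singular points, isolated segments, and boundary of~$\complex$.
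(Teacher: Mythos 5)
Your opening and closing moves match the paper: iterate simplifications, use Lemma~\ref{L:swap-order} for termination and for the second bullet, and finish with an Euler-type count via Lemma~\ref{L:nogon} on a surface of genus $O(c)$. The genuine gap is the core dichotomy you rely on: the claim that whenever a vertex~$v$ has at least four intervals in a link component, the interval swap ``can be realized as a proper cellular embedding of~$G$ on~$\complex$ by rerouting through the incident disk face, except in configurations that consume a handle or crosscap of~$\complex$''. This is not proved, and it is false as stated. Take $\complex$ a sphere and $G$ a wheel whose hub~$v$ has its spokes alternating between $E_1$ and~$E_2$: the hub has arbitrarily many intervals, every incident face is a disk meeting~$v$ once, there is no handle or crosscap, and yet no simplification at~$v$ is realizable, because the embedding of a 3-connected planar graph is unique up to reflection. (This does not contradict the lemma --- the middle set is large there --- but it refutes your dichotomy, whose exceptional set is bounded only in terms of~$c$.) The obstruction to changing the rotation at a single vertex is the global rigidity of the rest of the embedding, not the local topology near~$v$. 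This is exactly why the paper only performs simplifications it can realize geometrically: moving the part of~$\Gamma$ inside a monogon of~$\Pi$ (which is attached to the rest of~$G$ only at~$v$) across an adjacent interval, and exchanging the contents of two bigons inside a run of eight consecutive bigons (attached only at the two endpoints). These moves are realizable on any~$\complex$, and it is the eventual \emph{absence} of these configurations --- monogon-incident vertices carry at most four half-edges in their link component, and there is no run of eight consecutive bigons --- that drives the final count.

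There is a second, independent problem in your bookkeeping. For the exceptional vertex--link pairs you invoke ``the trivial bound $\deg_\Pi(v)\le$ number of edges of~$\hat\Gamma$ at~$v$ in~$C$'', but that quantity is a vertex degree of~$G$ and is not bounded by any function of~$c$, so ``a total of $O(c)$ edges'' does not follow. Likewise, you cannot delete a monogon or contract a bigon of~$\bar\Pi$ ``unless it is pinned by a handle or crosscap'': $\Pi$ is determined by $\Gamma'$ and the partition, and its monogons and bigons are labelled faces containing pieces of~$\Gamma'$, so editing~$\Pi$ directly changes nothing about the embedding you must exhibit. The paper performs such deletions only in the final counting step, on an auxiliary graph, after the realizable moves guarantee that at most $2(c+w)$ monogons are removed and that collapsing bigon chains loses at most a factor of~$8$; combining this with Lemma~\ref{L:nogon} on the detached surface (with a handle attached to each boundary component) is precisely where the constants $74c+26w$ come from. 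To fix your proof you would need either a correct characterization of which simplifications are realizable (which, in effect, reduces to the paper's monogon/bigon moves) or a completely different bound on the number of intervals.
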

\begin{proof}
  Here is an overview of the proof.  Let $\Pi:=\Pi(\Gamma,E_1,E_2)$.  We will assume that $\Pi$ has ``many monogons or bigons'' (in a sense made precise below) and show that there is another cellular embedding~$\Gamma'$ of~$G$ such that:
  \begin{itemize}
      \item $|E(\Pi(\Gamma',E_1,E_2))|<|E(\Pi(\Gamma,E_1,E_2))|$;
      \item for each partition $(\tilde E_1,\tilde E_2)$ of~$E(G)$ such that $\tilde E_i\subseteq E_j$ for some $i,j\in\{1,2\}$, we have $|E(\Pi(\Gamma',\tilde E_1,\tilde E_2))|\le|E(\Pi(\Gamma,\tilde E_1,\tilde E_2))|$.
  \end{itemize}
  By repeatedly iterating this argument, and up to replacing $\Gamma$ with~$\Gamma'$, this implies that we can assume without loss of generality that $\Pi$ has ``not too many monogons or bigons''.  We will then show that this latter property implies that $\Pi$ has at most $75c+27w$ edges, which concludes.

  \begin{figure}
  \centering
  \def\svgwidth{.5\linewidth}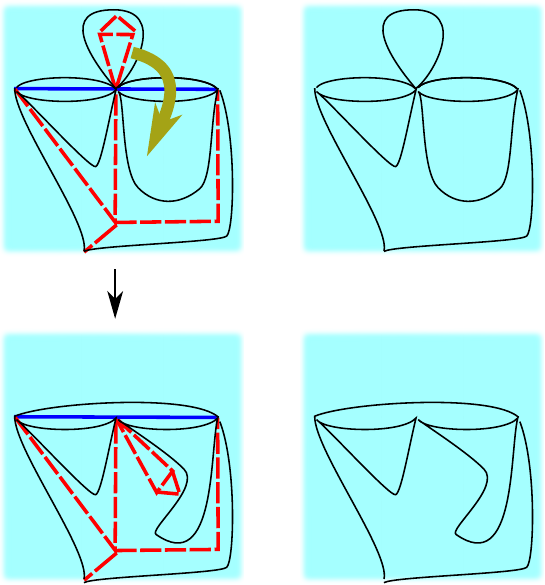
  \caption{Decreasing the number of monogons in~$\Pi$.}
  \label{F:monogon}
  \end{figure}
  First, let $v$ be a vertex of~$\Pi$, and let $C$ be a link component of~$\complex$ at~$v$ in~$\Pi$.  By construction of~$\Pi$, no monogon of~$\Pi$ is labelled~0.  Let us assume that $v$ has at least 8~incident half-edges of~$\Pi$ in~$C$ (and thus at least four intervals), and that $\Pi$ has, at~$v$ in~$C$, a monogon enclosing an interior interval.  In particular, a non-empty subgraph of~$\Gamma$ lies inside the monogon, attached to the rest of~$\Gamma$ only by~$v$, and corresponds to an interior interval~$s$ of~$\Gamma$ at~$v$ in~$C$.  We move the part of~$\Gamma$ that lies inside the monogon on the other side of the edges comprising an adjacent interior interval~$s'$ (which exists because $\Pi$ has at least four intervals at~$v$ in~$C$); see Figure~\ref{F:monogon}.  This simplifies~$v$ by swapping~$s$ with~$s'$, because $\Gamma$ has at least four intervals at~$v$ in~$C$.  Note that there is no singular vertex in the interior of the monogon, because there would be a vertex of~$\Pi$ located on the singular vertex.  The resulting graph embedding~$\Gamma'$ is still proper and cellular, and satisfies the desired properties, by Lemma~\ref{L:swap-order}.
  \begin{figure}
  \centering
  \def\svgwidth{.8\linewidth}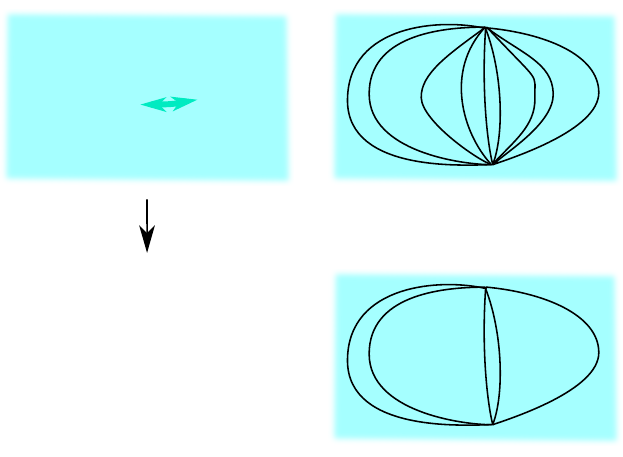
  \caption{Decreasing the number of bigons in~$\Pi$.}
  \label{F:bigons}
  \end{figure}

  Now, let us assume that $\Pi$ contains a sequence of bigons $B_1,\ldots,B_8$ such that $B_i$ and~$B_{i+1}$ share an edge for each~$i$.  So without loss of generality, we can assume that $B_1$ and~$B_5$ are labelled~$1$, $B_3$ and~$B_7$ are labelled~$2$, and the other bigons are labelled~$0$.  We modify~$\Gamma$ by exchanging the parts of~$\Gamma$ inside $B_3$ and~$B_5$; see Figure~\ref{F:bigons}.  The resulting embedding~$\Gamma'$ is also proper and cellular.  This operation simplifies the endpoints $u$ and~$v$ of these bigons.  (If $u=v$, this is Case~2 from the definition of a simplification.)
  
  We can iterate the above procedures, but only finitely many times because $|E(\Pi(\Gamma,E_1,E_2))|$ strictly decreases at each step (Lemma~\ref{L:swap-order}).  We have proved that, up to changing our initial embedding~$\Gamma$, we can assume without loss of generality the following: (i) Let $v$ be a vertex of~$\Pi$ that is incident, in a link component~$C$, to a monogon of~$\Pi$ corresponding to an interior interval; then $v$ is incident to at most 7 half-edges in~$C$; (ii) $\Pi$ has no sequence of 8~consecutive bigons as above.  To conclude, it suffices to prove that any graph~$\Pi$ satisfying these conditions has at most $75c+27w$~edges.
  
  We modify~$\Pi$ by iteratively removing all edges that are monogons, and then by iteratively replacing each bigon with a single edge, when the edges bounding the bigon are distinct.  The removal of monogons does not create any sequence of 8~consecutive bigons, because monogons are attached to vertices of degree at most~7 in their link component or correspond to exterior intervals.  So in the first step, for each vertex~$v$ and each link component~$C$ of~$v$, at most three monogons in~$C$ attached to~$v$ are removed; the number of such monogons is at most $3w$ (for the vertices of~$\Pi$ not on a singular vertex of~$\complex$) plus $3c$ (for the vertices of~$\Pi$ on singular vertices of~$\complex$).  In the second step, the number of edges is divided by at most~8.  Thus, if $\Pi'$ denotes the new embedding, we have:
  \begin{equation}
    |E(\Pi)|\le 3(c+w)+ 8|E(\Pi')|.\label{E:evpi}
  \end{equation}
  
  We now bound the number of edges of~$\Pi'$.  For this purpose, let $\surf$ be the detached surface of~$\complex$, and let $\Pi''$ be the graph naturally corresponding to~$\Pi'$ on~$\surf$ (see Section~\ref{S:datastruct-graphs}).  Any bigon of~$\Pi''$ whose boundary consists of the same edge repeated twice is itself a connected component of~$\surf$: either a sphere, in which case the corresponding connected component of~$\Pi''$ is made of two vertices and a single edge, or a projective plane, in which case the corresponding connected component of~$\Pi''$ is made of a single vertex and a single edge.  Thus, in these connected components, the number of edges of~$\Pi''$ is at most the number of vertices.  Let $\surf_0$ be obtained from~$\surf$ by removing these connected components, and $\Pi''_0$ the restriction of~$\Pi''$ to~$\surf_0$.

  $\Pi''_0$ has no monogon or bigon.  Let $\bar\surf_0$ be obtained from~$\surf_0$ by attaching a handle to each boundary component; it has a natural cellular graph embedding with at most $2c$~edges, and thus genus at most~$2c$ by Euler's formula; the graph~$\Pi''_0$ corresponds to an embedding of a graph~$\bar\Pi''_0$ on~$\bar\surf_0$, with no monogon or bigon.  Lemma~\ref{L:nogon} applied to the restriction of~$\bar\Pi''_0$ to each connected component of~$\bar\surf_0$ implies that $|E(\bar\Pi''_0)|\le 6c+3|V(\bar\Pi''_0)|$.

  Thus, $|E(\Pi')|=|E(\Pi'')|\le6c+3|V(\Pi'')|$.  Moreover, $|V(\Pi'')|\le c+w$.  Now, Inequality~(\ref{E:evpi}) implies that $|E(\Pi)|\le3(c+w)+8(6c+3(c+w))=75c+27w$, as desired.
\end{proof}

\subsection{Proof of Proposition~\ref{P:sparse}}

\begin{proof}[Proof of Proposition~\ref{P:sparse}]
  Let $\BB$ be a rooted branch decomposition of~$G$, and let $\Gamma$ be a proper cellular embedding of~$G$ on~$\complex$.  We consider each arc~$\alpha$ of the rooted branch decomposition in turn, in an arbitrary order.  For each such arc, we modify~$\Gamma$ by applying Lemma~\ref{L:ordered-part}.  We first claim that after these iterations, for each arc~$\alpha$ of~$\BB$, we have $|E(\Pi(\Gamma,\alpha))|\le75c+27w$.
  
  First, immediately after applying the above procedure to an arc~$\tilde\alpha$ of~$\BB$, corresponding to the partition~$(\tilde E_1,\tilde E_2)$ of~$E(G)$, we have $|E(\Pi(\Gamma,\tilde E_1,\tilde E_2))|\le75c+27w$.  We now prove that subsequent applications of Lemma~\ref{L:ordered-part} to other arcs of the rooted branch decomposition do not increase this number of edges.  Indeed, let $\alpha$ be another arc, corresponding to the partition~$(E_1,E_2)$ of~$E(G)$, to which we apply Lemma~\ref{L:ordered-part}.  The arc~$\alpha$ partitions the nodes of the tree~$\BB$ into two sets $N_1$ and~$N_2$, and similarly $\tilde\alpha$ partitions the nodes of the tree~$\BB$ into two sets $\tilde N_1$ and~$\tilde N_2$.  Because $\BB$ is a tree, we have $\tilde N_i\subseteq N_j$ for some $i,j\in\{1,2\}$.  This implies that $\tilde E_i\subseteq E_j$ for some $i,j\in\{1,2\}$; so the second item of Lemma~\ref{L:ordered-part} implies that the number of edges of~$\Pi(\Gamma,\tilde E_1,\tilde E_2)$ does not increase when processing arc~$\alpha$.  This proves the claim.
  
  Finally, there remains to prove that, for each internal node~$\nu$ of~$\BB$, the graph $\Pi(\Gamma,\nu)$ has at most $3(75c+27w)/2$ edges.  Let $(E_1,E_2,E_3)$ be the edge partition associated with~$\nu$.  By the claim we have just proved, it suffices to prove that the number of edges of this graph is at most half the sum of the numbers of edges of $\Pi(\Gamma,E_1,E_2\cup E_3)$, $\Pi(\Gamma,E_1\cup E_2,E_3)$, and $\Pi(\Gamma,E_1\cup E_3,E_2)$.
  
  Let $v$ and~$C$ be as above.  Any half-edge of~$\Pi(\Gamma,\nu)$ incident with~$v$ and on~$C$ arises, in cyclic order around~$v$, (1) either between two half-edges of~$\Gamma$, or (2) between a half-edge of~$\Gamma$ and a boundary part of~$\complex$ incident with~$v$ on~$C$.  We consider each of these cases in turn below.

  Between two half-edges of~$\Gamma$, there are either zero or two half-edges in~$\Pi(\Gamma,\nu)$.  In the latter case, this means that these two consecutive half-edges of~$\Gamma$ are in two different sets $E_i$ and~$E_j$.  Thus, between these two consecutive half-edges of~$\Gamma$, necessarily two half-edges appear in exactly two of $\Pi(\Gamma,E_1,E_2\cup E_3)$, $\Pi(\Gamma,E_1\cup E_2,E_3)$, and $\Pi(\Gamma,E_1\cup E_3,E_2)$.

  Between a half-edge and a boundary part, there is a single half-edge in~$\Pi(\Gamma,\nu)$, and also in~$\Pi(E_1,E_2)$, in $\Pi(E_1,E_3)$, and in $\Pi(E_2,E_3)$.

  This concludes the proof of the proposition.
\end{proof}

\section{Dynamic programming algorithm}\label{S:algo}

The result of this section is the following proposition.
\begin{proposition}\label{P:algo}
  Let $\complex$ be a 2-complex and $G$ a graph, which satisfy the properties of Proposition~\ref{P:preproc}.  Let $c$ be the size of~$\complex$ and $n$ the number of vertices and edges of~$G$.  Let $\BB$ be a rooted branch decomposition of~$G$ of width~$w$.
  In $(c+w)^{O(c+w)}n$ time, one can report one of the following statements, which is true:
  \begin{itemize}
      \item $G$ has no sparse proper cellular embedding into~$\complex$;
      \item $G$ has an embedding into~$\complex$.
  \end{itemize}
\end{proposition}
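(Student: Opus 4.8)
The plan is to decide the statement by bottom-up dynamic programming on the rooted branch decomposition~$\BB$. For an arc~$\alpha$ of~$\BB$, let $G_\alpha$ be the subgraph of~$G$ induced by the edges labelling the leaves on the side of~$\alpha$ opposite to the root~$\rho$, and let $M_\alpha$ be the middle set of~$\alpha$ in~$G$, so $|M_\alpha|\le w$. With~$\alpha$ I would associate bounded combinatorial data called a \emph{signature}, meant to describe how a candidate partial embedding of~$G_\alpha$ meets a neighborhood of a would-be partitioning graph $\Pi(\Gamma,\alpha)$: namely the combinatorial map (in the sense of Section~\ref{S:datastruct-graphs}) of a graph~$\Pi$ properly embedded on~$\complex$ with at most $74c+26w$ edges, a labelling of the faces of~$\Pi$ by $\{0,1,2\}$, an identification of the non-singular vertices of~$\Pi$ with~$M_\alpha$ (so $|V(\Pi)|\le c+w$), the cyclic pattern, around each such vertex, of its incident $G_\alpha$-edges grouped into maximal runs lying in a common face of~$\Pi$, and, for each face of~$\Pi$, the local topological type needed to certify that the drawing inside it can be completed into a proper cellular embedding (e.g. that the faces of~$\Gamma$ already closed off inside it are disks). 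Call a signature at~$\alpha$ \emph{feasible} if it is the signature of an actual embedding of~$G_\alpha$ on~$\complex$. The number of combinatorial maps of a graph with $O(c+w)$ vertices and edges properly embedded on~$\complex$ is $(c+w)^{O(c+w)}$ by Lemma~\ref{L:enum-embeddings}, and the face labels, the identification with~$M_\alpha$, the cyclic patterns, and the per-face types multiply this by only another $(c+w)^{O(c+w)}$; here it is crucial that the sparseness constraint $74c+26w$ (the bound in the definition of sparseness, cf.\ Proposition~\ref{P:sparse}) keeps~$\Pi$ small. Deciding whether two signatures are compatible (see below) reduces to a bounded number of combinatorial-map isomorphism tests and so takes $(c+w)^{O(c+w)}$ time.

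The dynamic program then computes, bottom-up, the set of feasible signatures at each arc. For the arc above a non-root leaf labelled by an edge~$e$, the graph~$G_\alpha$ is just~$e$ and its endpoints, and the feasible signatures are exactly the $(c+w)^{O(c+w)}$ admissible ways of drawing~$e$ on~$\complex$, which I would enumerate directly via Lemma~\ref{L:enum-embeddings}. For an internal node~$\nu$ with incoming arc~$\alpha$ and child arcs $\alpha_1,\alpha_2$, and for every pair $(\sigma_1,\sigma_2)$ of feasible signatures at $\alpha_1$ and~$\alpha_2$, every candidate signature~$\sigma$ at~$\alpha$, and every candidate three-way partitioning graph $\Pi(\Gamma,\nu)$ (of size $O(c+w)$, faces labelled by $\{0,1,2,3\}$), I would check a fixed list of local conditions: that merging appropriate labels of $\Pi(\Gamma,\nu)$ yields the embedded labelled graphs recorded in $\sigma_1$, $\sigma_2$ and~$\sigma$; that the cyclic patterns around each vertex shared by $G_{\alpha_1}$ and $G_{\alpha_2}$ can be interleaved consistently with the three signatures; and that the recorded face types agree. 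If all checks pass, mark~$\sigma$ feasible. Finally, at the arc incident to~$\rho$ the middle set is empty, so~$\Pi$ reduces to the singular points of~$\complex$, and a feasible signature there exactly certifies the existence of a proper cellular — hence genuine — embedding of~$G$ on~$\complex$; I would report ``$G$ has an embedding into~$\complex$'' if some signature is feasible at~$\rho$, and ``$G$ has no sparse proper cellular embedding into~$\complex$'' otherwise.

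Correctness amounts to the following, proved by induction over~$\BB$: the program marks~$\sigma$ feasible at~$\alpha$ exactly when~$\sigma$ is realized by some embedding of~$G_\alpha$ on~$\complex$ that, at every arc of the subtree below~$\alpha$, induces a partitioning graph within the sparseness bounds. At the root this says precisely that the program succeeds iff~$G$ has a sparse proper cellular embedding on~$\complex$, which is in particular an embedding of~$G$, so in both outcomes the reported statement is true. The base case is immediate, and the inductive step is the combine step. Its soundness — from embeddings of $G_{\alpha_1}$ and $G_{\alpha_2}$ realizing compatible $\sigma_1,\sigma_2$, build an embedding of $G_\nu=G_{\alpha_1}\cup G_{\alpha_2}$ realizing~$\sigma$ — is the main topological point: a face of $\Pi(\Gamma,\nu)$ with the label of $E(G_{\alpha_1})$ (resp.\ $E(G_{\alpha_2})$) contains only $G_{\alpha_1}$-edges (resp.\ $G_{\alpha_2}$-edges), hence meets $G_{\alpha_2}$ (resp.\ $G_{\alpha_1}$) only at shared middle-set vertices on its boundary, so inside it one may freely redraw the portion of $G_{\alpha_1}$ (resp.\ $G_{\alpha_2}$) prescribed by~$\sigma_1$ (resp.\ $\sigma_2$); since the cyclic patterns at the shared vertices were checked interleavable, these redrawings agree there and glue into a single embedding of~$G_\nu$, and cellularity is preserved thanks to the tracked face types. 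Completeness of the combine step — it enumerates all compatible tuples $(\sigma_1,\sigma_2,\sigma,\Pi(\Gamma,\nu))$ — gives the other direction.

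The step I expect to be the main obstacle is exactly the design of the signature together with the soundness of the combine step: the signature must be rich enough that compatibility of the child signatures genuinely forces a glued embedding — so one must argue that the trace of~$G_\alpha$ on a neighborhood of the partitioning graph captures everything about how~$G_\alpha$ can interact with the rest of~$G$ — yet poor enough to take only $(c+w)^{O(c+w)}$ values. A further, purely technical, layer of bookkeeping (absent for surfaces) comes from the fact that singular points, isolated segments and the boundary of~$\complex$ may all appear on the partitioning graph and inside its faces, and must be handled in the definitions of ``signature'', ``compatible'', and ``proper cellular''. Given all this, the running time is immediate: $\BB$ has $O(n)$ nodes and arcs, and at each internal node the combine step runs over $(c+w)^{O(c+w)}$ tuples, each processed in $(c+w)^{O(c+w)}$ time, for a total of $(c+w)^{O(c+w)}n$.
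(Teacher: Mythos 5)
Your architecture is essentially the paper's: dynamic programming over the rooted branch decomposition, with states given by bounded-size labelled graphs properly embedded on~$\complex$ (enumerated via Lemma~\ref{L:enum-embeddings}), a combine step at each internal node that enumerates a three-way partitioning graph and recovers the three two-way graphs by merging face labels, and the asymmetric output (``no sparse proper cellular embedding'' vs.\ ``has an embedding''), with the same $(c+w)^{O(c+w)}n$ bookkeeping. The genuine gap is exactly where you flag it: the semantics of your states and the invariant you claim for them. Your ``feasible exactly when realized by an embedding of~$G_\alpha$ that induces sparse partitioning graphs at all arcs below'' is not the right invariant --- it is not even well defined, since the partitioning graph $\Pi(\Gamma,\beta)$ is defined only relative to a proper \emph{cellular} embedding of all of~$G$ (its edges run inside the faces of~$\hat\Gamma$), not relative to an embedding of~$G_\alpha$ alone --- and the ``only if'' half is not something the combine step can guarantee. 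What the proof needs is the weaker, asymmetric pair the paper uses: (soundness) every stored graph is a \emph{bounding graph} for~$G_\alpha$, i.e.\ $G_\alpha$ admits \emph{some} embedding respecting it; (exhaustiveness) for every sparse proper cellular embedding~$\Gamma$ of the whole graph~$G$, the graph $\Pi(\Gamma,\alpha)$ is stored. At the root this yields exactly the two reportable statements; nothing stronger is needed.

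Once the states carry only this meaning, two pieces of your signature become unnecessary and, worse, hard to make rigorous. The per-face ``types certifying that the drawing inside can be completed into a proper cellular embedding'' cannot be local data: faces of the final embedding~$\Gamma$ are not confined to single faces of the partitioning graph, so cellularity is not certifiable face-by-face at intermediate arcs --- and it need not be, since the positive output only claims an embedding, cellularity entering solely through exhaustiveness on the negative side. Likewise the interleaving check of cyclic patterns at shared middle-set vertices is superfluous: since both children's graphs are obtained from the single enumerated graph $\Pi(\Gamma,\nu)$ by merging face labels (the paper's $\Pi_{1,3}$ and $(\Pi_{1,2})^-$), the regions labelled~2 and~3 are disjoint open faces of one embedded graph, the angular sectors at shared vertices are already separated by its edges, and the two partial embeddings of $G_\beta$ and $G_\gamma$ glue with no further compatibility condition (the paper's property~(i)). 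With the state semantics corrected in this way, your enumeration, membership tests, and running-time analysis go through as in the paper.
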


(Proposition~\ref{P:sparse} implies that we can remove the adjective ``sparse'' in the above proposition.)

\subsection{Bounding graphs}

Let $\BB$ be a rooted branch decomposition of~$G$ of width~$w$.   Recall (see Section~\ref{S:branchdecomp}) that the root of~$\BB$ is a leaf associated to no edge of~$G$.  Our algorithm will use dynamic programming in the rooted branch decomposition.  In the following, we fix an arc~$\alpha$ of~$\BB$, and we let $G_\alpha$ be the subgraph of~$G$ induced by the edges of~$G$ corresponding to the leaves of the subtree of~$\BB$ rooted at~$\alpha$.  The general idea is that we compute all possible relevant embeddings of~$G_\alpha$ in subregions of~$\complex$.  Such subregions will be delimited by a graph embedded on~$\complex$ of small complexity.  For the dynamic program to work, we also need to keep track of the location of the vertices in the middle set of~$\alpha$.  More precisely, we define \emph{bounding graphs} below; partitioning graphs are, in particular, bounding graphs.  A \emphdef{bounding graph} for~$G_\alpha$ is a proper labelled graph embedding~$\Pi$ on~$\complex$ (but possibly non-cellular), such that:
\begin{itemize}
    \item some vertices of~$\Pi$ are labelled; these labels are exactly $\mmid(\alpha$), and each label appears exactly once;
    \item each unlabelled vertex of~$\Pi$ is mapped to a singular vertex of~$\complex$;
    \item each face of~$\Pi$ is labelled 0, 1, or~2;
    \item $G_\alpha$ has an embedding~$\Gamma_\alpha$ that \emphdef{respects}~$\Pi$:  each vertex of~$\Pi$ labelled~$v$ is mapped, under~$\Pi$, to the image of~$v$ in~$\Gamma_\alpha$; moreover, the relative interior of each edge of~$\Gamma_\alpha$ lies in the interior of a face of~$\Pi$ labelled~2.
\end{itemize}
(It may seem strange to have three labels for the faces of a bounding graph, but it simplifies the argumentation, although two would suffice.)

A bounding graph for~$G_\alpha$ is \emphdef{sparse} if it has at most $75c+27w$~edges.  Remark that, if $\Gamma$ is a sparse proper cellular embedding of~$G$ on~$\complex$ (as defined in Section~\ref{S:partitioning}), then $\Pi(\Gamma,\alpha)$ is a sparse bounding graph for the restriction of~$\Gamma$ to~$G_\alpha$.

Henceforth, we regard two (labelled) proper graph embeddings as equal if and only if their (labelled) combinatorial maps are isomorphic.  This convenient abuse of language is legitimate because whenever two proper graph embeddings have the same combinatorial map, there is an ambient self-homeomorphism of~$\complex$ that maps one into the other.

A list $\LL_\alpha$ of sparse bounding graphs for~$G_\alpha$ is \emphdef{exhaustive} if the following condition holds:  If $G$ has a sparse proper cellular embedding on~$\complex$, then for each such embedding~$\Gamma$, the (combinatorial map of the) graph~$\Pi(\Gamma,\alpha)$ is in~$\LL_\alpha$.

The induction step for the dynamic programming algorithm is the following.
\begin{proposition}\label{P:dynprog}
  There is a universal constant~$K>0$ such that the following holds.  Let $\nu$ be a non-root node of~$\BB$ and $\alpha$ be the arc of~$\BB$ incident to~$\nu$ that is the closest to the root~$\rho$.  Assume that, for each arc~$\beta\ne\alpha$ of~$\BB$ incident to~$\nu$, we are given an exhaustive list of at most $(c+w)^{K(c+w)}$ sparse bounding graphs for~$G_\beta$.  Then we can, in $(c+w)^{O(c+w)}$ time, compute an exhaustive list of at most $(c+w)^{K(c+w)}$ sparse bounding graphs for~$G_\alpha$.
\end{proposition}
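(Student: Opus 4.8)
The plan is to split according to whether $\nu$ is a leaf or an internal node of~$\BB$. If $\nu$ is a leaf, then $G_\alpha$ consists of a single edge~$e$ together with its (at most two) endpoints, and $\nu$ has no incident arc other than~$\alpha$, so the hypothesis is vacuous. I would build $\LL_\alpha$ directly: apply Lemma~\ref{L:enum-embeddings} with $k := 74c+26w$ to enumerate, in $(c+w)^{O(c+w)}$ time, all combinatorial maps of graphs with at most $74c+26w$ vertices and edges properly embedded on~$\complex$, and attach in all possible ways a labelling of some vertices by the middle set of~$\alpha$ (bijectively) and a labelling of the faces by $\{0,1,2\}$; this produces $(c+w)^{O(c+w)}$ candidate sparse bounding graphs~$\Pi$. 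For each one, test whether it is a genuine bounding graph for~$G_\alpha$: since $G_\alpha$ is the single edge~$e$, this amounts to checking that some face of~$\Pi$ labelled~$2$ carries on its boundary the vertices of~$\Pi$ labelled by the endpoints of~$e$ that lie in the middle set (with the obvious variant if $e$ is a loop) --- a bounded local test. Keep those candidates. The list is exhaustive, since for every sparse proper cellular embedding~$\Gamma$ of~$G$ the graph~$\Pi(\Gamma,\alpha)$ is a sparse bounding graph, hence among the enumerated candidates and passing the test; and its size is $(c+w)^{O(c+w)}$.

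Now suppose $\nu$ is internal, with incident arcs $\alpha,\beta_1,\beta_2$, where $\beta_1,\beta_2$ lie on the side of~$\nu$ away from the root, and let $(E_1,E_2,E_3)$ be the edge partition associated with~$\nu$, so that $E(G_{\beta_1})=E_2$, $E(G_{\beta_2})=E_3$ and $E(G_\alpha)=E_2\cup E_3$. The structural fact I would rely on is that, for any proper cellular embedding~$\Gamma$ of~$G$, the graph $\Pi(\Gamma,\nu)=\Pi(\Gamma,E_1,E_2,E_3)$ is finer than each of $\Pi(\Gamma,\alpha)$, $\Pi(\Gamma,\beta_1)$ and $\Pi(\Gamma,\beta_2)$: each of the latter is obtained from $\Pi(\Gamma,\nu)$ by a purely combinatorial \emph{coarsening} that merges two of the three parts ($E_2$ with~$E_3$ for~$\alpha$, $E_1$ with~$E_3$ for~$\beta_1$, $E_1$ with~$E_2$ for~$\beta_2$): merge the two corresponding face labels, delete the vertices that drop out of the relevant middle set --- precisely the vertices of $\Pi(\Gamma,\nu)$ incident to no face carrying either merged part --- and dissolve the degree-two vertices thereby created. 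Moreover, if $\Gamma$ is sparse then $\Pi(\Gamma,\nu)$ has at most $3(74c+26w)$ edges and $\Pi(\Gamma,\alpha)$ at most $74c+26w$ edges. Accordingly, I would introduce the analogue of a bounding graph at the node~$\nu$, namely a proper labelled graph embedding~$Q$ on~$\complex$ whose labelled vertices form the middle set of~$\nu$, whose other vertices are singular points, and whose faces are labelled in $\{0,1,2,3\}$, the labels~$2$ and~$3$ recording separately where $G_{\beta_1}$ and~$G_{\beta_2}$ are embedded; from~$Q$ alone one can compute the three coarsenings $Q^\alpha,Q^{\beta_1},Q^{\beta_2}$ (as labelled combinatorial maps for $G_\alpha,G_{\beta_1},G_{\beta_2}$), since which of $E_1,E_2,E_3$ a labelled vertex of~$Q$ touches is visible from the labels of its incident faces.

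The algorithm then enumerates, via Lemma~\ref{L:enum-embeddings} with $k:=3(74c+26w)$ together with all vertex and face labellings, the $(c+w)^{O(c+w)}$ candidates~$Q$ with at most $3(74c+26w)$ edges, in $(c+w)^{O(c+w)}$ time; for each~$Q$ it computes $Q^\alpha,Q^{\beta_1},Q^{\beta_2}$ and discards~$Q$ unless $Q^{\beta_1}\in\LL_{\beta_1}$ and $Q^{\beta_2}\in\LL_{\beta_2}$ (tested by combinatorial-map isomorphism, in $(c+w)^{O(c+w)}$ time per list) and $Q^\alpha$ has at most $74c+26w$ edges; otherwise it adds~$Q^\alpha$ to~$\LL_\alpha$, deleting duplicates at the end by pairwise isomorphism tests. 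Exhaustiveness is immediate: for a sparse proper cellular embedding~$\Gamma$ of~$G$, put $Q:=\Pi(\Gamma,\nu)$; then $Q^{\beta_i}=\Pi(\Gamma,\beta_i)\in\LL_{\beta_i}$ by hypothesis and $Q^\alpha=\Pi(\Gamma,\alpha)$ has at most $74c+26w$ edges, so $\Pi(\Gamma,\alpha)$ is output. The running time and the size of~$\LL_\alpha$ are both $(c+w)^{O(c+w)}$.

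The main obstacle is the converse, namely showing that every $Q^\alpha$ added to~$\LL_\alpha$ is truly a bounding graph for~$G_\alpha$: an embedding~$\Gamma_{\beta_1}$ of~$G_{\beta_1}$ respecting~$Q^{\beta_1}$ and an embedding~$\Gamma_{\beta_2}$ of~$G_{\beta_2}$ respecting~$Q^{\beta_2}$ must be glued into one embedding of~$G_\alpha$ respecting~$Q^\alpha$. For disjointness I would show, at the level of subsets of~$\complex$, that the label-$2$ faces of~$Q^{\beta_1}$ coincide with the label-$2$ faces of~$Q$ and the label-$2$ faces of~$Q^{\beta_2}$ with the label-$3$ faces of~$Q$ --- the key point being that the vertices deleted when passing from~$Q$ to~$Q^{\beta_1}$ are incident to no label-$2$ face of~$Q$, so these faces are left intact --- whence $\Gamma_{\beta_1}$ and~$\Gamma_{\beta_2}$ have disjoint edge-interiors, after a harmless perturbation pushing each non-middle-set vertex of~$G_{\beta_i}$ into the interior of a label-$2$ face of its bounding graph. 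One then checks that the two embeddings agree on their common vertices, namely the vertices of~$G$ incident with both~$E_2$ and~$E_3$, which lie in the middle set of both~$\beta_1$ and~$\beta_2$ and hence are mapped by each~$\Gamma_{\beta_i}$ to the corresponding common vertex of~$Q$; and that the glued embedding respects~$Q^\alpha$, using that the label-$2$ faces of~$Q^\alpha$ are exactly the unions of label-$2$ and label-$3$ faces of~$Q$ across the deleted vertices and that the middle set of~$\alpha$ is contained in that of~$\nu$ compatibly with the coarsening. Making these face-correspondence statements fully rigorous --- including the behaviour at singular points and along the boundary of~$\complex$, a careful definition of the coarsening operation, and the identity exhibiting $\Pi(\Gamma,\ldots,E_i\cup E_{i+1},\ldots)$ as a coarsening of $\Pi(\Gamma,\ldots,E_i,E_{i+1},\ldots)$ --- is where most of the work lies.
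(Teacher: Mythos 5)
Your overall architecture is the same as the paper's: handle the leaf case by brute-force enumeration via Lemma~\ref{L:enum-embeddings}, and at an internal node enumerate candidate ``tripartite'' partitioning graphs with face labels in $\{0,1,2,3\}$ and vertex labels the middle set of~$\nu$, derive from each candidate the three arc-level graphs, test the two child graphs for membership in $\LL_{\beta_1},\LL_{\beta_2}$, and output the $\alpha$-level graph. The leaf case is fine. The genuine gap is in your definition of the ``coarsening'' that produces $Q^\alpha,Q^{\beta_1},Q^{\beta_2}$ from~$Q$. You define it as: merge two face labels, delete the vertices that drop out of the relevant middle set, and dissolve the resulting degree-two vertices. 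This operation does \emph{not} satisfy the structural identity you rely on, namely that $\Pi(\Gamma,E_1,E_2\cup E_3)$ is the coarsening of $\Pi(\Gamma,E_1,E_2,E_3)$. Concretely, take a face~$f$ of~$\hat\Gamma$ whose boundary consists only of runs of $\hat E_2$- and $\hat E_3$-edges: inside~$f$, the graph $\Pi(\Gamma,\nu)$ has a cycle of edges (one per run) around a central face labelled~$0$, whereas $\Pi(\Gamma,\alpha)$ has \emph{no} edge inside~$f$ at all, by definition of the partitioning graph. Deleting dropped vertices and dissolving degree-two vertices never deletes such a cycle, so your $Q^\alpha$ computed from $Q=\Pi(\Gamma,\nu)$ differs from $\Pi(\Gamma,\alpha)$ as a combinatorial map. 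Similarly, when a run of $\hat E_2$-edges and a run of $\hat E_3$-edges meet at a vertex~$v$ that remains in the middle set of~$\alpha$ (because $v$ also meets~$E_1$ elsewhere), the two corresponding edges of $\Pi(\Gamma,\nu)$ must be replaced in $\Pi(\Gamma,\alpha)$ by a single edge that \emph{bypasses}~$v$; dissolution does not apply since $v$ keeps other incident edges. Because exhaustiveness requires the exact map $\Pi(\Gamma,\alpha)$ to be output, and because your membership tests $Q^{\beta_i}\in\LL_{\beta_i}$ are only guaranteed to succeed for the true maps $\Pi(\Gamma,\beta_i)$, both exhaustiveness and the bookkeeping of the recursion break with the coarsening as you stated it. (A smaller slip: the vertices that drop out of the middle set are those incident to no face carrying the \emph{unmerged} part, not ``no face carrying either merged part''.)

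The paper's fix is precisely a more careful merging operation $\Pi_{i,j}$: after relabelling $j$ by~$i$, inside every disk face of the candidate labelled~$0$ it removes each maximal run of boundary edges now adjacent to label-$i$ faces and replaces the run by one rerouted edge joining the two extreme endpoints (removing all edges if the whole boundary is such a run), and finally deletes isolated vertices not at singular points; with this operation the two key facts you defer --- that $\Pi_{2,3},\Pi_{1,3},(\Pi_{1,2})^-$ are exactly $\Pi(\Gamma,\alpha),\Pi(\Gamma,\beta),\Pi(\Gamma,\gamma)$ when $\Pi=\Pi(\Gamma,\nu)$, and that compatibility of the two child restrictions lets the two embeddings be glued into one respecting $\Pi_{2,3}$ --- do hold, and your gluing sketch (disjointness of the label-$2$ and label-$3$ regions, agreement on common middle-set vertices) then goes through essentially as you outline. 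So the missing ingredient is not extra machinery but the correct rerouting/deletion step in the coarsening; as written, the identity you would need is false, and the proof does not close.
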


Assuming Proposition~\ref{P:dynprog}, the proof of which is deferred to the next subsection, it is easy to prove Proposition~\ref{P:algo}:
\begin{proof}[Proof of Proposition~\ref{P:algo}, assuming Proposition~\ref{P:dynprog}]
  We apply the algorithm of Proposition~\ref{P:dynprog} in a bottom-up manner in the rooted branch decomposition~$\BB$.  Note that, as stated, Proposition~\ref{P:dynprog} handles the base case of the recursion, namely, the case where $\nu$ is a leaf.  So let $\alpha$ be the arc of~$\BB$ incident with the root node~$\rho$.  We end up with an exhaustive list of at most $(c+w)^{K(c+w)}$ sparse bounding graphs for~$G_\alpha=G$.  By definition of a bounding graph, if this list is non-empty, then $G$ has an embedding on~$\complex$.  On the other hand, by definition of an exhaustive list, if this list is empty, then $G$ has no sparse proper cellular embedding on~$\complex$.
  
  There are $O(n)$ recursive calls, each of which takes $(c+w)^{O(c+w)}$ time.
\end{proof}

\subsection{The induction step: Proof of Proposition~\ref{P:dynprog}}

\begin{proof}[Proof of Proposition~\ref{P:dynprog}]
  By Lemma~\ref{L:enum-embeddings}, let $K>0$ be such that the number of combinatorial maps of proper embeddings of graphs with at most $75c+27w$ vertices and at most at most $75c+27w$ edges on~$\complex$ is at most $(c+w)^{K(c+w)}$.

  \emph{First case.}
  Let us first assume that $\nu$ is a (non-root) leaf of~$\BB$; thus, $G_\alpha$ is a single edge~$uv$.  We will compute the labelled combinatorial maps of \emph{all} sparse bounding graphs for~$G_\alpha$.  It is clear that this will be an exhaustive list.  Indeed, assume that $G$ has a sparse proper cellular embedding~$\Gamma$ on~$\complex$; by sparsity, $\Pi(\Gamma,\alpha)$ has at most $75c+27w$~edges; thus, one of the labelled combinatorial maps computed will be equal to that of~$\Pi(\Gamma,\alpha)$.

  So let us describe how to enumerate all the labelled combinatorial maps of bounding graphs for~$G_\alpha$.  Using Lemma~\ref{L:enum-embeddings}, we enumerate all possible labelled (combinatorial maps of) proper graph embeddings~$\Pi$ on~$\complex$ such that:
  \begin{itemize}
      \item two vertices of~$\Pi$ are labelled $u$ and~$v$ (or a single vertex of~$\Pi$, if $u=v$); the other vertices are unlabelled; the singular vertices of~$\complex$ are covered by the vertices of~$\Pi$; conversely, every vertex of~$\Pi$, except perhaps $u$ and/or~$v$, is mapped to a singular vertex of~$\complex$;
      \item $\Pi$ has at most $75c+27w$ edges;
      \item each face of~$\Pi$ is labelled 0, 1, or~2;
      \item $\Pi$ has a face labelled~2 whose boundary contains both vertices $u$ and~$v$.
  \end{itemize}
  It is clear that these labelled combinatorial maps represent all the sparse bounding graphs for~$G_\alpha$.

  \medskip

  \emph{Second case.}
  Let us now assume that $\nu$ is an internal node of~$\BB$.  As above, let $\alpha$ be the arc of~$\BB$ incident to~$\nu$ that is the closest to the root~$\rho$.  Let $\beta$ and~$\gamma$ be the arcs different from~$\alpha$ incident to~$\nu$.  Let $\LL_\beta$ and~$\LL_\gamma$ be exhaustive lists of bounding graphs for~$G_\beta$ and~$G_\gamma$, respectively.  Intuitively, every pair of bounding graphs in $\LL_\beta$ and $\LL_\gamma$ that are compatible, in the sense that the regions labelled~2 in each of these two graphs are disjoint, will lead to a bounding graph in~$\LL_\alpha$.  This is the motivating idea to our approach.  More precisely, we will enumerate labelled combinatorial maps~$\Pi$, each of which can be ``restricted'' to two compatible graphs, which are possible bounding graphs for $G_\beta$ and~$G_\gamma$.  If these two restrictions lie in $\LL_\beta$ and~$\LL_\gamma$, this leads to a bounding graph that is added to~$\LL_\alpha$.  Even more precisely, we observe that, given a sparse, proper, cellular embedding~$\Gamma$ of~$G$ and a node~$\nu$ of~$\BB$, from $\Pi(\Gamma,\nu)$ we can compute $\Pi(\Gamma,\delta)$ for each incident arc~$\delta$.  (See Figures \ref{F:partitioning} and~\ref{F:partitioning-new-node}; we describe this in detail below.)  So, from $\Pi$, we compute its ``restrictions'' to compatible potential bounding graphs for $G_\beta$ and~$G_\gamma$, with the property that if $\Pi$ is actually of the form $\Pi(\Gamma,\nu)$, we indeed compute $\Pi(\Gamma,\beta)$ and~$\Pi(\Gamma,\gamma)$.
  
  We first introduce some terminology.  Let $\Pi$ be a graph properly embedded on~$\complex$ (possibly non-cellularly), with faces labelled 0, 1, 2, or~3, and with labels on some vertices.  We denote by~$\Pi^-$ the map obtained from~$\Pi$ by replacing each label~3 on a face by a~2.  Let $i,j,k$ be integers such that $\{i,j,k\}=\{1,2,3\}$.  We will define a graph embedding $\Pi_{i,j}$ obtained from~$\Pi$ by somehow replacing $j$-labels with $i$-labels and  ``merging'' the resulting faces.  First, for an illustration, refer back to Figures~\ref{F:partitioning} and~\ref{F:partitioning-new-node}: If $\Pi$ is the graph embedding depicted on the right of Figure~\ref{F:partitioning-new-node}, then the configurations shown on the right of Figure~\ref{F:partitioning} correspond, from top to bottom, to $\Pi_{2,3}$, $\Pi_{1,3}$, and $(\Pi_{1,2})^-$.

  Formally, $\Pi_{i,j}$ is defined as follows. First, let us replace all face labels~$j$ by~$i$.  Now, for each face~$f$ of~$\Pi$ that is homeomorphic to an open disk, possibly with additional points on the boundary of~$\complex$, and labelled~0, we do the following.  The boundary of~$f$ is made of edges of~$\Pi$, possibly with additional points on the boundary of~$\complex$; for the sake of the discussion, let us temporarily label each such edge by the label of the face on the other side of~$f$.  If the boundary of~$f$ is entirely made of edges labelled~$i$ (without boundary points of~$\complex$), then we remove all these edges, and $f$ becomes part of a larger face labelled~$i$.  Otherwise, for each maximal subsequence $e_1,\ldots,e_\ell$ of edges along the boundary of~$f$ that are all labelled~$i$, we remove each of $e_1,\ldots,e_\ell$, and replace them with an edge inside~$f$ from the source of~$e_1$ to the target of~$e_\ell$.  Finally, we remove all isolated vertices that do not coincide with singular vertices of~$\complex$, and all vertices in the relative interior of an isolated edge that are incident to two faces with the same label.

  The easy but key properties of this construction are the following:
  \begin{itemize}
      \item[(i)] Assume that $\Pi_{1,3}$ is a bounding graph for~$G_\beta$ and $(\Pi_{1,2})^-$ is a bounding graph for~$G_\gamma$.  Then $\Pi_{2,3}$ is a bounding graph for~$G_\alpha$.
      \item[(ii)] The node~$\nu$ naturally partitions the edge set of~$G$ into three parts, which we denote by $E_1$ (on the side of~$\alpha$), $E_2$ (on the side of~$\beta$), and~$E_3$ (on the side of~$\gamma$).  Assume that $G$ has a sparse proper cellular embedding~$\Gamma$ on~$\complex$ and that $\Pi$ is the partitioning graph $\Pi(\Gamma,E_1,E_2,E_3)$.  Then:
      \begin{itemize}
      \item $\Pi(\Gamma,\alpha)=\Pi(\Gamma,E_1,E_2\cup E_3)=\Pi_{2,3}$;
      \item $\Pi(\Gamma,\beta)=\Pi(\Gamma,E_1\cup E_3,E_2)=\Pi_{1,3}$;
      \item $\Pi(\Gamma,\gamma)=\Pi(\Gamma,E_1\cup E_2,E_3)=(\Pi_{1,2})^-$.
      \end{itemize}
  \end{itemize}
  Property~(i) follows from the above definitions.  Property~(ii) follows from the definitions and from the construction of the partitioning graphs; it is, again, illustrated by Figures~\ref{F:partitioning} and~\ref{F:partitioning-new-node}: if $(E_1,E_2,E_3)$ is the edge partition depicted on Figure~\ref{F:partitioning-new-node}, then the edge partitions depicted on Figure~\ref{F:partitioning}, left, are, respectively, $(E_1,E_2\cup E_3)$, $(E_1\cup E_3,E_2)$, and $(E_1\cup E_2,E_3)$.  As shown above, the corresponding partitioning graphs are respectively $\Pi_{2,3}$, $\Pi_{1,3}$, and~$\Pi_{1,2}^-$.
  
  We compute our exhaustive list~$\LL_\alpha$ of sparse bounding graphs for~$G_\alpha$ as follows.  Initially, let this list be empty.  Using Lemma~\ref{L:enum-embeddings}, we enumerate all combinatorial maps~$\Pi$ of graphs with at most $c+3w$ vertices and $3(75c+27w)/2$ edges properly embedded on~$\complex$ (possibly non-cellularly), with faces labelled 0, 1, 2, or~3, and such that the labels appearing on the vertices are exactly the vertices of the middle set of~$\alpha$, $\beta$, or~$\gamma$ (and each label appears exactly once).  This takes $(c+w)^{O(c+w)}$ time.  Whenever $\Pi_{1,3}\in\LL_\beta$ and $(\Pi_{1,2})^-\in\LL_\gamma$, we add $\Pi_{2,3}$ to~$\LL_\alpha$.
  
  Finally, we need a postprocessing step to control the size of $\LL_\alpha$.  First, we remove the graphs that are not sparse or contain vertices that bear a label not in~$\mmid(\alpha)$.  Second, we eliminate duplicates by testing pairwise isomorphism between the labelled combinatorial maps in~$\LL_\alpha$.  This ensures that $\LL_\alpha$ has size at most $(c+w)^{K(c+w)}$.
  
  Now, $\LL_\alpha$ contains only sparse bounding graphs for~$G_\alpha$, by~(i) above.  Moreover, let $\Gamma$ be a sparse proper cellular graph embedding of~$G$ on~$\complex$.  One of the graphs~$\Pi$ enumerated in the previous paragraph is $\Pi(\Gamma,\nu)$, by sparsity and because $\Pi(\Gamma,\nu)$ has at most $c+3w$ vertices.  By definition of $\LL_\beta$ and~$\LL_\gamma$, we have that $\Pi(\Gamma,\beta)\in\LL_\beta$ and $\Pi(\Gamma,\gamma)\in\LL_\gamma$, so by~(ii) above, $\Pi(\Gamma,\alpha)\in\LL_\alpha$, which implies that $\LL_\alpha$ is exhaustive.
\end{proof}

\section{Reduction to proper cellular embeddings}\label{S:cell}

This section is devoted to proving the following result:
\begin{proposition}\label{P:cell}
  Let $\complex$ be a 2-complex with at most $c$ simplices, and $G$ a graph with at most $n$ vertices and edges and branchwidth at most~$w$.  Assume that $G$ and~$\complex$ satisfy the properties of Proposition~\ref{P:preproc}.  In $c^{O(c)}+O(cn)$ time, one can compute a graph~$G'$, and $c^{O(c)}$ 2-complexes $\complex_i$, such that:
  \begin{enumerate}
      \item each $\complex_i$ and $G'$ satisfy the properties of Proposition~\ref{P:preproc};
      \item $G'$ has at most $5cn$ vertices and~$5cn$ edges, and branchwidth at most~$w$;
      \item each $\complex_i$ has size at most~$c$;
      \item if, for some~$i$, $G'$ embeds into~$\complex_i$, then $G$ embeds into~$\complex$;
      \item if $G$ embeds into~$\complex$, then for some~$i$, $G'$ has a proper cellular embedding into~$\complex_i$.
  \end{enumerate}
\end{proposition}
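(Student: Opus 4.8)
The plan is to take an arbitrary embedding of $G$ on $\complex$ and to ``simplify'' the embedding and the 2-complex simultaneously, until the embedding becomes proper and cellular on a 2-complex $\complex_i$ that is no larger than $\complex$; the family $\{\complex_i\}$ will be, independently of $G$, the set of all 2-complexes reachable from $\complex$ by the simplifying moves described below, enumerated up-front. First I would define $G'$ by subdividing every edge of $G$ into a path with $5c$ edges. Subdividing an edge does not change the homeomorphism type of the underlying topological space, does not increase the branchwidth (the few trivial small cases being handled by hand), and preserves the properties of Proposition~\ref{P:preproc}; hence $G'$ satisfies items~1 and~2, it has at most $5cn$ vertices and edges, and $G$ embeds on a space if and only if $G'$ does. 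The reason for the subdivision is that any embedding of $G'$ can be reparametrized so that a vertex of $G'$ lands on each of the at most $c$ singular points that a given edge of $G'$ passes through, with room to spare.

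The simplifying moves act on a 2-complex given in the topological data structure of Section~\ref{S:datastruct-2complexes}, and are of two kinds. \emph{Move~1 (detaching a singular point):} pick a singular point $p$ and detach it \emph{fully}, i.e., replace it by one new vertex per link component at $p$, then delete any component that becomes a single vertex; the new vertices are regular, so the original complex is recovered by re-identifying finitely many regular points. \emph{Move~2 (simplifying a face):} pick a connected non-disk face of a graph embedded on the complex, remove its interior and paste a disk along each of its boundary cycles (treating the part of the boundary of $\complex$ that it may contain in the obvious way); this lowers the genus and/or splits the face, and the original complex is recovered by removing a small disk inside one face in each of finitely many connected components and plumbing a surface-with-boundary back in. Starting from $\complex$ and iterating, the topological size stays $O(c)$, because every reachable complex lies ``between'' $\complex$ and its full detachment, the latter being the fixed disjoint union of at most $c$ segments and a surface with at most $c$ triangles; hence there are only $c^{O(c)}$ reachable complexes up to homeomorphism. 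We enumerate them by a breadth-first exploration from $\complex$, discarding duplicates via Lemma~\ref{L:homeo-2complex}; each reachable complex offers $O(c)$ moves to try, so the enumeration runs in $c^{O(c)}$ time, and after removing isolated-vertex components we obtain the $\complex_i$, giving items~1 and~3; together with the $O(cn)$ construction of $G'$ this gives the claimed running time.

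For item~4, it suffices to prove that each move $\complex''\to\complex'''$ has the property that every graph embeddable on $\complex'''$ is embeddable on $\complex''$; composing along the move sequence from $\complex$ to $\complex_i$, and using that $G$ embeds on $\complex$ iff $G'$ does, yields item~4. For Move~1 this holds because an embedding on $\complex'''$ can be isotoped off the regular points that get re-identified (each has a disk or half-disk neighborhood), after which the identification map is injective on the image. For Move~2, given any embedding of a graph on $\complex'''$, one removes a tiny disk inside a face in each relevant connected component of $\complex'''$ and plumbs the required surface-with-boundary back in; the embedding, avoiding these disks, carries over, and the result is homeomorphic to $\complex''$ because the homeomorphism type of such a plumbing depends only on which connected components are involved, not on the chosen disks.

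For item~5, start from an embedding of $G'$ on $\complex$. Isotope it off the regular part of the boundary of $\complex$. Then, for each singular point not covered by $G'$, apply Move~1 (full detachment): this deletes that singular point, does not affect the coverage of the others, and the embedding carries over since it avoided the detached point; after at most $c$ such moves every remaining singular point is in the image of $G'$, hence (the image of a single point of $G'$) is either a vertex of $G'$ or an interior point of a unique edge of $G'$. For each singular point of the latter type, reparametrize that edge so that one of its $5c$ subdivision vertices maps onto it; since the edge meets at most $c$ singular points, such a reparametrization exists and is an isotopy that does not change the 2-complex. Now the embedding is proper, all singular points are covered (so no face contains one in its interior, and each face is a manifold with boundary), and while some face is not an open disk plus a part of the boundary we apply Move~2 to it: the embedding carries over, no new singular point is created and $G'$ is not moved (so properness is preserved), and the process terminates since the total genus and number of boundary components of the faces strictly decrease. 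The final complex is reachable from $\complex$, hence is one of the $\complex_i$, and $G'$ is properly and cellularly embedded on it. The main obstacle is the reverse direction of item~4, specifically the analysis of Move~2: one must recover the removed genus and face-mergings for an \emph{arbitrary} embedding on the simplified complex (not the one that produced it), using that such plumbings are determined up to homeomorphism by connected-component data alone and that a one-dimensional graph can always be isotoped off finitely many small disks; maintaining all the invariants (properness, cellularity, coverage of the singular points, avoidance of the regular boundary) through the moves in item~5, and performing the moves in the correct order, is the other point demanding care.
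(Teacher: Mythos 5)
Your overall strategy (subdivide $G$, detach singular points, simplify non-disk faces, enumerate the reachable complexes) is in the spirit of the paper, but the key quantitative step is missing and, as stated, fails. The crux is your claim that the complexes reachable by Move~2 number only $c^{O(c)}$ ``because every reachable complex lies between $\complex$ and its full detachment'': Move~2 performs surgery on the detached surface, so its results do not lie between $\complex$ and its detachment at all, and without any restriction on the curves being cut the reachable family is not even finite. Concretely, a planar piece of $G'$ sitting inside a disk of the detached surface produces an annulus face whose core is contractible; Move~2 applied to it splits off a sphere component, and iterating yields $\complex$ plus arbitrarily many spheres, so your breadth-first enumeration does not terminate (and ``each reachable complex offers $O(c)$ moves'' is not well defined graph-freely, since Move~2 is defined relative to a face of an embedded graph). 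The paper avoids exactly this by (i) deleting the planar connected components of $G$ up front (harmless when the detached surface is non-empty), which guarantees that every cut needed to cellularize is along a \emph{non-contractible} curve, and (ii) bounding the complexes reachable by \emph{essential} cutting operations via a decreasing potential (Lemma~\ref{L:cutsurf}), plus allowing removal of subsets of connected components. Without these two ingredients your bound in items~3--5 and the $c^{O(c)}$ running time have no justification.

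Two further steps of item~5 are not achievable with your two moves. First, if a singular point $p$ is covered by a vertex $v$ of $G'$ but some link component at $p$ carries no edge of $G'$ emanating from $p$, then a face in that link component has $p$ as an isolated boundary point (e.g.\ an open disk punctured at $p$), which is neither cellular nor repairable by Move~2 (there is no boundary cycle to cap), and full detachment (Move~1) is not available either, since $v$'s edges may use several link components that must stay identified. The paper's remedy is to split $p$ according to an arbitrary \emph{partition} of its link components (a $c^{O(c)}$-size family of complexes), of which your full detachment is only the extreme case. Second, isolated segments are not addressed: if $G'$ covers an isolated segment only partially (or not at all), the adjacent face has a dangling segment and can never become ``a disk plus boundary''; the paper handles this by subdividing each isolated segment into five pieces and enumerating all subsets to delete, a step absent from your construction. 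Finally, a smaller point: your justification of item~4 for Move~2 (``the homeomorphism type of the plumbing depends only on which connected components are involved'') is false in general (re-attaching a handle to an orientable component can be done in orientation-preserving or reversing fashion, giving non-homeomorphic results); the correct and simpler argument is to isotope the $1$-dimensional graph off the specific capping disks and re-glue the removed face along them, recovering the original complex exactly.
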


There are two main ideas in the proof: (1) To make an embedding into~$\complex$ proper, it suffices to subdivide its edges and to move the embedding slightly; (2) to make a proper embedding into~$\complex$ cellular, it suffices to simplify~$\complex$, by detaching some link components, removing some isolated edges, and simplifying the topology of the surface parts; there are $c^{O(c)}$ ways to achieve this, so we can try all possibilities.

We start with auxiliary results.  Let $\surf$ be a surface (possibly disconnected, possibly with boundary).  A \emphdef{cutting operation} on~$\surf$ consists of cutting it along a simple closed curve, and attaching a disk to the resulting boundary component(s).  A cutting operation is \emphdef{essential} if the simple closed curve is non-contractible.

The following result is not hard and essentially folklore (a related but slightly weaker result is provided by Matou\v{s}ek, Sedgwick, Tancer, and Wagner~\cite[Lemma~3.1]{mstw-utsnc-16}), but we could not find a precise reference.
\begin{lemma}\label{L:cutsurf-breadth}
  Let $\surf$ be a (connected) surface with genus~$g$.  The number of possibly disconnected surfaces, up to homeomorphism, that can be obtained from~$\surf$ by a cutting operation is at most $g+3$, and we can compute them in linear time.  Moreover, this cutting operation leads either to a single surface with genus strictly smaller than~$g$, or to two surfaces, the sum of the genera of which equals~$g$, and the size of the surface (sum of the number of connected components, total genus, and number of boundary components) increases by at most one.
\end{lemma}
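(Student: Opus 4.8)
The plan is to classify the simple closed curve~$\gamma$ along which we cut. Since $\surf$ is a surface without boundary, exactly one of four cases occurs: (a)~$\gamma$ is contractible, hence bounds an embedded disk (by the standard fact that a null-homotopic simple closed curve on a surface bounds a disk); (b)~$\gamma$ is essential and one-sided (its regular neighborhood is a M\"obius band); (c)~$\gamma$ is essential, two-sided, and non-separating; (d)~$\gamma$ is essential and separating. In each case I would pin down the resulting possibly disconnected closed surface up to homeomorphism using two elementary facts: cutting a surface along a simple closed curve preserves its Euler characteristic, and capping a boundary circle with a disk raises it by one; together with tracking the number of components and the orientability of each piece (a subsurface of an orientable surface is orientable), this determines the homeomorphism type.

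Concretely, in case~(a) the outcome is $\surf\sqcup S^2$; in case~(b) a single connected closed surface of genus~$g-1$; in case~(c) a single connected closed surface of genus~$g-2$, which is orientable whenever $\surf$ is; and in case~(d) two connected closed surfaces whose genera sum to~$g$, neither a sphere since $\gamma$ is essential (a piece that capped to a sphere would have been a disk, making $\gamma$ contractible). This gives the ``moreover'' part at once: the non-separating cuts~(b) and~(c) yield a single surface of genus $g-1$ or $g-2$, strictly below~$g$; the cuts~(a) and~(d) yield two surfaces with genera summing to~$g$, and in both the number of components goes up by one while the total genus and the (empty) set of boundary components are unchanged, so the size increases by exactly one.

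For the counting bound, write $\Sigma_h$ for the orientable surface of genus~$h$ (Euler genus~$2h$) and $N_m$ for the non-orientable surface of genus~$m$. Case~(a) contributes one homeomorphism type. The outcomes of~(b) and~(c) are closed connected surfaces of genus $g-1$ or $g-2$; a fixed genus value~$g'$ is realized by at most two surfaces (the orientable one when $g'$ is even, the non-orientable one when $g'\ge1$), and since $g-1$ and $g-2$ have opposite parities, at most one of the two values is even, so~(b) and~(c) together contribute at most~$3$ types — and at most~$1$ when $\surf$ is orientable, since then no one-sided curve exists and~(c) stays orientable. The outcomes of~(d) are unordered pairs of closed surfaces, neither a sphere, whose genera sum to~$g$; for non-orientable~$\surf$, using $\Sigma_h\#N_m\cong N_{2h+m}$, the realizable pairs are exactly the $\{N_a,N_b\}$ with $a+b=g$ and $a,b\ge1$, together with the $\{\Sigma_h,N_m\}$ with $2h+m=g$ and $h,m\ge1$, at most $g-1$ in total; for orientable~$\surf$ only pairs of orientable surfaces arise, which are fewer. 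Summing, the number of outcomes is at most $1+3+(g-1)=g+3$, and since every outcome is named by elementary arithmetic in~$g$, the list is computable in time linear in~$g$.

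The only place where some care is needed is case~(d) and the verification that the bounds for~(b), (c), (d) add to $g+3$ rather than $g+4$; the decisive point is the parity observation that forbids both non-separating cut types from simultaneously offering two orientability options. The remaining ingredients — the one-/two-sided and separating/non-separating dichotomies for essential curves, null-homotopic curves bounding disks, and the effect of cutting and capping on Euler characteristic and orientability — are standard surface topology, which is why the authors call the statement folklore.
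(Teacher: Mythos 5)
Your proof is correct and follows essentially the same route as the paper's: a case analysis on the cut curve (separating, two-sided non-separating, one-sided — you merely split off the contractible case, which the paper folds into the separating case), with the homeomorphism types pinned down by the classification of surfaces, Euler characteristic bookkeeping, and orientability tracking. You are in fact more explicit than the paper on the final count summing to $g+3$, which the paper leaves implicit, but the underlying argument is the same.
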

\begin{proof}
  This basically follows from the classification of surfaces together with Euler's formula.  A cutting operation of~$\surf$ along a closed curve~$\gamma$ falls into exactly one of the following three categories:
  \begin{enumerate}
  \item \emph{Case where $\gamma$ is separating.}  The cutting operation on~$\surf$ results in two surfaces $\surf_1$ and~$\surf_2$, in which their respective genera $g_1$ and~$g_2$ satisfy $g=g_1+g_2$.  Moreover, if $\surf$ is non-orientable, then at least one of $\surf_1$ or~$\surf_2$ is non-orientable.  Finally, all pairs of surfaces $(\surf_1,\surf_2)$ satisfying these constraints can be obtained as the result of a cutting operation on~$\surf$.
  
  \item \emph{Case where $\gamma$ is non-separating but two-sided.}  This is only possible if $g\ge2$.  The cutting operation on~$\surf$ results in a single surface~$\surf'$ with genus~$g-2$.  If $\surf$ is orientable, then so is~$\surf'$; otherwise, $\surf'$ is either orientable or non-orientable (unless of course $g=2$, in which case it is necessarily orientable, or $g$ is odd, in which case it is necessarily non-orientable).  All surfaces~$\surf'$ satisfying these constraints can be obtained.
 
  \item \emph{Case where $\gamma$ is one-sided.}  This is only possible if $\surf$ is non-orientable and $g\ge1$.  The cutting operation on~$\surf$ results in a single surface~$\surf'$ with genus~$g-1$, orientable or not (unless of course $g=1$, in which case it is orientable, or $g$ is even, in which case it is non-orientable).    All surfaces~$\surf'$ satisfying these constraints can be obtained.\qedhere
  \end{enumerate}
\end{proof}

\begin{lemma}\label{L:cutsurf}
  Let $\surf$ be a surface with $k$~connected components, total genus~$g$, and with $b$~boundary components in total.
  In $(k+g+b)^{O(g+b)}$ time, we can enumerate all $(k+g+b)^{O(g+b)}$ possibly disconnected surfaces with boundary, up to homeomorphism, arising from~$\surf$ by one or several successive \emph{essential} cutting operations.  These surfaces have $O(k+g+b)$ connected components and size $O(k+g+b)$.
\end{lemma}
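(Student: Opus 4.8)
The plan is to iterate Lemma~\ref{L:cutsurf-breadth}: starting from~$\surf$, we repeatedly perform, in all possible ways, a single essential cutting operation on a single connected component, and we collect all the (possibly disconnected) surfaces obtained, up to homeomorphism. Since surfaces are classified by the orientability, genus, and number of boundary components of their connected components, testing homeomorphism and removing duplicates is immediate. To enumerate all results of one essential cut on a disconnected surface~$\surf''$, we consider each connected component~$c$ of~$\surf''$ and each surface listed by Lemma~\ref{L:cutsurf-breadth} for~$c$, keep those corresponding to an essential cut (an outcome of Lemma~\ref{L:cutsurf-breadth} is the result of an essential cut unless it is~$c$ together with one extra sphere, which is precisely the outcome of cutting along a contractible curve), and replace~$c$ by that outcome inside~$\surf''$. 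Running this until no new surface appears yields the desired set.

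The heart of the proof is the claim that every sequence of successive essential cutting operations starting from~$\surf$ has length $O(g+b)$. Given this, since each essential cut increases the size of the surface by at most one (Lemma~\ref{L:cutsurf-breadth}), every surface produced has size at most $(k+g+b)+O(g+b)=O(k+g+b)$ and $O(k+g+b)$ connected components; in particular the process terminates. To prove the length bound I would establish three elementary facts about an essential cutting operation: (a) the total number of boundary components is unchanged, since cutting creates one or two boundary circles which are then capped off; (b) the total genus never increases, and it strictly decreases, by one or two, precisely when the cut is non-separating, so each non-separating cut removes at least one unit of genus; (c) the number of sphere components increases by at most one per cut, and only through a non-separating cut, because a separating cut produces two pieces neither of which is a sphere (otherwise the cutting curve would bound a disk, contradicting essentiality) out of a piece that was not a sphere either (a sphere has no essential curve). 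Now consider a sequence with $s$ separating cuts and $t$ non-separating cuts. By~(b), $t\le g$. Each separating cut increases the component count by exactly one and non-separating cuts leave it unchanged, so after the sequence there are $k+s$ components; among them, the number of sphere components is at most $k+t\le k+g$ by~(c), while the number of non-sphere components is at most the current sum, over non-sphere components, of (genus plus number of boundary components), each such term being at least one, which by~(a)--(b) is at most $g+b$. Hence $k+s\le(k+g)+(g+b)$, so $s=O(g+b)$, and the sequence has length $s+t=O(g+b)$.

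To get the count $(k+g+b)^{O(g+b)}$, rather than the weaker $(k+g+b)^{O(k+g+b)}$ that a naive estimate gives, note that essential cuts are never applied to disk or sphere components, so the sphere components of~$\surf$ itself are never modified; writing $\surf=\surf_{\mathrm{sph}}\sqcup\surf_{\mathrm{rest}}$ accordingly, every reachable surface is $\surf_{\mathrm{sph}}$ together with a surface reachable from~$\surf_{\mathrm{rest}}$. By~(a)--(b) applied at time zero, $\surf_{\mathrm{rest}}$ has at most $g+b$ components, total genus~$g$, and $b$ boundary components; by the length bound applied to~$\surf_{\mathrm{rest}}$, every surface reachable from it has $O(g+b)$ components, each of genus at most~$g$ (genus is non-increasing) and with at most~$b$ boundary components (total boundary is invariant). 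The number of such surfaces, up to homeomorphism, is at most $\bigl(O((g+b)^2)\bigr)^{O(g+b)}=(g+b)^{O(g+b)}$, since each component has one of $O((g+b)^2)$ types. This bounds the number of surfaces in our collection, and hence also the number of connected components ($O(k+g+b)$) and size ($O(k+g+b)$) of each of them. Finally, each breadth-first step processes each surface in time polynomial in its size~$O(k+g+b)$, invoking Lemma~\ref{L:cutsurf-breadth} (which runs in linear time) on each of its $O(k+g+b)$ components, so the total running time is $(k+g+b)^{O(g+b)}$, as claimed.

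I expect the main obstacle to be fact~(c) and the ensuing bound on the number of separating cuts independently of~$k$: a priori one could split off trivial pieces indefinitely, and it is exactly the combination ``separating cuts preserve the boundary count, create no spheres, and preserve total genus'' together with the genus accounting of~(b) that pins down the number of non-trivial components at every stage and hence the length of any reduction.
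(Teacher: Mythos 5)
Your proposal is correct and takes essentially the same route as the paper: iterate Lemma~\ref{L:cutsurf-breadth} and bound the length of any sequence of essential cuts by $O(g+b)$, your separating/non-separating bookkeeping (with the observation that essential separating cuts never create spheres) being equivalent to the paper's potential function $2\cdot(\text{genus})-(\text{number of components})$, and your direct count of reachable homeomorphism types playing the role of the paper's depth-times-branching bound on its tree of surfaces. The only detail to make explicit is that for a component with boundary, the outcome of a separating cut also depends on how the boundary components are split between the two pieces, so the per-component enumeration via Lemma~\ref{L:cutsurf-breadth} needs the extra factor $b+1$ that the paper includes; this does not affect any of your bounds.
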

\begin{proof}
  It is useful to organize the set of all surfaces (possibly disconnected, possibly with boundary) arising by essential cutting operations in a tree with root~$\surf$, in which the children of a node result from a single essential cutting operation.  We prove that (1) the depth of the tree is $O(g+b)$ and that (2) each node of the tree has $O((k+1)(g+1)(b+1))$ children, which concludes (because by Lemma~\ref{L:cutsurf-breadth}, the size of a surface increases by at most one by a cutting operation).

  Let $\surf'$ be a (possibly disconnected, possibly with boundary) surface resulting from a sequence of essential cutting operations on~$\surf$.  By Lemma~\ref{L:cutsurf-breadth}, the total genus of~$\surf'$ is at most~$g$.  Moreover, since we consider only essential cutting operations, each connected component of~$\surf'$ either has positive genus or contains at least one boundary component, unless it was itself a connected component of~$\surf$; so the number of connected components of~$\surf'$ is at most $k+g+b$.
  
  Let $\varphi(\surf')$ be equal to twice the total genus of~$\surf'$ minus its number of connected components.  By Lemma~\ref{L:cutsurf-breadth}, this potential function strictly decreases at each cutting operation.  Moreover, we have $\varphi(\surf)=2g-k$, and by the previous paragraph $\varphi(\surf')$ is at least $-(k+g+b)$.  This proves~(1).
  
  By Lemma~\ref{L:cutsurf-breadth}, for any surface of genus~$g$ without boundary, there are at most $g+3$ ways of performing a cutting operation up to homeomorphism.  After a sequence of essential cutting operations, we have a surface~$\surf'$ with at most $k+g+b$ connected components, with total genus at most~$g$, and with $b$~boundary components.  The number of surfaces that can be obtained from~$\surf'$ by a cutting operation is at most $k(g+3)(b+1)$, since we first choose which connected component to cut along, the way to cut it ignoring the boundary components, and the number of boundary components in each connected component (if the cut is separating).
\end{proof}

We can now conclude the proof of this section.
\begin{proof}[Proof of Proposition~\ref{P:cell}]
  First, if the detached surface of~$\complex$ is non-empty (equivalently, if $\complex$ has at least one triangle), then we test the planarity of each connected component of~$G$~\cite{ht-ept-74} in $O(n)$ time, and remove every connected component of~$G$ that is planar; obviously, this does not affect the embeddability of~$G$ on~$\complex$.
  
  In a second step, we split each isolated edge of~$\complex$ into five isolated edges.  Then, for each subset~$S$ of the isolated edges, we build a new 2-complex obtained from~$\complex$ by removing~$S$.   We obtain $2^{O(c)}$ 2-complexes, each of size $O(c)$.  The input graph~$G$ embeds on~$\complex$ if and only if it embeds into one of these 2-complexes; moreover, if $G$ embeds on~$\complex$, it embeds into one of the 2-complexes in such a way that every isolated edge is covered by the embedding.  (Indeed, remember that $G$ has at most one connected component that is a path.)
  
  We now iteratively dissolve every degree-two vertex of~$G$, and then subdivide $5c$ times each edge of~$G$.  This new graph~$G'$ has at most $5cn$ vertices and edges, and branchwidth at most~$w$.  Clearly, $G$ embeds on~$\complex$ if and only if $G'$ embeds in one of the 2-complexes defined in the previous paragraph; moreover, if $G$ embeds on~$\complex$, then $G'$ has an embedding on one of these 2-complexes in which the relative interior of every edge of~$G'$ contains no singular vertices of~$\complex$ (and, as above, such that every isolated edge is covered by the embedding).
  
  Each singular vertex~$v$ of each of these 2-complexes is incident to at least two link components.  For each such singular vertex~$v$ and for each partition of the link components at~$v$, we replace $v$ with new vertices, one for each element in the partition; two link components at~$v$ stay adjacent via one of these new vertices if and only if these link components are in the same part.  We obtain $c^{O(c)}$ 2-complexes, each of size $O(c)$.  The input graph~$G$ embeds on~$\complex$ if and only if $G'$ embeds in one of these 2-complexes; moreover, if $G$ embeds on~$\complex$, then $G'$ has an embedding into one of these 2-complexes in which every link component of each singular vertex~$v$ is used by an edge of~$G$ connected to~$v$ in that link component (and, as above, such that every isolated edge is covered by the embedding, and such that the relative interior of every edge is distinct from any singular vertex).
  
  Every embedding of a graph into a 2-complex can be perturbed so that it avoids the boundary of the 2-complex, except possibly at singular vertices. This means that, if $G$ embeds on~$\complex$, then $G'$ has a proper cellular embedding into one of the 2-complexes built in the previous paragraph, except that the faces of~$G'$ may fail to be disks, but are more general (connected) surfaces with boundary.
  
  To dispense ourselves from this latter exception, we need to build more 2-complexes.  This case occurs only if the detached surface is non-empty, so by our earlier preprocessing, we can assume that $G'$ contains no planar connected component, and so has $O(c)$ connected components (because, by Lemma~\ref{L:genus-oversurface}, in order for $G'$ to be embeddable on a 2-complex of size $O(c)$, it must have genus $O(c)$).
 
  The detached surface~$\surf$ is a surface (possibly disconnected, possibly with boundary); the trace of the set of singular vertices on~$\surf$ corresponds to marked vertices, some in the interior of~$\surf$, some on the boundary.  Henceforth, we regard the former ones as small boundary components.  For each 2-complex obtained above, we consider, up to homeomorphism, all 2-complexes arising from zero, one, or several essential cutting operations on~$\surf$, and then by removing an arbitrary subset of the connected components of the resulting surface.  Up to homeomorphism, by Lemma~\ref{L:cutsurf}, there are $c^{O(c)}$ ways of cutting~$\surf$; since we consider all 2-complexes obtained up to homeomorphism, we need to consider each boundary component of the detached surface~$\surf$ as labelled (which is not the case in Lemma~\ref{L:cutsurf}); however, this only adds a factor of $c^{O(c)}$.  In total, we obtain, in $c^{O(c)}$ time, $c^{O(c)}$ 2-complexes, each of size $O(c)$.  Then, for each such 2-complex, we consider all possible ways of removing an arbitrary subset of connected components of the 2-complex; the number of the resulting 2-complexes is still~$c^{O(c)}$.  By construction, the input graph~$G$ embeds on~$\complex$ if and only if $G'$ embeds in one of these 2-complexes.  Moreover, assume that it is the case; as shown above, $G'$ has a proper embedding into one of the 2-complexes in the previous paragraph, except that faces of~$G'$ are (connected) surfaces, not necessarily disks.  Whenever a face has non-empty boundary and is not homeomorphic to a disk, we perform an essential cutting operation of that face along a closed curve inside that face; the closed curve along which we cut is non-contractible in~$\surf$, because otherwise it would bound a disk in~$\surf$, which would itself contain a planar connected component of~$G'$, and we have shown above that we may assume that no such component exists.  After iterating this operation as much as possible, every face of~$G'$ in the resulting 2-complex is either is a disk or has empty boundary; in the latter case, $G'$ avoids the corresponding connected component, so we can simply remove it.  Eventually, after a number of essential cutting operations of~$\surf$ and removing some connected components of the 2-complex, the embedding of~$G'$ is cellular in one of the $c^{O(c)}$ 2-complexes of size $O(c)$ enumerated above.

  By construction, each $\complex_i$ and~$G'$ satisfy the properties of Proposition~\ref{P:preproc}.
\end{proof}

\section{Algorithm for bounded branchwidth: Proof of Theorem~\ref{T:bounded-bw}}\label{S:bounded-bw}

We can now combine the ingredients of the previous sections to describe our algorithm for bounded branchwidth.
\begin{proof}[Proof of Theorem~\ref{T:bounded-bw}]
  By Proposition~\ref{P:preproc}, we can assume that $\complex$ has no 3-book and no connected component that is reduced to a single vertex, and that $G$ has no connected component reduced to a single vertex and at most one connected component that is a path.  If necessary, we convert the combinatorial description of~$\complex$ into the topological data structure (Section~\ref{S:datastruct-2complexes}).

  We apply Proposition~\ref{P:cell}.  In $c^{O(c)}+O(cn)$ time, we obtain a graph~$G'$ and a set of $c^{O(c)}$ 2-complexes~$\complex_i$ such that:
  \begin{enumerate}
      \item each $\complex_i$ and $G'$ satisfy the properties of Proposition~\ref{P:preproc};
      \item $G'$ has at most $5cn$ vertices and~$5cn$ edges, and branchwidth at most~$w$;
      \item each $\complex_i$ has size at most~$c$;
      \item\label{E:1} if, for some~$i$, $G'$ embeds into~$\complex_i$, then $G$ embeds into~$\complex$;
      \item\label{E:2} if $G$ embeds into~$\complex$, then for some~$i$, $G'$ has a proper cellular embedding into~$\complex_i$.
  \end{enumerate}

  We then run the algorithm from Proposition~\ref{P:algo} in each of the instances $(\complex_i,G')$, in total time ${(c+w)}^{O(c+w)}n$.  This algorithm correctly reports either that $G'$ has no sparse proper cellular embedding into~$\complex_i$ or that $G'$ has an embedding into~$\complex$.  If for at least one of these instances, the algorithm reports that $G'$ embeds into~$\complex_i$, then we report that $G$ embeds into~$\complex$.  Otherwise, we report that $G$ does not embed into~$\complex$.
  
  There remains to prove that the algorithm is correct.  If our algorithm reports that $G$ embeds into~$\complex$, then it is obviously indeed the case (Property~(\ref{E:1}) above).  Conversely, let us assume that $G$ has an embedding into~$\complex$.  Thus, by Property~(\ref{E:2}) above, let $i$ be such that $G'$ has a proper cellular embedding into~$\complex_i$.  By Proposition~\ref{P:sparse}, $G'$ also has such an embedding into~$\complex_i$ that is sparse.   Thus, the algorithm in Proposition~\ref{P:algo} (correctly) reports that $G'$ has an embedding into~$\complex_i$, and finally our overall algorithm reports that $G$ has an embedding into~$\complex$.
\end{proof}

\section{Reduction to bounded branchwidth: Proof of Theorem~\ref{T:irrelevant}}\label{S:irrelevant}

This section is devoted to the proof of Theorem~\ref{T:irrelevant}.  The proof technique is based on an irrelevant vertex method; we borrow ingredients from Kociumaka and Pilipczuk~\cite[Section~5]{kp-dvgbg-19} and from graph minor algorithms, but some new arguments are needed, in particular in the beginning of the proof of Proposition~\ref{P:irrelevant_0}, and because we need to take singular vertices into account.

\subsection{Finding a large planar part}

A \emphdef{wall} of size $k\times k$ is a subgraph of the $(k\times k)$-grid obtained by removing alternately the vertical edges of even (resp.\ odd) $x$-coordinate in each even (resp.\ odd) line, and then the degree-one vertices; see Figure~\ref{F:wall}.

\begin{figure}
\centering
\includegraphics[width=.2\linewidth]{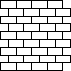}
\caption{A $(10\times 10)$-wall.}
\label{F:wall}
\end{figure}

As an intermediate goal towards the proof of Theorem~\ref{T:irrelevant}, we will prove in this subsection:
\begin{proposition}\label{P:irrelevant_0}
Let $G$ be a graph with $n$ vertices and edges and $g\ge2$ be an integer.  In $O(2^{\poly(g)}n)$ time, we can do one of the following:
\begin{enumerate}
\item compute a rooted branch decomposition of $G$ of width~$g^{O(1)}$;
\item correctly report that $G$ has genus at least~$g$;
\item or compute a cycle~$\gamma$ of $G$ such that one connected component of $G-\gamma$ is planar and contains a subdivision of the $(g\times g)$-wall, which is also computed.
\end{enumerate}
\end{proposition}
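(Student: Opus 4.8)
The plan is to reduce, via a grid-minor computation, to the case where $G$ contains a huge wall, and then to hunt inside that wall for a moderate subwall whose perimeter isolates a planar part of~$G$; if no such subwall exists, the genus must be large.

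First I would fix a polynomial threshold $\ell=g^{O(1)}$ (sized for the counting argument below) and run a grid-minor algorithm on~$G$ with target size~$\ell$: the deterministic algorithm of Robertson and Seymour~\cite{robertson1995graph} in time $f(g)\cdot n^2$, or the randomized algorithm of Chekuri and Chuzhoy~\cite{cc-pbgmt-16} in expected polynomial time. If it certifies that $G$ has no $(\ell\times\ell)$-grid minor, then by the polynomial excluded-grid theorem the branchwidth of~$G$ is $g^{O(1)}$, and I would compute a branch decomposition of that width (via a treewidth approximation, in $f(g)\cdot n$ time) and root it as in Section~\ref{S:branchdecomp}, giving outcome~(1). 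Otherwise I obtain an $(\ell\times\ell)$-grid minor of~$G$; since a large grid contains a large wall as a subgraph and walls are subcubic (so that ``is a minor of'' and ``has a subdivision in'' coincide for them), I would extract from it an explicit subdivision $W\subseteq G$ of an $(h\times h)$-wall with $h=g^{O(1)}$, together with its (combinatorial, planar) structure of perimeters and faces.

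Inside~$W$ I would fix $m=g^{O(1)}$ pairwise vertex-disjoint $((g+2)\times(g+2))$-subwalls $W_1,\dots,W_m$ (for instance, tiling~$W$), with $m$ chosen larger than the $O(g)$ bound appearing below. For each~$j$, let $\gamma_j$ be the perimeter cycle of~$W_j$, let $W_j'\subseteq W_j$ be its interior $(g\times g)$-subwall, which is connected and vertex-disjoint from~$\gamma_j$, and let $C_j$ be the connected component of $G-V(\gamma_j)$ containing~$W_j'$. I would test each~$C_j$ for planarity in linear time~\cite{ht-ept-74}. If some~$C_j$ is planar, I output outcome~(3) with the cycle~$\gamma_j$, the planar component~$C_j$, and the $(g\times g)$-wall subdivision $W_j'\subseteq C_j$. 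If no~$C_j$ is planar, I report outcome~(2). This post-processing costs $O(mn)=g^{O(1)}n$, so the running time is dominated by the grid-minor subroutine and matches the claim.

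Outcomes~(1) and~(3) hold by construction, so the heart of the proof is the soundness of outcome~(2): if $\mathrm{genus}(G)<g$ then some~$C_j$ is planar. I would fix an embedding of~$G$ on a surface~$S$ with $\mathrm{genus}(S)<g$ and use the topological fact that, since~$W$ is embedded on~$S$, all but $O(\mathrm{genus}(S))=O(g)$ of the faces of~$W$ bound embedded disks in~$S$ (a standard consequence of Euler's formula: an embedding of~$W$ on~$S$ has at most $O(\mathrm{genus}(S))$ faces that fail to be disks). As the $W_j$ are pairwise disjoint and non-nested, each such ``bad'' face is strictly enclosed by at most one perimeter~$\gamma_j$, so once $m$ exceeds this $O(g)$ bound there is a~$j$ all of whose enclosed faces are disks; these disks, glued along the (simply connected) combinatorial pattern of the wall interior, assemble into an embedded closed disk~$\Delta\subseteq S$ with $\partial\Delta=\gamma_j$ and $W_j'\subseteq\mathrm{int}(\Delta)$. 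Then $\mathrm{int}(\Delta)$ is one of the two components of $S\setminus\gamma_j$, so the connected subgraph~$C_j$ --- which avoids~$\gamma_j$ and meets~$\mathrm{int}(\Delta)$ --- is drawn entirely inside~$\Delta$ and is therefore planar, as desired. I expect this ``flat subwall'' statement to be the main obstacle: making the face count and the disk-gluing rigorous, in particular in the non-cellular case, is precisely where the argument must go beyond the surface-embedding arguments of Kociumaka and Pilipczuk~\cite{kp-dvgbg-19} that inspire it, and where new arguments are needed.
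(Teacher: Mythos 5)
Your first step (grid minor via Lemma~\ref{L:branch-or-grid}-style reasoning, hence a large wall subdivision, else a branch decomposition of width $g^{O(1)}$) matches the paper, and your reduction of outcome~(2) to the statement ``if $\mathrm{genus}(G)<g$ then some $C_j$ is planar'' is logically sound. The genuine gap is in your proof of that statement. You justify it by ``an embedding of $W$ on $S$ has at most $O(\mathrm{genus}(S))$ faces that fail to be disks,'' but that Euler-formula fact concerns the faces of the embedding of $W$ in~$S$, i.e.\ the components of $S\setminus W$, whereas what you need is a statement about the \emph{bricks} of the planar wall. When $G$ is embedded in a positive-genus surface, the rotation system of $W$ need not be the planar one, so the faces of the embedded wall need not correspond to bricks at all: a brick cycle may be non-contractible, one-sided (in the non-orientable case), or contractible yet bounding no disk whose interior avoids $\dot W$; and ``each bad face is strictly enclosed by at most one perimeter $\gamma_j$'' presupposes exactly the planar enclosure structure you are trying to establish. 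In particular, bounding the number of bad subwalls by $O(g)$ cannot be done by disjointness alone: disjoint non-contractible bricks can form arbitrarily large families of parallel curves on a fixed surface, so ruling out that many disjoint subwalls are simultaneously bad requires a real argument using the wall structure. This is essentially a ``flat subwall / plainly embedded ring'' lemma of the type proved by Kociumaka and Pilipczuk~\cite[Lemma~5.4]{kp-dvgbg-19}, which the paper invokes only later, inside Proposition~\ref{P:irrelevant} (Lemma~\ref{L:plainly}), where an embedding is actually available; it is not a one-line consequence of Euler's formula, and you yourself flag it as the main obstacle without supplying it. (Your subsequent disk-gluing step is fine \emph{given} wall-free brick disks, but the counting that produces such a subwall is the missing core.)

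By contrast, the paper's proof of Proposition~\ref{P:irrelevant_0} never reasons about a hypothetical surface embedding at this stage. It takes $g$ disjoint subwalls, tests planarity of the disjoint ``territories'' $G_i$ attached to them, and if all are non-planar concludes $\mathrm{genus}(G)\ge g$ by additivity of genus over vertex-disjoint subgraphs; if some territory $G_1$ is planar, it bounds the number of inner wall vertices ``connected to the outside'' by exhibiting a $J_g$-minor and invoking Miller's additivity theorem~\cite{miller1987additivity}, and then chooses the cycle $\gamma$ so that the component of $G-\dot\gamma$ containing the $(g\times g)$-wall is a subgraph of the already-tested planar territory. If you want to keep your route, you would need to state and prove the quantitative flat-subwall statement for bounded-genus embeddings (handling non-planar rotation systems, one-sided and non-contractible bricks, and parallel families), or else replace that step by a combinatorial certificate of large genus in the spirit of the paper's territory/$J_g$ argument.
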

(Our proof implies a degree of $18+\varepsilon$, for any $\varepsilon>0$, for the polynomials in~$g$ in the above statement.)

We will use the following lemma, whose proof relies on some advanced results related to treewidth and tree decompositions.  It is probably folklore in some circles, and some similar ideas have been used recently~\cite[Section~3.2]{sst-mavfw-24}, but we could not find a reference.
\begin{lemma}\label{L:branch-or-grid}
  Let $G$ be a graph with $n$ vertices and edges, and $k\ge2$.  Then, one can compute either a rooted branch decomposition of~$G$ of width polynomial in~$k$ or a $(k\times k)$-grid minor of~$G$, in time $O(2^{\poly(k)}\cdot n)$.
\end{lemma}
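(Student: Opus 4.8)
The plan is to combine an approximation algorithm for branchwidth (or treewidth) with an algorithmic version of the excluded grid theorem. The overall dichotomy comes from the well-known fact that a graph of large branchwidth contains a large grid minor, made constructive.

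First I would recall the relationship between branchwidth and treewidth: for any graph $G$, $\mathrm{bw}(G) \le \mathrm{tw}(G)+1 \le \tfrac{3}{2}\mathrm{bw}(G)$ (Robertson--Seymour), and a branch decomposition of width $w$ can be converted into a tree decomposition of width $O(w)$ and vice versa in linear time. So it suffices to work with treewidth. By the excluded grid theorem there is a function $h$ such that any graph of treewidth at least $h(k)$ contains the $(k\times k)$-grid as a minor; the current best bounds give $h(k)=k^{O(1)}$ (indeed polynomial), which is why the branchwidth bound in the statement is $k^{O(1)}$ and not something larger. I would invoke one of the following two subroutines, corresponding to the two running-time regimes in the statement. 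For the deterministic $f(k)\cdot n^2$ bound: run a constant-factor approximation for treewidth (e.g.\ a $4$-approximation running in $2^{O(\mathrm{tw})}n$ time, or the classical $O(3^{3w})$-type algorithm), obtaining either a tree decomposition of width $O(h(k))=k^{O(1)}$ --- from which we extract a rooted branch decomposition of width $k^{O(1)}$ as in Section~\ref{S:branchdecomp} --- or a certificate that $\mathrm{tw}(G) > h(k)$. In the latter case we run the deterministic polynomial-time algorithm of Robertson--Seymour / Chekuri--Chuzhoy that, given a graph of treewidth at least $h(k)$, computes a $(k\times k)$-grid minor; the known bound here is $f(k)\cdot n^{O(1)}$, and the $n^2$ factor in the statement absorbs the polynomial dependence on $n$ (one may need to state it as $f(k)\cdot n^{c_0}$ for a fixed $c_0$; I would simply cite the precise running time from~\cite{robertson1995graph}). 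For the expected-polynomial-time bound: use instead the randomized polynomial-time algorithm of Chekuri--Chuzhoy~\cite{cc-pbgmt-16}, which, given $G$ and $k$, in expected polynomial time either outputs a tree decomposition of width $k^{O(1)}$ or outputs a $(k\times k)$-grid minor; again we convert the tree decomposition to a rooted branch decomposition.

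The main obstacle --- really the only nontrivial point --- is bookkeeping the polynomial exponents so that the advertised bound "$k^{O(1)}$ for the width" and "$f(k)\cdot n^2$" (resp.\ expected polynomial) is actually met by the cited subroutines. In particular one must check that (i) the grid-minor-extraction step, in the deterministic case, runs in $f(k)\cdot n^{O(1)}$ with the polynomial in $n$ of bounded degree independent of $k$ (this is the case for the Robertson--Seymour algorithm, though the dependence on $k$ is large), and (ii) the treewidth approximation feeding into it is fast enough. If the cited deterministic algorithm has an $n$-exponent larger than $2$, the statement should be read with $n^{O(1)}$ in place of $n^2$ and the proof of Theorem~\ref{T:irrelevant} adjusted accordingly; alternatively one first sparsifies or argues $n$ is linear in the number of edges. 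I would also note that computing the grid minor gives, in the standard way, a model (the branch sets and the connecting paths), which is what later steps of Section~\ref{S:irrelevant} require; the conversion between "grid minor" and "subdivision of a wall" is routine since the $(k\times k)$-grid contains a $(k\times k)$-wall and a grid minor of a simple graph of bounded degree can be turned into a topological minor with only a constant loss in $k$.
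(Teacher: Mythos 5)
Your proposal is correct and follows essentially the same route as the paper: run a treewidth approximation with target width $k^{O(1)}$ (the paper uses Fomin et al.), convert the resulting tree decomposition into a rooted branch decomposition, and otherwise invoke the polynomial excluded grid theorem of Chekuri--Chuzhoy, extracting the grid minor either deterministically via Robertson--Seymour (Algorithm~4.4, which indeed runs in $f(k)\cdot n^2$ time, resolving the hedging in your write-up) or in expected polynomial time via Chekuri--Chuzhoy. The only detail you omit is the trivial reduction to the case $k\le |V(G)|$, which the paper notes at the outset.
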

(Our proof implies a degree of $9+\varepsilon$, for any $\varepsilon>0$, for the polynomials in~$k$ in the above statement.)
\begin{proof}
  In the same way as the branchwidth of a graph~$H$ is the minimum width of a branch decomposition of~$H$, the treewidth of~$H$ is the minimum width of a tree decomposition of~$H$.  We do not define tree decompositions here, but it suffices to know that one can, in linear time, convert a branch decomposition of~$G$ of width~$w$ into a tree decomposition of~$G$ of width at most~$w+1$; and similarly in linear time we can convert a tree decomposition of~$G$ of width~$w$ into a branch decomposition of~$G$ of width at most $3w/2-1$~\cite[Theorem~5.1]{rs-gm10o-91}.
  
  By a result by Chekuri and Chuzhoy~\cite{cc-pbgmt-16}, we know that for some large enough constants $c,d$, if $G$ has treewidth at least $ck^d$, then it has a $(k\times k)$-grid minor.  One can take $d=9+\varepsilon$, by a result by Chuzhoy and Tan~\cite{ct-ttbeg-21}.

  We will use an algorithm to approximate the treewidth by Bodlaender, Drange, Dregi, Fomin, Lokshtanov, and Mi.~Pilipczuk~\cite{bddflp-c5aat-16}: Given a graph~$G$ with $n$ vertices and edges, and an integer~$t$, in $2^{O(t)}n$ time, we either correctly report that the treewidth of~$G$ is at least~$t$, or compute a tree decomposition of width at most~$10t$.  In a first step, we apply it with $t=ck^d$.  If we get a tree decomposition of width at most $10ck^d$, we immediately obtain, and return, a (rooted) branch decomposition of width at most $15ck^d$, as desired.  The rest of the proof focuses on the former case, in which the treewidth of~$G$ is at least~$ck^d$.

  We will also use a result by Perkovic and Reed~\cite{pr-iaftd-00} (see also Sau, Stamoulis, and Thilikos~\cite[Proposition~28]{sst-kamcg-22}), which implies the following: Given an input graph~$H$ with $n$ vertices, we can compute in $O(n)$ time a minor~$H'$ of~$H$ with $|V(H')|\le\frac{15}{16}|V(H)|$ such that the treewidth of~$H'$ is at least half the treewidth of~$H$.  (This is because contracting a matching of a graph divides its treewidth by at most two.)
  
  We alternately apply the algorithm by Bodlaender et al.\ with $t=2ck^d$, and the algorithm by Perkovic and Reed, until Bodlaender's algorithm computes a tree decomposition of width at most $20ck^d$.  This must happen at some point because the algorithm of Perkovic and Reed decreases the number of vertices geometrically.  The step before this happens, the treewidth was at least~$2ck^d$, so at the end it is also at least~$ck^d$.  We thus computed, in $O(2^{O(k)}n)$ time, a minor~$G'$ of~$G$ whose treewidth is between $ck^d$ and $20ck^d$, together with a tree decomposition of~$G'$ of width at most~$20ck^d$.

  By the choice of $c$ and~$d$, the graph~$G'$ has a $(k\times k)$-grid minor, which we can compute in $O(2^{\poly(k)}n)$ time because we have a tree decomposition of width at most~$20ck^d$, using an algorithm by Adler, Dorn, Fomin, Sau, and Thilikos~\cite[Theorem~1]{adfst-fpamc-11}  (the theorem is stated as a decision problem, but in Section~5 of the same paper it is explained how to explicitly find a minor).
\end{proof}
Alternatively, with fewer and more standard tools, one could achieve this with a worse running time of $f(k)n^2$, for some computable function~$f$.  Indeed, we can first approximate the treewidth of~$G$, e.g., by the algorithm by Bodlaender et al.~\cite{bddflp-c5aat-16}: Provided $c$ and~$d$ are large enough, in time $2^{\poly(k)}n$, we either compute a tree decomposition of width at most $10ck^d$, and thus immediately obtain a (rooted) branch decomposition of width at most~$15ck^d$~\cite[Theorem~5.1]{rs-gm10o-91}, or correctly report that the treewidth of~$G$ is at least~$ck^d$, which by the result by Chekuri and Chuzhoy~\cite{cc-pbgmt-16} implies the existence of a $(k\times k)$-grid minor, which we can compute in $f(k)n^2$ time, for some computable function~$f$~\cite[Algorithm~4.4]{rs-gm10o-91}.

\def\csg{\lceil\sqrt g\rceil}
\begin{proof}[Proof of Proposition~\ref{P:irrelevant_0}]
  We apply Lemma~\ref{L:branch-or-grid} with $k=60{\lceil\sqrt g\rceil}^4$.  If the outcome is a rooted branch decomposition, then the algorithm returns it (Case~1).  Otherwise, we have computed a $(60{\lceil\sqrt g\rceil}^4\times60{\lceil\sqrt g\rceil}^4)$-grid minor of~$G$, and thus a subgraph~$\dot W$ of~$G$ that is a subdivision of a $(60{\lceil\sqrt g\rceil}^4\times60{\lceil\sqrt g\rceil}^4)$-wall~$W$.

  We first compute, in $O(g^4n)$ time, disjoint non-adjacent $(50g\csg\times 50g\csg)$-walls $W_1, \dots, W_g$ of~$W$, and the corresponding subdivisions $\dot W_1,\dots,\dot W_g$ that are subgraphs of~$\dot W$, in such a way that $W-W_i$ is connected for each~$i$.  For each~$i$, we consider the subgraph~$G_i$ of~$G$ induced by the vertices $v$ of~$G$ such that: (1) there exists a path from $v$ to~$\dot W_i$; (2) every path from $v$ to~$\dot W_j$, for some $j\ne i$, uses at least one vertex from~$\dot W_i$.  The graphs~$G_i$ are pairwise disjoint, and we can compute each of them easily in $O(n)$ time using suitable traversals of~$G$.  We test the planarity of each of them in linear time~\cite{ht-ept-74}.  If all these disjoint graphs are non-planar, we correctly report that $G$ has genus at least~$g$ (Case~2); this is easy to prove directly, and also follows from more general results, see, e.g., Miller~\cite[Theorem~1]{miller1987additivity}.  So without loss of generality, one of these graphs, say~$G_1$, is planar, and our algorithm computes it.

  By 3-connectivity, $W_1$ has a unique combinatorial embedding in the plane, up to symmetry and up to the choice of the outer (infinite) face; we consider the natural embedding of~$W_1$ in which the outer face has the largest degree.  An \emph{inner vertex} of~$W_1$ is one that is at distance, in~$W_1$, at least~6 from the vertices on the outer face in the natural embedding of~$W_1$ (we emphasize that this is the distance measured in the graph~$W_1$).  Remark that each vertex of~$W_1$ is a vertex of~$\dot W_1$.  We say that a vertex~$u$ of~$W_1$ is \emph{connected to the outside} if there is, in~$G$, a path whose vertices are, in this order, $u$, possibly some vertices of~$\dot W_1$ but not of~$W_1$, possibly some vertices of $G-\dot W_1$, and finally one vertex in~$\dot W-\dot W_1$.

  We claim the following: \emph{If at least $1000g$ inner vertices of~$W_1$ are connected to the outside, then $G$ has genus at least~$g$.}
  The strategy is similar to the argument in Kociumaka and Pilipczuk~\cite[Lemma~5.3]{kp-dvgbg-19}; we summarize the proof.  If at least $1000g$ inner vertices of~$W_1$ are connected to the outside, then a set~$U$ of $g$ inner vertices of~$W_1$ are connected to the outside, and at pairwise distance at least~16 in~$W_1$.  This implies that $G$ contains, as a minor, the graph~$J_g$ obtained from $g$ copies of~$K_5$ by subdividing an edge from each copy with a degree-two vertex and identifying these $g$~new vertices into a single vertex, the \emph{apex} of~$J_g$; see Figure~\ref{F:gadget}.  This graph has genus at least~$g$~\cite[Theorem~1]{miller1987additivity}.  This proves the claim.

\begin{figure}
  \centering
  \includegraphics[width=\linewidth]{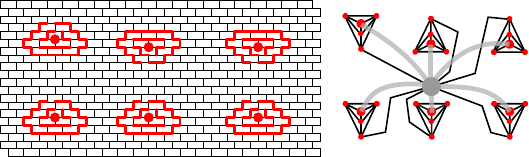}
  \caption{Illustration of the proof of the claim in Proposition~\ref{P:irrelevant_0}. The wall~$W_1$ (left), and a schematic view of the $J_p$ minor (right), illustrated with $g=6$.  Let $K_5^-$ be the graph $K_5$ with one edge removed.  If $W_1$ has many vertices connected to the outside, then a set~$U$ of at least $g$~vertices connected to the outside (represented as big disks on the left) are pairwise distant.  In the neighborhood of these vertices, we build a $K_5^-$ minor of~$\dot W_1$ for each vertex in~$U$, in which the vertex of~$U$ is the ``central'' vertex (the thick paths need to be contracted to obtain copies of~$K_5^-$).
  Because the vertices of~$U$ are pairwise distant, these minors are pairwise disjoint, and moreover the graph~$X$ consisting of the subdivided wall~$\dot W$ minus the union of these $K_5^-$ minors is connected.  From these $K_5^-$ minors, we build a $J_g$ minor of~$G$ (right) as follows.  The apex results from the contraction of~$X$.  Moreover, recall that each vertex of~$U$ is connected to the outside; each light thick edge, connecting some vertex $u\in U$ to~$X$, is obtained by contracting the corresponding path from~$u$ to a vertex of~$X$, except its first edge.}
\label{F:gadget}
\end{figure}
  In $O(g^3n)$ time, we can compute the inner vertices of~$W_1$ connected to the outside, using suitable traversals of~$G$ starting from $\dot W-\dot W_1$.  If there are at least $1000g$ of these, we report that $G$ has genus at least~$g$ (Case~2), which is correct by the above claim.  Otherwise, from the $(50g\csg\times 50g\csg)$-wall $W_1$, we can compute a cycle~$\gamma$ in~$W_1$, enclosing a $(g\times g)$-wall~$W'_1$ in~$W_1$ (in the natural embedding of~$W_1$) such that no vertex of~$W_1$ inside~$\gamma$ (in the natural embedding of~$W_1$) is connected to the outside.  Let $\dot\gamma$ be the cycle of~$\dot W_1$ corresponding to~$\gamma$, and let $H$ be the connected component of~$G-\dot\gamma$ containing the vertices of this $(g\times g)$-wall.  There remains to prove that $H$ is planar, and since $G_1$ is planar, it suffices to prove that $H$ is a subgraph of~$G_1$.  If it were not the case, $H$ would contain a vertex of~$\dot W_j$ for some $j\ne1$; but that would imply that a vertex of~$W'_1$ is connected to the outside, which is not the case.  We can thus correctly report~$\gamma$ (Case~3).
\end{proof}

\subsection{Finding an irrelevant vertex}

The following proposition will imply that if the third possibility in the statement of Proposition~\ref{P:irrelevant_0} holds (for some $g$ large enough), then one has an \emph{irrelevant vertex} for the embedding instance.
\begin{proposition}\label{P:irrelevant}
  Let $\complex$ be a 2-complex with $c\ge1$ simplices.  Let $G$ be a graph and $\gamma$ be a cycle in~$G$ such that one connected component of $G-\gamma$ is planar and contains a subdivision of the $(200c\times 200c)$-wall.  Let $v$ be the central vertex of this wall.  Then $G$ is embeddable on~$\complex$ if and only if $G-v$ is.
\end{proposition}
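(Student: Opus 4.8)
The plan is as follows. The forward direction is trivial: an embedding of~$G$ on~$\complex$ restricts to an embedding of the subgraph~$G-v$. So assume $G-v$ has an embedding~$\Gamma$ on~$\complex$; we modify~$\Gamma$ locally around the wall so as to re-insert~$v$ together with its incident edges, producing an embedding of~$G$.

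Let $\dot W\subseteq H\subseteq G-v$ be the given subdivision of the $(1000c\times1000c)$-wall~$W$, with centre~$v$. First I would record the relevant structure of a large wall: $W$ contains a sequence of pairwise disjoint cycles $C_1,\dots,C_m$ nested around~$v$ with $m=\Omega(c)$, together with cylindrical subgrids joining consecutive ones; let $\dot C_1,\dots,\dot C_m$ be the corresponding cycles of~$\dot W$ (each one a cycle of~$G-v$). The heart of the argument is to select an index~$i$ that is bounded by a suitable $O(c)$ but not too small, so that $\Gamma(\dot C_i)$ bounds a closed disk~$D$ in~$\complex$ with the following properties: $D$ contains no singular point of~$\complex$ and meets the boundary of~$\complex$ only (possibly) along $\Gamma(\dot C_i)$; $D$ lies on the side of~$\dot C_i$ carrying the images of the inner cycles $\dot C_1,\dots,\dot C_{i-1}$ (hence the side containing the ``hole'' where~$v$ must go); and $\Gamma$ maps into the interior of~$D$ nothing of~$G-v$ other than the part of~$\dot W$ strictly enclosed by~$\dot C_i$. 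To obtain such an~$i$ I would combine two ingredients. First, a counting/Euler-characteristic argument in the spirit of Lemma~\ref{L:nogon}: among $\Omega(c)$ pairwise disjoint nested cycles, only $O(c)$ can fail to bound a topologically clean disk on the relevant side, since a cycle that is non-contractible, or bounds a disk carrying a singular point, a boundary component, or extra genus, can be charged to a topological feature of the detached surface of~$\complex$, whose total complexity is $O(c)$. Second, the cylindrical subgrids between consecutive selected cycles provide collars through which any part of~$G-v$ drawn inside~$D$ but not belonging to the enclosed wall can be pushed outward (using that such a part is attached to the rest of~$G-v$ only along $V(\dot C_i)$, because an edge leaving~$D$ would have to cross~$\Gamma(\dot C_i)$).

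Next I would verify that every neighbour of~$v$ in~$G$ lies in the subgraph~$G_{\mathrm{in}}$ of~$G$ enclosed by~$\dot C:=\dot C_i$. Since $H$ is planar and a large wall has an essentially unique planar embedding (up to reflection and the choice of outer face), $\dot C$ separates~$v$ from the rest of~$H$ inside~$H$; hence every neighbour of~$v$ lying in~$H$ is enclosed by~$\dot C$. The only other possible neighbours of~$v$ lie on the cycle~$\gamma$. Here I would use that $\gamma$ is disjoint from~$\dot C$ in~$G-v$, so $\Gamma(\gamma)$ is a simple closed curve disjoint from $\Gamma(\dot C)=\partial D$ and therefore lies entirely inside or entirely outside~$D$; and since the wall is enormous (size $1000c$) compared with the $O(c)$ budget imposed by~$\complex$, there is room to choose~$i$ — or, if necessary, to first reroute a bounded part of~$\Gamma$ through the flat wall — so that $\Gamma(\gamma)$, and hence every neighbour of~$v$ on~$\gamma$, lies inside~$D$. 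Finally, $G_{\mathrm{in}}$ together with~$v$ and its incident edges is planar — the portion of~$G-v$ drawn inside~$D$ is planar because it is drawn in a disk, and $v$ is the centre of a huge flat wall inside~$D$, which leaves enough facial room to re-insert~$v$ and route its few incident edges — and it admits a planar embedding in which~$\dot C$ bounds the outer face. Replacing the portion of~$\Gamma$ drawn inside~$D$ by such a planar embedding, glued along $\partial D=\Gamma(\dot C)$ to the untouched remainder of~$\Gamma$, yields an embedding of~$G$ on~$\complex$.

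I expect the main obstacle to be the construction of the clean disk~$D$ in the second paragraph, together with the verification in the third that it captures all of~$v$'s attachments: one must make~$D$ simultaneously topologically trivial, free of extraneous pieces of~$G-v$, and large enough to enclose every neighbour of~$v$ — including those on~$\gamma$, which a priori could be drawn by~$\Gamma$ far from the wall. It is exactly here that the planar component~$H$ and the cycle~$\gamma$ of the hypothesis, and the large size of the wall relative to the complexity of~$\complex$, are essential; making the ``reroute / push outward'' arguments rigorous and tracking the constants so that $1000c$ is indeed enough is where the real work lies. (Part of the constant chase, and the reason $g\ge2$ suffices earlier, traces back to the genus bound of Lemma~\ref{L:genus-oversurface} and to how $\gamma$ is produced in Proposition~\ref{P:irrelevant_0}.)
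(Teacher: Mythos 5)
Your overall skeleton (trivial forward direction; given an embedding $\Gamma$ of $G-v$, locate a ``clean'' zone in the middle of the wall by charging bad events to the $O(c)$ topological features of~$\complex$, redraw the planar inside part there, and reinsert~$v$) is the same as the paper's, but your key step is genuinely gapped. You require an index~$i$ such that $\Gamma(\dot C_i)$ bounds a disk~$D$ in~$\complex$, on the side carrying the inner cycles, free of singular points and of every piece of $G-v$ other than the enclosed wall. Such an~$i$ need not exist: nothing forces the image of \emph{any} nested cycle to bound a disk on the inner side. For example, $\Gamma$ may draw the wall so that the face of the drawing at the wall's centre carries genus, a boundary component, or a singular feature of~$\complex$; then \emph{all} of the $\Omega(c)$ nested cycles fail simultaneously, so the failures cannot be charged injectively to distinct topological features and your Euler-characteristic counting collapses (also, disjoint non-contractible curves occur in arbitrarily large isotopic families on a genus-$O(c)$ surface, so ``non-contractible'' bad cycles are not $O(c)$ in number). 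The ``push extraneous pieces out through collars'' step is not available in general either, since pieces drawn inside may attach to inner wall vertices rather than only to $V(\dot C_i)$; and in your third paragraph you cannot simply ``reroute'' $\Gamma(\gamma)$ into~$D$, because the cycle~$\gamma$ may be attached to large, possibly non-planar parts of $G-v$ that cannot be drawn inside a disk.

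The paper (following Kociumaka--Pilipczuk) never asks the old embedding to bound a disk. It first builds a subdivided circular wall of height $45c+10$ and circumference three around~$v$, and by counting (at most $c$ incidences with singular points, at most $c$ with isolated segments, and genus at most $7c$ after capping the boundary components of the detached surface, against $9c+2$ disjoint rings) finds a ring that is embedded \emph{plainly}: it lies in the detached surface and each central brick bounds a disk whose interior is free of the wall --- a purely local condition, much weaker than your clean-disk requirement. The cited consistency statement of Kociumaka--Pilipczuk then guarantees that the central circle of this ring is two-sided with its radial edges attached on the correct sides, and the planar part of~$G$ enclosed by that circle is \emph{redrawn} in a thin neighbourhood of the circle, after which $v$ is inserted; the original drawing of the inside is discarded rather than being assumed to sit inside a disk. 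This ``plainly embedded ring plus redraw near the two-sided central circle'' mechanism is the missing idea; without it, the disk-selection step at the heart of your argument is not justified.
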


The proof is inspired from an article by Kociumaka and Pilipczuk~\cite[Section~5.3]{kp-dvgbg-19}, but we provide a slightly different argument, also because we need to handle a 2-complex, not a surface.  We use the topological concept of (free) homotopy, see, e.g., Stillwell~\cite{s-ctcgt-93}; for completeness, we recall the definition.  A \emphdef{closed curve} on a surface~$\surf$ is a continuous map from the unit circle~$S^1$ into~$\surf$.  Two closed curves $c_0$ and~$c_1$ are \emphdef{homotopic} if, intuitively, one can be deformed to the other or to its reversal; formally, this means that there is a continuous map $h:[0,1]\times S^1\to\surf$ such that for each $t\in[0,1]$, we have $h(0,t)=c_0(t)$, and either for each~$t$, $h(1,t)=c_1(t)$ or for each~$t$, $h(1,t)=c_1(1-t)$.
\begin{lemma}\label{L:freehomotopy}
  On a surface of genus~$g$, a set of disjoint, simple closed curves belong to at most $3g+2$ (free) homotopy classes (here we regard a closed curve and its reversal to be in the same homotopy class).
\end{lemma}
\begin{proof}
  Let $C$ be a set of disjoint, simple closed curves on a surface~$\surf$ of genus~$g$.  Assume that in $C$ there is no contractible curve, and there are no two distinct curves in the same homotopy class.  We prove below that $|C|\le 3g+1$, which proves the lemma.
  
  We first turn $C$, by homotopy, into a set of loops~$L$ pairwise disjoint except at a common basepoint~$b$.  To this end, we first draw a tree~$T$ on~$\surf$ that touches each closed curve in~$C$ exactly once.  (For this, one could start with a tree that is a spanning tree of the ``dual'' of the closed curves in~$C$ and visits all faces of~$C$, and then extend it if necessary to touch all curves in~$C$.)  We then contract~$T$ on~$\surf$. 

  $L$ cannot contain a monogon, because otherwise the corresponding closed curve in~$C$ would be contractible.  $L$ cannot contain a bigon whose boundary involves two different loops, because otherwise the corresponding closed curves in~$C$ would be homotopic.  If $L$ contains a bigon whose boundary involves the same loop appearing twice, then $|L|=1$, so the lemma holds.  By Lemma~\ref{L:nogon}, $|L|\le\max\{0,3g-3\}$.
\end{proof}

\begin{proof}[Proof of Proposition~\ref{P:irrelevant}]
  We need some preparations.  A \emph{circular wall}~$W$~\cite[Figure~8]{kp-dvgbg-19} of height~$h$ and circumference~$\ell$ is a 3-regular graph that, in some embedding of~$W$ in the plane, is represented as the union of $h$~vertex-disjoint cycles, called \emph{circles}, organized in a concentric way, such that any two consecutive circles are connected by $\ell$~\emph{radial edges}; in any two successive layers, the radial edges are interleaved.

  Let $H$ be the connected component of $G-\gamma$ that is planar and contains a subdivision of the $(200c\times 200c)$-wall.  By 3-connectivity, this wall has a unique combinatorial embedding in the plane, up to symmetry and up to the choice of the outer (infinite) face; we denote by $\varphi$ a plane embedding of~$H$ such that the restriction of~$\varphi$ to the wall is the natural embedding, in which the outer face of the wall has the largest degree.  In particular, $v$ is the central vertex of this wall.  
  
  Then $H$ contains a subdivision~$\dot W$ of a circular wall~$W$ of height $11c+4$ and circumference $2c+1$, so that $v$ is located inside the inner circle of this circular wall in $\varphi$; see Kociumaka and Pilipczuk~\cite[Figures 8 and~9]{kp-dvgbg-19}.

  If $G$ is embeddable on~$\complex$, then obviously $G-v$ is also embeddable on~$\complex$; the hard part is the reverse direction. So let us consider an embedding~$\psi$ of~$G-v$ on~$\complex$.  This induces embeddings of~$\dot W$ and~$W$ on~$\complex$.  At most $c$ vertices or edges of~$W$ are mapped, under~$\psi$, to a singular vertex of~$\complex$, and the isolated edges of~$\complex$ intersect the images of at most $c$~edges of~$W$.  Thus, we obtain a subdivision~$\dot W'$ of a circular wall~$W'$ of height $9c+4$ and circumference one that does not intersect any singular vertex or isolated edge of~$\complex$.

  Let $C_1,\ldots,C_{9c+4}$ be the cycles in~$G-v$ that correspond to the circles of~$W'$, such that in~$\varphi$, they are nested in this order, $C_1$ being the innermost cycle.  Let $C$ be the union of these cycles.  A \emph{bridge} (of~$C$) is either (1) a connected component of $G-C$, together with the edges joining it to~$C$, or (2) an edge of~$G-C$ whose endpoints are on~$C$.  For a bridge of~$C$, the vertices of~$C$ that are endpoints of edges in the bridge are the \emph{attachment points} of the bridge.  For each integer~$i$, let $B_i$ be the set of bridges all of whose attachment points belong to~$C_i$; in particular, $v$ belongs to a bridge in~$B_1$.  Moreover, for each integer~$i$, let $B_{i+\half}$ be the set of bridges having at least one attachment point on~$C_i$ and at least one on~$C_{i+1}$.  The existence of~$\varphi$ implies that there are no other bridges.

  We claim that, for some integer~$i_1$, the cycle~$C_{i_1}$ is two-sided and the edges of the bridges in~$B_{i_1+\half}$ leave~$C_{i_1}$, in~$\psi$, all on the same side of~$C_{i_1}$.  We prove the claim in the next few paragraphs.

  Let $\surf$ be the detached surface of~$\complex$, and let $\hat\surf$ be the surface obtained by attaching a handle to every boundary component of~$\surf$ and to a neighborhood of each point of~$\surf$ not on its boundary that corresponds, in~$\complex$, to a singular vertex.  At most $2c$ handles were attached, so $\hat\surf$ has a natural cellular graph embedding with at most $3c$~edges, and thus has genus at most~$3c$.  Moreover, $\psi$ naturally corresponds to an embedding of $\dot W'$ and~$W'$ on~$\hat\surf$, which we still denote by~$\psi$.

  By Lemma~\ref{L:freehomotopy}, since $\hat\surf$ has genus at most~$3c$, and $W'$ has height $(3*3c+2)+2$, there are two cycles $C_{i_0}$ and $C_{i_1}$, with $i_0<i_1<i_2:=(3*3c+2)+2$, that are homotopic in~$\hat\surf$ under~$\psi$.  They are two-sided, because any two one-sided homotopic closed curves cross.  To prove our claim, we distinguish two cases, implicitly using the fact that, as testified by~$\varphi$, each bridge in $B_{i_1+\half}$ is connected to~$C_{i_2}$ in $G-(C_{i_0}\cup C_{i_1})$:
  \begin{itemize}
    \item Assume that $\psi(C_{i_1})$ is contractible on~$\hat\surf$; it thus bounds a disk on~$\hat\surf$, and thus separates~$\complex$ into two pieces.  Thus, the edges of the bridges in~$B_{i_1+\half}$ leave~$C_{i_1}$, in~$\psi$, on the side of~$C_{i_1}$ corresponding to the piece that contains~$C_{i_2}$, as desired.
    
    \item Otherwise, the closed curves $\psi(C_{i_0})$ and $\psi(C_{i_1})$ are homotopic on~$\hat\surf$ but non-contractible, and thus bound an annulus in~$\hat\surf$~\cite[Lemma~2.4]{e-c2mi-66}, and thus also in~$\complex$; we denote by~$A$ the annulus on~$\complex$.  There are two subcases:
    \begin{itemize}
      \item Assume that, under~$\psi$, the annulus~$A$ does not contain~$C_{i_2}$.  Then, $B_{i_1+\half}$ cannot enter~$A$, which implies the claim;
      \item otherwise, under~$\psi$, $B_{i_1+\half}$ must lie entirely inside~$A$, which also implies the claim.
    \end{itemize}
  \end{itemize}
  So the claim is proved.
  
  To conclude the proof, we first consider the restriction~$\psi'$ of~$\psi$ to $G\setminus (C_{\le i_1-1}\cup B_{\le i_1})$.  (For each integer~$i$, $C_{\le i}$ denotes the union of the cycles $C_j$, $1\le j\le i$; similarly, for each integer or half-integer~$i$, $B_{\le i}$ denotes the union of the bridges $B_j$, $j$ being an integer or a half-integer between 1 and~$i$.)
  
  Next, we extend $\psi'$ to an embedding~$\psi''$ of~$G\setminus B_{i_1}$ into~$\complex$, by mapping $C_{\le i_1-1}\cup B_{\le i_1-\half}$ close to $C_{i_1}$, on the side of~$C_{i_1}$ that is not already used, as in~$\varphi$. 

  Finally, there remains to extend~$\psi''$ to~$B_i$.  For this, we note that, under~$\varphi$, the set of edges of~$G$ incident to~$C_{i_1}$ is divided into two parts according to the side of~$C_{i_1}$ they leave~$C_{i_1}$, and that, by construction, this partition is the same under~$\psi''$.  We can thus augment~$\psi''$ to an embedding of all of~$G$ into~$\complex$, by embedding~$B_{i_1}$ close to~$C_{i_1}$ as it is done in~$\varphi$.
\end{proof}

\subsection{Proof of Theorem~\ref{T:irrelevant}}

\begin{proof}[Proof of Theorem~\ref{T:irrelevant}]
  We first apply Proposition~\ref{P:preproc}: without loss of generality, $\complex$ has no 3-book and no connected component that is reduced to a single vertex, and $G$ has no connected component reduced to a single vertex, and at most one connected component that is a path.  Let $n$ be the number of vertices and edges of the input graph~$G$, and $c$ be the number of simplices of~$\complex$.  We apply Proposition~\ref{P:irrelevant_0} to the graph~$G$, letting $g=200c$.  In $2^{\poly(c)}\cdot n$ time, we obtain one of the following outcomes:
  \begin{enumerate}
      \item a rooted branch decomposition of~$G$ of width $g^{O(1)}$;
      \item that $G$ has genus at least $g$, and is thus not embeddable on~$\complex$ (Lemma~\ref{L:genus-oversurface});
      \item a cycle~$\gamma$ of~$G$ such that one connected component of $G-\gamma$ is planar and contains a subdivision of the $(g\times g)$-wall.
  \end{enumerate}
  In the first two cases, we are done.  In the third case, by applying Proposition~\ref{P:irrelevant}, we obtain a vertex~$v$ such that $G$ embeds on~$\complex$ if and only if $G-v$ does.  By iterating the same procedure a number of times that is at most the number of vertices of~$G$, we necessarily reach case (1) or~(2), which concludes.
\end{proof}
We remark that the proof goes through if the input 2-complex is given in the form of the topological data structure, and $c$ denotes its size, instead of the number of simplices of~$\complex$.

As mentioned above, the proof of Theorem~\ref{T:general} follows immediately from Theorems~\ref{T:bounded-bw} and~\ref{T:irrelevant}.

\section{Computing an embedding}\label{S:computing}

In this section, we enhance Theorems \ref{T:general} and~\ref{T:bounded-bw} to show that, under some mild conditions and in a sense made precise below, if the input graph~$G$ has an embedding into the input complex~$\complex$, then we can actually compute such an embedding without overhead in the asymptotic running time:

\begin{theorem}\label{T:compute}
  In Theorems \ref{T:general} and~\ref{T:bounded-bw}, if $\complex$ has no 3-book and no isolated vertex, and $G$ embeds into~$\complex$, then one can, without overhead in the asymptotic running time, compute an embedding of a graph~$H$ on~$\complex$ where:
  \begin{itemize}
      \item $H$ is obtained from~$G$ by augmenting it with at most~$2c$ vertices and at most $3c+2k$ edges, and performing at most $c$ edge subdivisions, where $k$ is the number of connected components of~$G$;
      \item the images of the vertices of~$H$ cover the singular vertices of~$\complex$;
      \item the restriction of~$H$ to the detached surface~$\surf$ is specified by (1) a cellular embedding of a graph~$H'$, represented by its (standard) combinatorial map, such that $E(H')\subseteq E(H)$, (2) a map from $V(H')$ to~$V(H)$ describing how the vertices of~$H'$ are identified when attaching the points of~$\surf$ to recover the singular vertices of~$\complex$, and (3) a map from the images of the singular vertices of~$\complex$ on~$\surf$ to $V(H')$, to specify which vertex of~$H'$ occupies that singular vertex.  Moreover, we have $|V(H')|\le |V(H)|+c$;
      \item the image of each isolated edge~$e$ of~$\complex$ is the disjoint union of the images of some vertices and edges of~$H$, and of some connected parts of~$e$ not in the image of~$H$; we can represent these objects by listing them in order along~$e$.
  \end{itemize}
\end{theorem}

We remark that it is not a severe restriction to assume that $\complex$ has no 3-book, since any graph embeds on a 3-book (Proposition~\ref{P:preproc}), and no isolated vertex, since only isolated vertices of~$G$ can be embedded on such isolated vertices.  Also, we remark that the graph~$H$ is needed because we may need to add edges to make the embedding cellular on the detached surface, and we may need to subdivide edges in order to avoid any edge to go through a singular vertex.

We say that a graph~$G'$ is an \emphdef{$i$-homeomorph} of another graph~$G$ if $G'$ is obtained from~$G$ by the following steps: (i) iteratively dissolve every degree-two vertex of~$G$, (2) subdivide each edge of the resulting graph at most $i$~times.

We will use the following intermediate lemmas.

\begin{lemma}\label{L:compute-proper-bounded-bw}
  In Theorem~\ref{T:bounded-bw}, if $\complex$ has no 3-book and no isolated vertex, $G$ has no isolatex vertex and at most one connected component that is a path, and $G$ embeds into~$\complex$, then one can compute a proper embedding of a $5c$-homeomorph of~$G$, augmented with at most $c$ isolated vertices (using the data structure of Section~\ref{S:datastruct-graphs}), without overhead in the asymptotic running time.
\end{lemma}
\begin{proof}
  We go through the proof of Theorem~\ref{T:bounded-bw} and express what can be computed for a positive instance $(G,\complex)$ of the embeddability problem.  Let us first recapitulate the algorithm, omitting the step of Proposition~\ref{P:preproc} because of the assumptions on~$\complex$ and~$G$.

  \begin{enumerate}  
  \item[1.] In the first step (Proposition~\ref{P:cell}), from $(G,\complex)$, we compute a graph~$G'$ and complexes $\complex_i$, such that any embedding of~$G'$ into~$\complex_i$ yields an embedding of~$G$ into~$\complex$.

  \item[2.] In the second step, we run the dynamic program (Proposition~\ref{P:dynprog}) on $(G',\complex_i)$, for each complex~$\complex_i$.
  \end{enumerate}

  We now consider these two steps in reverse order.
  \begin{enumerate}
  \item[2.] Since $G$ embeds on~$\complex$, for some complex~$\complex_i$, we have computed a sparse bounding graph for~$G'$ on~$\complex_i$ at the root of the rooted branch decomposition.  We can use backtracking: at each node~$\alpha$ of the rooted branch decomposition, we store the embedding of the union of~$G_\alpha$ and of its bounding graph, which is a proper embedding.  Eventually, the embedding of~$G'$ at the root of the recursion is not necessarily proper, but it becomes proper if we add isolated vertices mapped to singular vertices that are not covered by the embedding of~$G'$.  Thus, from $(G',\complex_i)$, we can return a proper embedding of the graph~$G'$, augmented with at most $c$ isolated vertices, into~$\complex_i$.

  \item[1.] An easy inspection of its proof reveals that:
  \begin{itemize}
  \item $G'$ is obtained from~$G$ by removing planar connected components, but only if $\complex$ has at least one triangle, and then by dissolving iteratively all degree-two vertices and subdividing every edge $5c$ times;
  \item $\complex_i$ is obtained from~$\complex$ by removing part of its isolated edges, splitting singular vertices, performing essential cutting operations, and removing some connected components.
  \end{itemize}
  Given a proper embedding of~$G'$, augmented with at most $c$ isolated vertices, into~$\complex_i$, we can embed the remaining planar components of~$G$ in (a small portion of a triangle) of~$\complex_i$, revert the pieces of the isolated edges of~$\complex$, merge back the singular vertices (if necessary, adding a vertex to the graph on that singular vertex to maintain the fact that we have a proper embedding), and reverse the essential cutting operations on the complex.  We obtain a proper embedding of a $5c$-homeomorph of~$G$, augmented with at most $c$ isolated vertices, into~$\complex$.
  \end{enumerate}

  Moreover, at every step, there is no overhead in the asymptotic running time.
\end{proof}

We have the same constructive result for our Theorem~\ref{T:general}:
\begin{lemma}\label{L:compute-proper-general}
  In Theorem~\ref{T:general}, if $\complex$ has no 3-book and no isolated vertex, $G$ has no isolated vertex and at most one connected component that is a path, and $G$ embeds into~$\complex$, then one can compute a proper embedding of a $5c$-homeomorph of~$G$, augmented with at most $c$ isolated vertices, into~$\complex$ (using the data structure of Section~\ref{S:datastruct-graphs}), without overhead in the asymptotic running time.
\end{lemma}
\begin{proof}
  The decision algorithm iteratively removes vertices to~$G$, getting a smaller equivalent instance.  For the algorithm that actually computes an embedding, we now need recursion.  The base case, for bounded branchwidth, is given by Lemma~\ref{L:compute-proper-bounded-bw}.

  In the general case, we have computed a cycle~$\gamma$ of~$G$ such that one connected component of~$G-\gamma$ is planar and contains a subdivision of the $(200c\times200c)$-wall; this wall, and the center~$v$ of this wall, have also been computed (see Proposition~\ref{P:irrelevant}).  We assume that we have a proper embedding~$\psi$ of a $5c$-homeomorph of~$G-v$, augmented with at most $c$ isolated vertices, into~$\complex$.  We need to recover a proper embedding of a $5c$-homeomorph of~$G$, augmented with at most $c$ isolated vertices, into~$\complex$.

  The algorithm follows the proof and notation of Proposition~\ref{P:irrelevant}.  Starting from the above wall, we compute the subdivision~$\dot W$ of the corresponding circular wall~$W$.  From the embedding~$\psi$, we obtain a subdivision~$\dot W'$ of a circular wall~$W'$ of height $9c+4$ and circumference one that does not intersect any singular vertex or isolated edge of~$\complex$.  In linear time, we compute~$\varphi$, and also the bridges of~$C$ and their partition into the sets~$B_i$.

  In the proof of Proposition~\ref{P:irrelevant}, we proved that, for some~$i_1$, the cycle~$C_{i_1}$ is two-sided and the edges of the bridges in~$B_{i_1+\half}$ leave $C_{i_1}$, in~$\psi$, all on the same side of~$C_{i_1}$.  From the $B_i$s, we can easily compute such a value of~$i_1$ in linear time.  Once this is done, as in the proof of Proposition~\ref{P:irrelevant}, from the embeddings~$\varphi$ and~$\psi$ we can compute a proper embedding of a $5c$-homeomorph of~$G$, augmented with at most $c$ isolated vertices, into~$\complex$; this takes linear time.
\end{proof}

\begin{proof}[Proof of Theorem~\ref{T:compute}]
  Let $G'$ be obtained from~$G$ by merging its isolated vertices and path components into a single path component, as in the proof of Proposition~\ref{P:preproc}.  We apply Lemma \ref{L:compute-proper-bounded-bw} or~\ref{L:compute-proper-general}, obtaining a proper embedding of a $5c$-homeomorph of~$G'$, augmented with at most $c$ isolated vertices, into~$\complex$ (using the data structure of Section~\ref{S:datastruct-graphs}).  

  We now apply the following transformations.  First, for each connected component of the detached surface~$\surf$ that is not used by the embedding, we add a vertex of the graph and map it to that connected component.  Then, on each connected component of~$\surf$, we make the embedding connected by adding at most $2c+k$ edges in total; this is because, on~$\surf$, the image of the number of connected components of~$G'$ is at most $2c+k$.  In a third step, we make the embedding on~$\surf$ cellular by adding edges; by Euler's formula, the number of edges that are needed is at most the Euler genus of~$\surf$, which is at most~$c$.

  We now have a proper embedding of a $5c$-homeomorph of~$G'$, augmented with at most~$2c$ vertices and at most $3c+k$ edges, into~$\complex$, satisfying the property that this embedding, restricted to the detached surface~$\surf$, is cellular.  If necessary, we subdivide further the edges to obtain a proper embedding of the graph~$G'$, in which each edge is subdivided at most~$5c$ times, and then augmented with at most~$2c$ vertices and at most $3c+k$ edges.  We dissolve degree-two vertices that are not part of~$G$ and not on singular vertices.  There remains, if necessary, to subdivide the edge of~$G'$ corresponding to the path component of~$G$ at most $2k$ times, so that it now contains all the path components and isolated vertices of~$G$, in order to obtain the desired embedding; this adds at most $k$ edges that are not part of~$G$.

  Eventually, we have a graph~$H$ obtained from~$G$ by augmenting it with at most~$2c$ vertices and at most $3c+2k$ edges, and performing at most $c$ edge subdivisions (the corresponding vertices being on singular vertices of~$\complex$).  Moreover, the images of the vertices of~$H$ cover the singular vertices of~$\complex$.  Also, $H$ corresponds to a cellular embedding on the detached surface~$\surf$, so it can be encoded by a combinatorial map as claimed, and similarly one can encode the embedding~$H$ restricted to the isolated edges of~$\complex$.
\end{proof}

\section{Solving other problems}\label{S:applications}

In this short section, we prove (or reprove) results that are easy consequences of our main result.  We focus on decision problems; in the case of positive instances, explicit constructions can be done using Theorem~\ref{T:compute} without overhead in the runtime.  As we describe in Section~\ref{S:conclusion}, the decision problems can also be solved in time linear in~$n$, but with an unspecified dependency in the parameter, by combining our techniques with Mohar's linear-time algorithm to embed graphs into surfaces~\cite{m-ltaeg-99} together with a recent algorithm for the irrelevant vertex method~\cite{gkst-fivlt-25}.

\subsection{Crossing number problem}\label{S:crossing-number}

Recall that the \emphdef{crossing number problem} is to decide, given a graph~$G$ with at most $n$ vertices and edges and an integer~$k$, whether there is a (topological) drawing of~$G$ in general position in the plane with at most $k$ crossings.  (General position means that the only crossings involve exactly the relative interiors of two edges.)

The problem is known to be NP-hard, and Kawarabayashi and Reed, in an extended abstract~\cite{kr-ccnlt-07}, provided a linear-time algorithm if $k$ is fixed.  Our result immediately implies a weaker, quadratic-time algorithm:
\begin{proposition}
  We can solve the crossing number problem in $O(2^{\poly(k)}n^2)$ time.
\end{proposition}
\begin{proof}
  The reduction is known~\cite[Introduction]{cmm-egtds-22} but we include it here for completeness.  Consider the topological space that is a disk minus $k$~open disks with disjoint closures.  To the boundary of each of these $k$ smaller disks, attach two edges whose endpoints are interleaved along the boundary.  This topological space is homeomorphic to a two-dimensional simplicial complex~$\complex$ with $O(k)$ simplices.

  It is clear that any graph~$G$ embedded in~$\complex$ has crossing number exactly~$k$.  Conversely, consider a graph drawn in the plane in general position with at most $k$ crossings; by removing small disks around the crossings, we see that the graph is embedded on a topological space homeomorphic to~$\complex$.  Theorem~\ref{T:general} concludes the proof.
\end{proof}
Actually, using Theorem~\ref{T:bounded-bw}, we obtain an algorithm that in $(k+w)^{O(k+w)}n$ time solves the crossing number problem when restricted to graphs of branchwidth at most~$w$, which is a new result; see Lokshtanov, Panolan, Saurabh, Sharma, Xue, and Zehavi~\cite{lpssxz-cnsst-25}.

Let us point out that we can handle more general situations.  For example, using the same technique, we immediately obtain an algorithm to decide whether an input graph has a drawing on a given surface of genus~$g$ with at most $k$ crossings, in $2^{\poly(g+k)}n^2$ time.

\subsection{Embedding extension problem}

An \emphdef{embedding extension problem} takes as input a 2-complex~$\complex$ of size~$c$, a graph~$G$ with at most $n$ vertices and edges, a subset~$U\subseteq V(G)$ of $k$~vertices of~$G$, and an embedding of~$U$ into~$\complex$.  The question is to determine whether there is an embedding of~$G$ into~$\complex$ extending the given embedding of~$U$.  We provide below an algorithm in a special case.

We remark that, in general embedding extension problems, not only a set of vertices is pre-embedded, but also possibly a set of edges.  In particular, Mohar~\cite{m-ltaeg-99} used embedding extension problems on surfaces as a subroutine to his linear time algorithm to embed graphs on a fixed surface; in the planar case, Angelini, Di Battista, Frati, Jel{\'\i}nek, Kratochv{\'\i}l, Patrignani, and Rutter~\cite{angelini2015testing} solve such embedding extension problems in linear time.  However, in our case, we can easily force the location of the edges as follows: Given a set of pre-embedded edges, this amounts to cutting the 2-complex along their images and to solving the embedding extension problem for pre-embedded vertices (the set of pre-embedded vertices includes the endpoints of the pre-embedded edges).

We provide an algorithm for the embedding extension problem on complexes without 3-book:
\begin{proposition}\label{P:EEP}
  We can solve the embedding extension problem for complexes without 3-book in $O(2^{\poly(c+k)}n^2)$ time.
\end{proposition}
\begin{proof}
  Without loss of generality, we can assume that the input complex~$\complex$ has no isolated vertex (see the proof of Proposition~\ref{P:preproc}).  Let $u_1,\ldots,u_k$ be vertices of the input graph~$G$, and $p_1,\ldots,p_k$ be distinct points of the input complex~$\complex$, such that $u_i$ must be embedded at location~$p_i$.

  We augment~$\complex$ into a larger complex~$\complex'$, and $G$ into a larger graph~$G'$, as follows.   For each $i$, let $H_i$ be a 2-connected graph of minimum degree at least three with $O(c+i)$ vertices and edges that is embeddable on the orientable surface of genus $10c+2i$ but not on any surface (orientable or not) of smaller Euler genus.  (It suffices to build graphs embedded on the surface with the prescribed genus with triangular faces, such that any non-contractible cycle has length at least four, by a result of Thomassen~\cite{t-egsnc-90}, and this is easy; see Figure~\ref{F:lew}.)  For each~$i$, let $h_i$ be an arbitrary but fixed vertex of~$H_i$.  We take a copy~$\complex_i$ of~$H_i$ and attach its copy of~$h_i$ to point~$p_i$ of~$\complex$.  Let $\complex'$ be the resulting space, which can be represented as a complex made of $O(k(c+k))$ simplices.  Similarly, for each~$i$, we take a copy~$G_i$ of~$H_i$ and attach it to~$G$, via the copy of~$h_i$, to vertex~$u_i$ of~$G$.  Let $G'$ be the resulting graph.
  
\begin{figure}
\centering
\includegraphics[width=.5\linewidth]{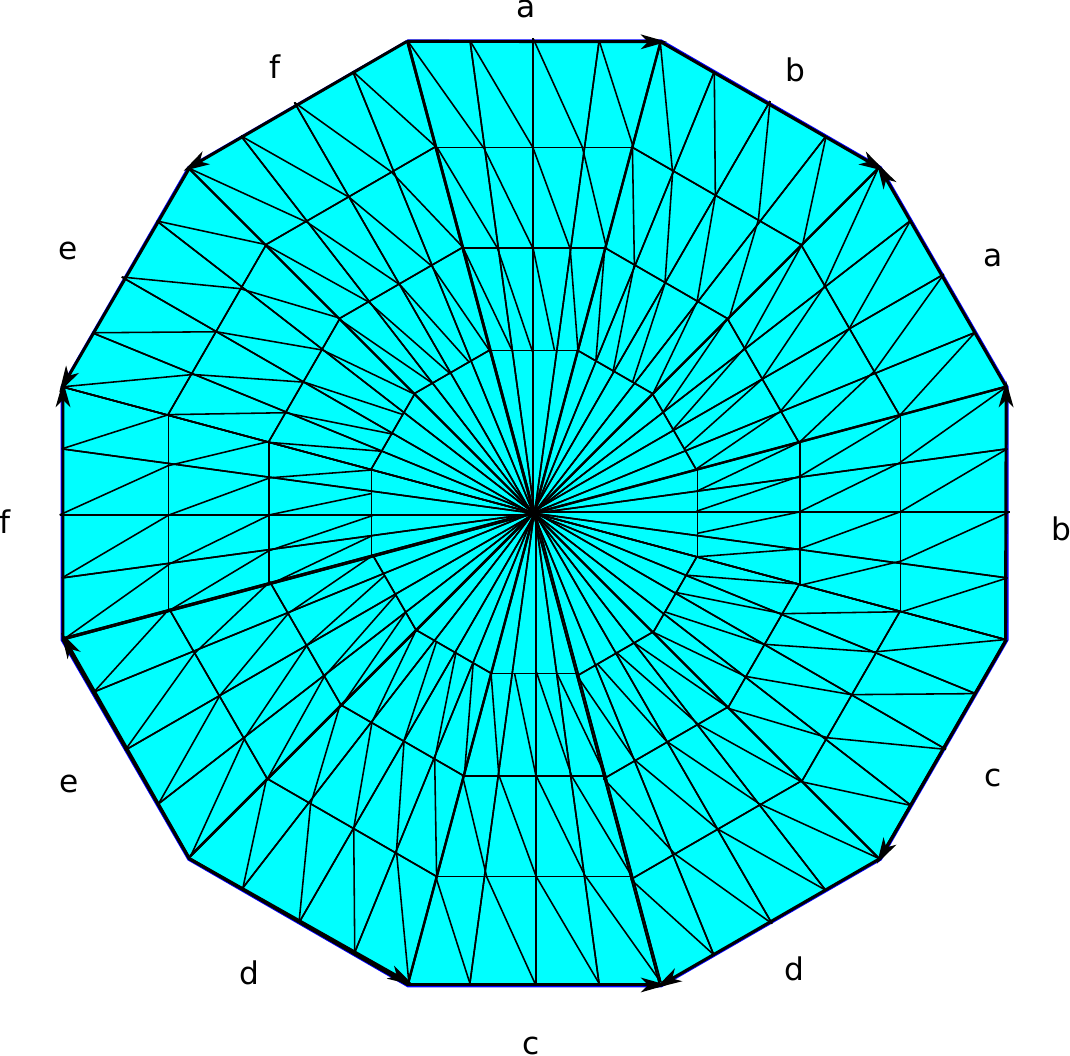}
\caption{A graph embedded on an orientable surface of Euler genus six, but not embeddable on a surface of lower genus.  Indeed, it is a triangulated graph in which all cycles of length at most three are contractible.}
\label{F:lew}
\end{figure}

  We claim that the instance of the embedding extension problem is positive if and only if $G'$ is embeddable in~$\complex'$, which concludes.  Clearly, if the instance of the embedding extension problem is positive, then $G'$ is embeddable in~$\complex'$:  Indeed, we embed $G$ into~$\complex$ as given by the solution of the embedding extension problem, and each graph~$G_i$ to~$\complex_i$.

  Conversely, assume that $G'$ is embeddable in~$\complex$.  By induction on $i=k,k-1,\ldots, 1$, since $H_i$ is 2-connected, $G_i$ must be embedded either in $\complex_i$ or in~$\complex$, and it must actually be in~$\complex_i$ by Lemma~\ref{L:genus-oversurface} and because $\complex$ has no 3-book, and finally must occupy exactly~$\complex_i$ because it has minimum degree at least three.  Necessarily, $u_i$ is mapped to~$p_i$, except perhaps in the very special case where $u_i$ has degree zero in~$G$, but in that case one can clearly modify the embedding of~$G_i$ on~$\complex_i$ to ensure that property.  Finally, taking the restriction of the embedding to~$G$ gives the desired embedding.
\end{proof}  

Handling 3-books looks substantially more complicated, and we leave it as an open problem.

\subsection{Planarity number problem}

Recall that the \emphdef{planarity number problem} is, given a graph~$G$ with at most $n$ vertices and edges, a subset~$U$ of vertices of~$G$, and an integer~$k$, to decide whether there exists a planar embedding of~$G$ in which a set of at most $k$~faces cover all the vertices in~$U$.  

Bienstock and Monma~\cite{bm-ccvfp-88} proved that this problem is strongly NP-hard, and provided a linear-time algorithm for fixed~$k$.  Our result immediately implies a weaker, quadratic-time algorithm:
\begin{proposition}
  We can solve the planarity number problem in $O(2^{\poly(k)}n^2)$ time.
\end{proposition}
\begin{proof}
  Let $T$ be the topological space obtained from the sphere by identifying $k$ points into into a single point, the \emph{apex}.  Let $G'$ be the graph~$G$ augmented with a single vertex~$b$, connected to all the vertices in~$U$.  It is straightforward to check that $(G,U)$ has planarity number at most~$k$ if and only if $G'$ has an embedding in~$T$ in which $b$ is mapped to the apex.  Proposition~\ref{P:EEP} concludes.
\end{proof}
As above, we can trivially generalize this result by replacing, in the definition of the planarity number problem, the plane with any fixed surface.

\section{Conclusion}\label{S:conclusion}

We conclude with a discussion on possible improvements on the running time of our algorithm.

First, as mentioned in the introduction, by combining our techniques with an algorithm to embed graphs on surfaces by Mohar~\cite{m-ltaeg-99} and a very recent algorithm for the irrelevant vertex method by Golovach, Kolliopoulos, Stamoulis, and Thilikos~\cite{gkst-fivlt-25}, we are able to get an algorithm whose running time has a linear dependency in~$n$ for our problem, at the cost of an increase in the parameter dependency:
\begin{theorem}\label{T:general-linear}
  One can solve the embeddability problem of graphs into 2-dimensional simplicial complexes in time $f(c)\cdot O(n)$ for some (unspecified) function~$f$, where $c$ is the number of simplices of the input 2-complex and $n$ is the total number of vertices and edges of the input graph.
\end{theorem}
In turn, this implies algorithms that are linear in~$n$ (with an unspecified dependence in the parameter) in all applications of Section~\ref{S:applications}.  It is, however, less clear whether one can obtain a similar runtime for computing actual embeddings (Theorem~\ref{T:compute}).

\begin{proof}
  We show that Theorem~\ref{T:irrelevant} holds with a runtime of $f(c)\cdot O(n)$.  The result immediately follows by an application of Theorem~\ref{T:bounded-bw}.  So we consider a 2-complex~$\complex$ with $c$ simplices, and a graph~$G$ with $n$ vertices and edges in total.  As we show next, the main result by Golovach et al.~\cite{gkst-fivlt-25} applies to our problem and implies that one can find, in time $f(c)\cdot O(n)$, a set~$I$ of vertices of~$G$ such that $G-I$ has branchwidth $O(c^{5/2})$, and such that $G$ embeds on~$\complex$ if and only if $G-I$ does.  This implies the theorem.

  The result by Golovach et al.\ requires an embedding of~$G$ on a surface of bounded genus.  For this purpose, using Mohar's algorithm~\cite{m-ltaeg-99}, we compute an embedding of~$G$ on a surface~$\bar\surf$ of genus $g\le10c$.  If such an embedding does not exist, then by Lemma~\ref{L:genus-oversurface}, we know that $G$ does not embed on~$\complex$.

  The result by Golovach et al.\ applies to problems that satisfy the so-called \emph{insulation property}.  We show that it indeed holds in our problem (with no roots).  So we assume that $G$ contains a so-called \emph{$r$-railed nest} for $r=11c+4$, namely, cycles $C_1,\ldots, C_r$ and paths $P_1,\ldots, P_r$, such that, in the embedding of~$G$ on some surface~$\bar\surf'$, the cycles $C_i$ bound disks that are nested in this order, $C_1$ being the innermost cycle, and such that for every $i,j$, the cycle~$C_i$ and the path~$P_j$ intersect in a (possibly trivial) path.  We have to prove that if $G-J$ embeds on~$\complex$, then so does $G$, where $J$ is the set of vertices inside~$C_1$ (in the embedding of~$G$ on~$\bar\surf'$).

  The proof is an easy variation on that of Proposition~\ref{P:irrelevant}.  By removing at most $2c$ cycles and paths, we obtain that there is a circular wall of height $9c+4$ and circumference one that, in~$\complex$, does not intersect any singular vertex or isolated edge.  Moreover, if $C_{9c+4}$ is the outer cycle, then the connected component of $G-C_{9c+4}$ that contains $J$ and the rest of the circular wall is planar.  The only difference in the rest of the proof is that $J$ is not necessarily a single vertex~$v$, but this does not affect the validity of the arguments.
\end{proof} 

Finally, a second direction would be to improve the $2^{\poly(c)}$ dependency while keeping the quadratic dependency in~$n$.  Our proof yields a $2^{O(c^{18+\varepsilon})}$ dependency; $18+\varepsilon$ is twice the exponent in the best known bound for the excluded grid theorem~\cite{ct-ttbeg-21}.  In particular, any improvement in that bound would improve the degree of the polynomial in~$c$ in our algorithm.  However, obtaining a much lower degree seems out of reach with our current techniques.

\subsection*{Acknowledgments}

We would like to thank Petr Hlin\v{e}n\'y, Arnaud de Mesmay, Giannos Stamoulis, and Dimitrios Thilikos for useful discussions, and the anonymous reviewers for their detailed comments.

\end{document}